\newtheorem{theorem}{Theorem}[section]
\newtheorem{corr}[theorem]{Corollary}
\newtheorem{defn}[theorem]{Definition}
\newtheorem{claim}[theorem]{Claim}
\newtheorem{lemma}[theorem]{Lemma}
\newtheorem{fact}[theorem]{Fact}
\DeclareMathOperator{\samp}{sc}
\DeclareMathOperator{\scz}{\samp_{\rho}}
\DeclareMathOperator{\err}{err}
\DeclareMathOperator{\conv}{conv}
\DeclareMathOperator{\sign}{sign}
\newcommand{\cut}[1]{}
\DeclareMathOperator*{\argmin}{arg\,min}
\newcommand{\poly}{\mathrm{poly}}
\newcommand{\alg}{\mathcal{A}}
\newcommand{\uni}{\mathcal{X}}
\newcommand{\db}{X}%{\mathcal{D}}
\newcommand{\row}{x}
\newcommand{\quer}{\mathcal{Q}}
\newcommand{\Oh}{\mathcal{O}}
\newcommand{\eps}{\varepsilon}
\renewcommand{\Pr}{\mathbb{P}}
\newcommand{\E}{\mathbb{E}}
\newcommand{\rendiv}{\mathrm{D}}
\newcommand{\R}{\mathbb{R}}
\newcommand{\cov}{\mathcal{N}}
\newcommand{\sep}{\mathcal{P}}
\newcommand{\gmw}{\mathrm{g}}
\newcommand{\inprod}[1]{\langle #1 \rangle}
\renewcommand{\d}{\,\mathrm{d}}
\begin{document}
\title{\Large{Towards Instance-Optimal Private Query Release}}
\author{Jaros\l{}aw B\l{}asiok\thanks{ John A. Paulson School of Engineering and Applied Sciences, Harvard University, 33 Oxford Street,
Cambridge, MA 02138, USA. Email: {\tt jblasiok@g.harvard.edu.} } 
\and Mark Bun \footnote{Simons Institute for the Theory of Computing, \url{mbun@berkeley.edu}. Part of this work was done while the author was at Princeton University.} \and Aleksandar Nikolov
  \footnote{University of Toronto, \url{anikolov@cs.toronto.edu}} \and
  Thomas Steinke \footnote{IBM, Almaden Research Center, \url{opt-cdp@thomas-steinke.net}}}
\date{}
\pagenumbering{gobble}
\maketitle
\begin{abstract}
  We study efficient mechanisms for the query release problem in differential privacy: given a workload of $m$ statistical queries, output approximate answers to the queries while satisfying the constraints of differential privacy. In particular, we are interested in mechanisms that optimally adapt to the given workload. Building on the projection mechanism of Nikolov, Talwar, and Zhang, and using the ideas behind Dudley's chaining inequality, we propose new efficient algorithms for the query release problem, and prove that they achieve optimal sample complexity for the given workload (up to constant factors, in certain parameter regimes) with respect to the class of mechanisms that satisfy concentrated differential privacy. We also give variants of our algorithms that satisfy local differential privacy, and prove that they also achieve optimal sample complexity among all local sequentially interactive private mechanisms.
\end{abstract}

\pagebreak
\pagenumbering{arabic}
\setcounter{page}{1}

\section{Introduction}
\label{sect:intro}

There is an inherent tension in data analysis between privacy and
statistical utility. This tension is captured by the Fundamental Law
of Information Recovery: Revealing ``overly accurate answers to too
many questions will destroy privacy''.\footnote{This formulation is
  from~\cite{DR14-monograph}, while the quantitative version is
  from~\cite{DinurN03}.} This tension, however, is not equally
pronounced for every set of queries an analyst may wish to evaluate on a sensitive dataset. As a simple 
illustration, a single query
repeated $m$ times is much easier to answer while preserving privacy
than is a collection of $m$ random queries. For this reason, one of the basic goals of
algorithmic privacy research is to design efficient private algorithms
that optimally adapt to the structure of any given collection of queries. Phrased more specifically, this goal is to design 
algorithms achieving nearly optimal  \emph{sample complexity}---the minimum dataset size required to 
privately produce answers to within a prescribed accuracy---for any given workload of queries.
%and answer them within a
%prescribed accuracy on datasets that are as small as possible.

In this work, we address this problem in the context of answering
statistical queries in both the central and local models of
differential privacy.  Building on the projection mechanism
\cite{NTZ}, and using the ideas behind Dudley's chaining inequality,
we propose new algorithms for privately answering statistical
queries. Our algorithms are efficient and achieve instance optimal sample
complexity (up to constant factors, in certain parameter regimes) with
respect to a large class of differentially private
algorithms. Specifically, for every collection of statistical queries,
our algorithms provide answers with constant average squared error
with the minimum dataset size requirement amongst all algorithms
satisfying \emph{concentrated differential privacy (CDP)}
\cite{DBLP:journals/corr/DworkR16,BunS16}. We further show that our
algorithmic techniques can be adapted to work in the \emph{local
  model} of differential privacy, where they again achieve optimal
sample complexity amongst all algorithms with constant average squared
error.

\subsection{Background}

A dataset $\db$ is a multiset of $n$ elements from a data universe
$\uni$. A statistical (also referred to in the literature as ``linear'') query is specified by a function $q:\uni
\to [0,1]$. Overloading notation, the value of the
query $q$ on a dataset $\db$ is
\[
q(\db) = \frac1n \sum_{\row \in \db}{q(x)} \in [0,1],
\]
where dataset elements appear in the sum with their multiplicity. A
query workload $\quer$ is simply a set of $m$ statistical queries.  We
use the notation $\quer(\db)=(q(\db))_{q \in \quer} \in [0,1]^m$ for the vector
of answers to the queries in $\quer$ on dataset $\db$.

In the centralized setting in which the dataset $\db$ is held by a
single trusted curator, we model privacy by (zero-)concentrated
differential privacy. This definition was introduced by Bun and
Steinke~\cite{BunS16}, and is essentially equivalent to the original
definition of (mean-)concentrated differential privacy proposed by
Dwork and Rothblum~\cite{DBLP:journals/corr/DworkR16}, and closely
related to Mironov's R\'enyi differential privacy~\cite{Mironov17}. Before we state
the definition, we recall that two datasets $\db, \db'$ of size $n$
are \emph{neighboring} if we can obtain $\db'$ from $\db$ by replacing
only one of its elements with another element of the universe $\uni$.
\begin{defn} \label{def:cdp}
  A randomized algorithm $\alg$ satisfies $\rho$-zCDP if for any two neighboring
  datasets $\db$ and $\db'$, and all $\gamma \in (1,\infty)$, 
  \[
  \rendiv_\gamma(\alg(\db) \| \alg(\db')) \le \gamma\rho,
  \]
  where $\rendiv_\gamma$ denotes the R\'enyi divergence of order $\gamma$ measured in nats.
\end{defn}
For the definition of R\'enyi divergence and further discussion of
concentrated differential privacy, we refer the reader to
Section~\ref{sec:cdp}. For now, we remark that concentrated
differential privacy is intermediate in strength between ``pure''
($\varepsilon$-) and ``approximate'' ($(\varepsilon,
\delta)$-)differential privacy, in the sense that every mechanism
satisfying $\varepsilon$-differential privacy also satisfies
$\frac{\varepsilon^2}{2}$-zCDP, and every mechanism satisfying
$\rho$-zCDP also satisfies $(\rho + 2\sqrt{\rho \log(1/\delta)},
\delta)$-differential privacy for every $\delta > 0$
(see~\cite{BunS16}).  Our privacy-preserving techniques (i.e.,
Gaussian noise addition) give privacy guarantees which are most
precisely captured by concentrated differential privacy. In general,
concentrated differential privacy captures a rich subclass (arguably,
the vast majority) of the techniques in the differential privacy
literature, including Laplace and Gaussian noise addition, the
exponential mechanism \cite{McSherryT07}, sparse
vector~\cite{DR14-monograph}, and private multiplicative weights
\cite{HardtR10}. Crucially, concentrated differential privacy admits a
simple and tight optimal composition theorem which matches the
guarantees of the so-called ``advanced composition'' theorem for
approximate differential privacy~\cite{DworkRV10}. Because of these
properties, concentrated differential privacy and its variants has
been adopted in a number of recent works on private machine learning,
for example~\cite{PrivateDL,ParkFCW16,Lee17,PATE}.\todo{more?}

%\todo[inline]{Should we move local DP discussion here and treat CDP
%  and LDP on equal footing?}

%\todo[inline]{Yes, let's do it.}

We also study the local model of differential privacy, in which the
sensitive dataset $\db$ is no longer held by a single trusted curator,
but is instead distributed between $n$ parties where each party holds
a single element $\row_i$. The parties engage in an interactive
protocol with a potentially untrusted server, whose goal is to learn
approximate answers to the queries in $\quer$. Each party is
responsible for protecting her own privacy, in the sense that the
joint view of every party except $i$ should be $(\eps,
\delta)$-differentially private with respect to the input $x_i$. (See
Section~\ref{sect:LDP} for the precise details of the definition.)
Almost all industrial deployments of differential privacy, including
those at Google \cite{RAPPOR}, Apple \cite{AppleDP}, and Microsoft \cite{MicrosoftDP}, operate in the local model, and
it has been the subject of intense study over the past few
years~\cite{BassilySmith15,BassilyNST17,DJW-ASA}. 
While the local model of privacy offers stronger guarantees to
individuals, it is more restrictive in terms of the available privacy
preserving techniques. In particular, algorithms based on the
exponential mechanism in general cannot be implemented in the local
model \cite{KLNRS}. Nevertheless, we show that our algorithms can be relatively
easily adapted to the local model with guarantees analogous to the
ones we get in the centralized model. We believe this is evidence for
the flexibility of our approach.

% it comes
%at the price of increased sample complexity.

\medskip

%Our goal is to design mechanisms that adapt to the query workload they are
%answering, so that their error depends on some measure of the ``complexity'' of the
%queries. 
In order to discuss error guarantees for private algorithms, let us
first introduce some notation. We consider two natural measures of
error: average (or root-mean squared) error, and worst-case error. For
an algorithm $\alg$ we define its error on a query workload $\quer$
and databases of size $n$ as follows.
\begin{align*}
\err^2(\quer, \alg, n) &= \max_\db \left(\E \left[\sum_{q \in
    \quer}{\frac{(\alg(\db)_q - q(X))^2}{m} }\right]\right)^{1/2}
= \max_\db \left( \E\frac{1}{|\quer|}\|\alg(\db) -
  \quer(\db)\|_2^2\right)^{1/2},\\
\err^\infty(\quer, \alg, n) &= \max_\db ~\E\left[\max_{q \in \quer}{|\alg(\db)_q - q(X)|}\right]
= \max_\db \E\|\alg(\db) -
  \quer(\db)\|_\infty,
\end{align*}
where each maximum is over all datasets $\db$ of size $n$,
$\alg(\db)_q$ is the answer to query $q$ given by the algorithm $\alg$
on input $\db$, and expectations are taken with respect to the random
choices of $\alg$. The notation is used analogously in the local
model, with an interactive protocol $\Pi$ in the place of the
algorithm $\alg$.

\subsection{Main Results}

For the rest of this paper we will work with an equivalent formulation
of the query release problem which is more natural from the
perspective of geometric techniques, and will also ease our
notation. For a given workload of $m$ queries $\quer$, we can define
the set $S_\quer \subseteq [0,1]^m$ by $S_\quer = \{\quer(\row) : \row
\in \uni\}$. We can identify each data universe element $\row$ with
the element $\quer(\row)$ of $S_\quer$, so we can think of $\db$ as
just a multiset of elements of $S_\quer$; the true query answers
$\quer(\db)$ then just become the mean of the elements in $\db$. This
motivates us to introduce the \emph{mean point problem}: given a
dataset $\db$ of $n$ elements from a (finite) universe $\uni \subseteq
[0,1]^m$, approximate the mean $\bar{\db} = \frac1n \sum_{\row \in
  \db}{\row}$, where, as usual, the dataset elements are enumerated
with repetition. We assume that the algorithm is explicitly given the
set $\uni$. By analogy with the query release problem, we can
define measures of error for any given dataset $ \db$ by
\begin{align*}
\err^2(\db, \alg) &= \left(\E \frac1m \|\alg(\db) - \bar{\db}\|_2^2\right)^{1/2},\\
\err^\infty(\db, \alg) &= \E \|\alg(\db) - \bar{\db}\|_\infty.
\end{align*}
Similarly, we can define for any finite universe $\uni$ the error measures
\begin{align*}
    \err^2(\uni, \alg, n) & = \max_{\db \in \uni^n} ~ \err^2(\db, \alg),\\
    \err^\infty(\uni, \alg, n) & = \max_{\db\in\uni^n} ~ \err^\infty(\db, \alg).
\end{align*}
%The optimal
%sample complexity bounds $\samp_\rho^2(\uni, \alpha)$ and
%$\samp_\rho^\infty(\uni, \alpha)$ are then defined analogously to the
%definition for query release. From the discussion above it is clear
%that $\samp_\rho^2(\quer, \alpha) = \samp_\rho^2(S_\quer, \alpha)$ and
%$\samp_\rho^\infty(\quer, \alpha) = \samp_\rho^\infty(S_\quer,
%\alpha)$ for any workload $\quer$.
Algorithms for the query release problem can be used for the
corresponding mean point problem, and vice versa, with the same error
and privacy guarantees. Therefore, for the rest of the paper we will
focus on the mean point problem with the understanding that analogous
results for query release follow immediately.

In this work, we give query release algorithms whose error guarantees
naturally adapt to the properties of the queries, or, equivalently, we
give algorithms for the mean point problem that adapt to the geometric
properties of the set $\uni$.  Notice that for any datasets $\db$ and
$\db'$ of size $n$ that differ in a single element, $\bar{\db} -
\bar{\db'} \in \frac1n (\uni + (-\uni))$, where $\uni + (-\uni)$ is
the set of all pairwise sums of elements in $\uni$ and $-\uni$. Since
differential privacy should hide the difference between $\db$ and
$\db'$, a private algorithm should not reveal where in the set
$\bar{\db} + \frac1n (\uni + (-\uni))$ the true mean lies. This
suggests that that the size of $\uni + (-\uni)$, and, relatedly, the
size of $\uni$ itself, should control the sample complexity of the
mean point problem. However, it is non-trivial to identify the correct
notion of ``size'', and to design algorithms whose sample complexity
adapts to this notion. In this work we adopt separation numbers, which
quantify how many well-separated points can be packed in $\uni$, as a
measure of the size of $\uni$. More precisely, for any set $S
\subseteq \R^m$ and $t > 0$, we define the \emph{separation number}
(a.k.a.~\emph{packing number}) as
\[\sep(S, t) = \sup \{|T| : T \subseteq S, \quad \forall x, y \in T, x \ne y \implies \|x - y\|_2 > t\sqrt{m} \}.\]
That is, $\sep(S, t)$ is the size of the largest set of points in
$S$ whose normalized pairwise distances are all greater than
$t$. Analogously, we define the $\ell_\infty$ separation number
as 
\[\sep_\infty(S, t) = \sup \{|T| : T \subseteq S, \quad \forall
x, y \in T, x \ne y \implies \|x - y\|_\infty > t\}.\] 
Our bounds for
average error will be expressed in terms of $\sep(\uni, t)$, and the
bounds for worst case error will be expressed in terms of
$\sep_\infty(\uni, t)$. We will give algorithms whose sample
complexity is controlled by the separation numbers, and we will also
prove nearly matching lower bounds in the regime of constant error.

\subsubsection{Average Error}

We propose two new algorithms for private query release:
the Coarse Projection Mechanism and the Chaining Mechanism.  Both
algorithms refine the Projection Mechanism of Nikolov, Talwar, and
Zhang~\cite{NTZ}. Recall that the Projection Mechanism simply adds
sufficient Gaussian noise to $\bar{\db}$ in order to guarantee
differential privacy, and then projects the noisy answer vector back
onto the convex hull of $\uni$, i.e.~outputs the closest point to the
noisy answer vector in the convex hull of $\uni$ with respect to the
$\ell_2$ norm. Miraculously, this projection step can dramatically
reduce the error of the original noisy answer vector. The resulting
error can be bounded by the \emph{Gaussian mean width} of $\uni$,
which in turn is always at most polylogarithmic in the cardinality
of $\uni$.

Our first refined algorithm, which we call the Coarse Projection
Mechanism, instead projects onto a minimal $O(\alpha)$-cover of
$\uni$, i.e.~a set $T \subseteq \uni$ such that $T +
O(\alpha\sqrt{m})\cdot B_2^m$ contains $\uni$. (Here, ``+'' denotes
the Minkowski sum, i.e.~the set of all pairwise sums, and $B_2^m$ is
the unit Euclidean ball in $\R^m$.)  Since this cover is potentially a
much smaller set than $\uni$, the projection may incur less error this
way. The size of a minimal cover is closely related to the separation
number, and the separation numbers themselves are related to the
Gaussian mean width by Dudley's chaining inequality. We use these
connections in the analysis of our algorithm. The guarantees of the
mechanism are captured by the next theorem.
%\todo[inline]
%{Maybe it is worth stating explicitly bounds for projection mechanism in terms of gaussian mean-width, and 
%saying something about Dudley's inequality? Otherwise the statement of this theorem is quite surprising --- we 
%say that we project on a single $\alpha$-net, but the bound involves $\sup{t \sqrt{\log(\sep(\uni, t))}}$.}

\begin{theorem}\label{thm:ub-coarse}
  There exists a constant $C$ and a $\rho$-zCDP algorithm $\alg_{CPM}$
  (the Coarse Projection Mechanism) for the mean point problem, which
  for any finite $\uni \subseteq [0,1]^m$, achieves average error
  $\err^2(\uni, \alg_{CPM}, n) \le \alpha$ as long as
  \begin{equation}
    \label{eq:ub-coarse}
    n = \Omega \left(\frac{\log(1/\alpha)}{\alpha^2\sqrt{\rho}} \cdot \sup \left\{ t \cdot \sqrt{\log(\sep(\uni, t))} \ : \ t 
\ge \alpha / C\right\}\right).
  \end{equation}
  Moreover, $\alg_{CPM}$ runs in time $\poly(|\uni|, n)$.
\end{theorem}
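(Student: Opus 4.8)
The plan is to analyze the Coarse Projection Mechanism in three stages: the privacy analysis via Gaussian noise, the accuracy analysis of the projection onto a cover, and finally the optimization over the scale parameter of the cover. First, I would describe the mechanism explicitly: fix a scale parameter $\beta$ (of order $\alpha$, to be tuned), let $T \subseteq \uni$ be a minimal $\beta$-cover of $\uni$, so that $\uni \subseteq T + \beta\sqrt{m}\cdot B_2^m$ and $|T| = \cov(\uni,\beta) \le \sep(\uni,\beta)$ (the standard comparison between covering and packing numbers). The mechanism computes $\avgdb = \frac1n\sum_{\row\in\db}\row$, releases $\tilde y = \avgdb + g$ where $g \sim \mathcal{N}(0,\sigma^2 I_m)$, and outputs the nearest point to $\tilde y$ in $\conv(T)$ in $\ell_2$ distance. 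For privacy: two neighboring datasets give $\ell_2$-sensitivity $\|\avgdb - \avgdb'\|_2 \le \sqrt{m}/n$ since each coordinate of each row lies in $[0,1]$; hence setting $\sigma = \sqrt{m}/(n\sqrt{2\rho})$ makes the Gaussian mechanism $\rho$-zCDP by the standard Gaussian-mechanism calculation (noting the post-processing step of projecting onto $\conv(T)$ is free).

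For accuracy, I would follow the NTZ projection-mechanism argument, adapted to projecting onto $\conv(T)$ rather than $\conv(\uni)$. Let $\hat y$ be the output and $y^* = \avgdb$. Decompose: $\avgdb$ is within $\beta\sqrt m$ in $\ell_2$ of some point of $T \subseteq \conv(T)$, so $\|\avgdb - \Pi_{\conv(T)}(\avgdb)\|_2 \le \beta\sqrt m$. The key geometric inequality for projection onto a convex body $K$ gives, with $\hat y = \Pi_K(\tilde y)$ and $p = \Pi_K(\avgdb)$,
\[
\|\hat y - p\|_2^2 \le \langle g, \hat y - p\rangle + \langle g, p - \avgdb\rangle + \|p - \avgdb\|_2\|\hat y - p\|_2,
\]
which after the usual manipulation bounds $\E\|\hat y - \avgdb\|_2^2$ by roughly $(\E\sup_{u,v\in K}\langle g, u-v\rangle) + \beta^2 m + (\text{cross terms})$, i.e.\ by the Gaussian mean width of $K = \conv(T)$ times $\sigma$, plus $O(\beta^2 m)$. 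The Gaussian mean width of $\conv(T)$ equals that of $T$, and here is where Dudley's chaining enters: $w(T) = \E\sup_{v\in T}\langle g, v\rangle \lesssim \sqrt m\int_{\beta/2}^{\diam} \sqrt{\log \sep(\uni,t)}\,\d t \lesssim \sqrt m \log(1/\beta)\cdot \sup\{t\sqrt{\log\sep(\uni,t)} : t \ge \beta/C\}$, where the last step integrates by splitting $[\beta/2,\mathrm{diam}]$ into $O(\log(1/\beta))$ dyadic blocks and bounding each by the supremum (using that $T\subseteq\uni$ so $\sep(T,t)\le\sep(\uni,t)$). Combining, $\E\frac1m\|\hat y - \avgdb\|_2^2 \lesssim \sigma w(T)/m + \beta^2 \lesssim \frac{\sqrt m}{n\sqrt\rho}\cdot\log(1/\beta)\sup\{t\sqrt{\log\sep(\uni,t)}\} + \beta^2$.

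To finish, choose $\beta = \Theta(\alpha)$ so the $\beta^2$ term is $O(\alpha^2)$ and the constraint $t \ge \beta/C$ becomes $t \ge \alpha/C$ (with the same constant $C$); then requiring the first term to be $O(\alpha^2)$ gives exactly the stated bound $n = \Omega\!\left(\frac{\log(1/\alpha)}{\alpha^2\sqrt\rho}\sup\{t\sqrt{\log\sep(\uni,t)} : t\ge\alpha/C\}\right)$, and taking square roots yields $\err^2(\uni,\alg_{CPM},n)\le\alpha$. The running time is $\poly(|\uni|,n)$ since computing a minimal cover greedily, sampling the Gaussian, and projecting onto $\conv(T)$ (a convex quadratic program over $|T|\le|\uni|$ points) are all polynomial. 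I expect the main obstacle to be the accuracy analysis: carefully controlling the cross terms in the projection inequality (the terms $\langle g, p-\avgdb\rangle$ and $\|p-\avgdb\|_2\|\hat y-p\|_2$, where $p-\avgdb$ has norm at most $\beta\sqrt m$ but is correlated with $g$ through $p$) so that they contribute only $O(\alpha^2)$ after normalization, and making the Dudley chaining step quantitatively match the claimed $\log(1/\alpha)$ factor rather than a worse polylogarithmic dependence — in particular handling the truncation of the chaining integral at scale $\beta$ rather than $0$, which is precisely what the coarse cover buys us.
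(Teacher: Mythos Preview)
Your approach is essentially correct and very close to the paper's, with one structural difference worth noting. The paper first rounds each data point $x_i$ to its nearest neighbor in the cover $T$, producing a rounded dataset $\db^{(1)}$ whose mean $\bar{\db}^{(1)}$ already lies in $\conv(T)$; it then applies the projection mechanism (Lemma~\ref{lm:proj}) to $\bar{\db}^{(1)}$ as a black box and bounds the total error via subadditivity as $\err^2(\db,\alg) \le \err^2(\db^{(1)},\alg_{PM}) + \beta$. Because the target now sits inside the body being projected onto, the standard projection analysis applies verbatim and there are no cross terms at all. Your version instead adds noise to the unrounded $\bar{\db}$ and projects onto $\conv(T)$; this also works, but forces you to carry the extra terms in the projection inequality. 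Your worry about correlation is unfounded, however: your point $p$ (the $\ell_2$ projection of $\bar{\db}$ onto $\conv(T)$) is a deterministic function of the data and hence independent of $g$, so $\E\langle g, p - \bar{\db}\rangle = 0$, and the remaining term $\|p - \bar{\db}\|_2\,\|\hat y - p\|_2 \le \beta\sqrt{m}\,\|\hat y - p\|_2$ is absorbed by AM--GM into $\beta^2 m + \tfrac12\|\hat y - p\|_2^2$.

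One small fix: to truncate the Dudley integral at scale $\beta$ you need $\cov(T,t) = |T|$ for all $t < \beta/2$, which requires $T$ to be $\beta$-separated --- a minimal cover need not be. Take $T$ to be a \emph{maximal $\beta$-separated subset} of $\uni$ (automatically a $\beta$-cover by Lemma~\ref{lm:packing-covering}), as the paper does; then the integral from $0$ to $\beta/2$ contributes $\tfrac{\beta}{2}\sqrt{\log|T|}$, which is dominated by the supremum, and the rest of your argument goes through exactly as in Lemma~\ref{lm:coarse-dudley}.
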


We note that the sample complexity of the Coarse Projection Mechanism
can be much lower than that of the Projection Mechanism. For example,
consider a set $\uni$ defined as a circular cone with apex at the
origin, and a ball of radius $\alpha \sqrt{m}$ centered at
$(1-\alpha)\sqrt{m}e_1$ as its base. (Here $e_1$ is the first standard
basis vector.) Then, a direct calculation reveals that in order to
achieve average error $\alpha$, the Projection Mechanism  requires a
dataset of size at least $\Omega(\alpha^{-1} \sqrt{m})$. By contrast,
the Coarse Projection Mechanism would project onto the line
segment from the apex to the center of the base and achieve error
$\alpha$ with a dataset of size $O(\alpha^{-1})$. While in this
example $\uni$ is not finite, it can be discretized to a finite set
without significantly changing the sample complexity. 

%The proof of Theorem~\ref{thm:ub-coarse} relies on Dudley's chaining
%inequality, which we use to relate the Gaussian mean width of minimal
%cover $T$ to the family of separation numbers of $\uni$.
Inspired by the proof of Dudley's inequality, we give an alternative
Chaining Mechanism whose error guarantees are incomparable to the
Coarse Projection Mechanism. Instead of just taking a single cover of
$\uni$, we take a sequence of progressively finer covers $T_1, \ldots,
T_k$. This allows us to write $\uni$ as the Minkowski sum $\uni^1 +
\ldots + \uni^k + O(\alpha\sqrt{m})\cdot B_2^m$, where the diameter of
$\uni^i$ decreases with $i$, while its cardinality grows. We can then
decompose the mean point problem over $\uni$ into a sequence of $k$
mean point problems, which we solve individually with the projection
mechanism. The next theorem captures the guarantees of this mechanism.

\begin{theorem}\label{thm:ub-chain}
  There exists a constant $C$ and a $\rho$-zCDP algorithm $\alg_{CM}$
  (the Chaining Mechanism) for the mean point problem, which for any
  finite $\uni \subseteq [0,1]^m$ achieves average error $\err^2(\uni,
  \alg, n) \le \alpha$ as long as
  \begin{equation}
    \label{eq:ub-chain}
    n = 
    \Omega \left(\frac{\log(1/\alpha)^{5/2}}{\alpha^2\sqrt{\rho}} \cdot \sup \left\{ t^2 \cdot \sqrt{\log(\sep(\uni, t))} 
\ : \ t \ge \alpha / C\right\}\right).
  \end{equation}
  Moreover, $\alg$ runs in time $\poly(|\uni|, n)$.
\end{theorem}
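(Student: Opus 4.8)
The plan is to decompose the mean point problem over $\uni$ into a telescoping sum of mean point problems over progressively finer pieces, solve each one with the (ordinary) Projection Mechanism of \cite{NTZ}, and then add the results together, budgeting the privacy and error across the levels. Concretely, I would fix a geometric sequence of scales $\alpha = t_k < t_{k-1} < \dots < t_1$ with $t_{i-1} = 2 t_i$ and $k = O(\log(1/\alpha))$, and for each $i$ let $T_i \subseteq \uni$ be a minimal $O(t_i\sqrt m)$-cover of $\uni$, so that $|T_i| \le \sep(\uni, \Omega(t_i))$. A standard chaining argument lets me write each $\row \in \uni$ as $\row = y_1(\row) + y_2(\row) + \dots + y_k(\row) + e(\row)$, where $y_1(\row) \in \uni^1 := T_1$, each $y_i(\row) \in \uni^i$ lies in a difference set $T_i - T_{i-1}$ of diameter $O(t_{i-1}\sqrt m)$ and cardinality at most $|T_i|\cdot|T_{i-1}| \le \sep(\uni,\Omega(t_i))^2$, and the residual $e(\row)$ has $\|e(\row)\|_2 = O(\alpha\sqrt m)$. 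Averaging over the dataset, $\avgdb = \overline{y_1} + \dots + \overline{y_k} + \bar e$, and $\bar e$ contributes only $O(\alpha)$ to the average error, so it suffices to estimate each $\overline{y_i}$, i.e.\ the mean of $n$ points drawn from the set $\uni^i$.

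For each level $i$ I would run the Projection Mechanism on the dataset $(y_i(\row))_{\row \in \db}$ with universe $\uni^i$: add Gaussian noise calibrated to $\rho_i$-zCDP and project onto $\conv(\uni^i)$. The key quantitative input is the \cite{NTZ} bound: the Projection Mechanism on a universe $V \subseteq r \cdot B_2^m$ of cardinality $N$ achieves average squared error $O\!\bigl(\tfrac{r}{n\sqrt{\rho_i}}\cdot \tfrac{1}{\sqrt m}\cdot \mathrm{GW}(V)\bigr)$ or, after bounding the Gaussian mean width of a bounded set of $N$ points by $O(r\sqrt{\log N})$, error roughly $\bigl(\tfrac{r\sqrt{\log N}}{n\sqrt{\rho_i}}\bigr)^{1/2}$ in the $\err^2$ metric (up to the normalization by $\sqrt m$). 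Plugging in $r = O(t_{i-1}\sqrt m) = O(t_i \sqrt m)$ and $\log N = O(\log \sep(\uni, \Omega(t_i)))$, level $i$ contributes average error on the order of $\bigl(\tfrac{t_i^2 \sqrt{\log \sep(\uni,\Omega(t_i))}}{n\sqrt{\rho_i}}\bigr)^{1/2}$; note the $t_i^2$ (rather than $t_i$) comes from $r$ being proportional to $t_i\sqrt m$ together with the extra $\sqrt m$-normalization, which is exactly the distinction from Theorem~\ref{thm:ub-coarse}. Summing the $k$ error contributions by the triangle inequality, setting $\rho_i = \rho/k$ so that composition of the $k$ Gaussian mechanisms yields $\rho$-zCDP, and demanding that the total be at most $\alpha$, gives the bound $n = \Omega\!\bigl(\tfrac{\sqrt k \cdot \log(1/\alpha)}{\alpha^2\sqrt\rho}\sup_{t\ge\alpha/C} t^2\sqrt{\log\sep(\uni,t)}\bigr)$; with $k = O(\log(1/\alpha))$ and absorbing the extra $\log(1/\alpha)$ factor from summing $k$ geometrically-shrinking terms, this is the claimed $\log(1/\alpha)^{5/2}/(\alpha^2\sqrt\rho)$ rate. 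Efficiency is immediate: each cover $T_i$ can be built greedily in time $\poly(|\uni|)$, and each Projection Mechanism run is a convex projection solvable in $\poly(|\uni^i|, n) \le \poly(|\uni|, n)$ time.

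The main obstacle I anticipate is making the chaining decomposition genuinely compatible with the privacy analysis. The decomposition $\row \mapsto (y_1(\row),\dots,y_k(\row))$ must be a fixed function of $\uni$ (not of the dataset), so that replacing one dataset element changes each $(y_i(\row))_{\row\in\db}$ in exactly one coordinate and the per-level Gaussian mechanisms compose cleanly; I would define $y_i(\row)$ by rounding $\row$ to its nearest point $\pi_i(\row)$ in $T_i$ (breaking ties by a fixed rule) and setting $y_i(\row) = \pi_i(\row) - \pi_{i-1}(\row)$, with $\pi_0 \equiv 0$ and $\pi_k = \mathrm{id}$ on $\uni$ after taking $T_k = \uni$ or absorbing the tail into $e$. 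The second delicate point is bounding the diameter and covering number of the difference universes $\uni^i = \{\pi_i(\row) - \pi_{i-1}(\row) : \row\in\uni\}$: the diameter bound $O(t_{i-1}\sqrt m)$ follows from the triangle inequality through $\row$, and the cardinality bound from the fact that $\uni^i$ is the image of a set of size $\le |T_i|$, but one must be careful that the Projection Mechanism is being applied with the correct (translated, origin-containing or near-origin) universe so that the radius $r$ in the \cite{NTZ} bound is the diameter and not something larger. Everything else — the geometric sum over scales, the choice of $\rho_i$, the handling of the residual — is routine.
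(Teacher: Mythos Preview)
Your proposal is correct and follows essentially the same route as the paper: a chaining decomposition of $\uni$ into $k=O(\log(1/\alpha))$ levels at geometrically shrinking scales, the Projection Mechanism (Lemma~\ref{lm:proj}) on each level with privacy budget $\rho/k$, and a triangle-inequality sum of the errors via Lemma~\ref{lm:decomposition}. One small correction to your last paragraph: with independent projections $y_i(x)=\pi_i(x)-\pi_{i-1}(x)$ the set $\uni^i$ is the image of $\uni$, not of $T_i$, so you only get $|\uni^i|\le |T_i|\cdot|T_{i-1}|\le \sep(\uni,\Omega(t_i))^2$ as you first wrote (this costs only a factor $2$ inside the logarithm, so it is harmless); the paper instead chains the projections, setting $x_j=\pi_j(x_{j+1})$ and $\uni^{(j)}=\{y-\pi_{j-1}(y):y\in S_j\}$, which makes $|\uni^{(j)}|\le |S_j|$ directly. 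The $k^{5/2}$ is exactly $k^2$ from requiring each of the $k$ levels to contribute error at most $\alpha/(2k)$, times $k^{1/2}$ from the $(2\rho/k)^{-1/4}$ in Lemma~\ref{lm:proj}.
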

%\todo[inline]{Jarek: I think in both upper bound claims, we want to look at $t \ge \alpha/4$ --- that is what 
%follows from the proofs below, if someone has a chance to double-check this calculation, that would be great.}
Notice that the sample complexity upper bound~\eqref{eq:ub-chain}
depends on a larger power of $\log(1/\alpha)$ than the
bound~\eqref{eq:ub-coarse}, but replaces $t$ in the supremum with
$t^2$. This change can only improve the latter term, as $\sep(\uni, t)
= 1$ for any $t \ge 1$.

\subsubsection{Instance-Optimality}
While our algorithms are generic, we show that for constant error, they achieve optimal sample complexity for \emph{any} given workload of queries. To be more precise about the instance-optimality of our results,
%The sample complexity of an algorithm $\alg$ with error $\alpha$ is defined as:
%\begin{align*}
%  \samp(\alg, \quer, \alpha) &= \min \{ n: \err(\quer, \alg, n) \leq
%  \alpha \},\\
 % \samp^\infty(\alg, \quer, \alpha) &= \min \{ n: \err^\infty(\quer, \alg, n)
%  \leq \alpha \}.
%\end{align*}
we define the sample complexity of answering a query workload $\quer$ with
error $\alpha$ under $\rho$-zCDP by
\begin{align*}
  \samp_\rho^2(\quer, \alpha) &= \min~ \{~n : \exists~ \rho\mbox{-zCDP } \alg ~\mbox{s.t.}~ \err^2(\quer, \alg, n) \leq \alpha ~ \},\\
  \samp_\rho^\infty(\quer, \alpha) &= \min~ \{~n :  \exists~ \rho\mbox{-zCDP } \alg ~\mbox{s.t.}~ \err^
\infty(\quer, \alg, n) \leq \alpha ~ \}.
\end{align*}
In the local model we analogously define $\samp_{\eps,
  \delta}^{2,loc}(\quer, \alpha)$ and $\samp_{\eps,
  \delta}^{\infty,loc}(\quer, \alpha)$ with the minimum taken over all
protocols satisfying $(\eps, \delta)$-local differential privacy. 

For context, let us recall the sample complexity in the centralized
model of some known algorithms, and how it compares to the best
possible sample complexity. For average error, the projection
mechanism~\cite{NTZ} can answer any workload $\quer$ with error at
most $\alpha$ under $\rho$-zCDP as long as
\[
n =\Omega\left(\frac{\sqrt{\log |\uni|}}{\alpha^2\sqrt{\rho}}\right).
\]
It is known that there exist workloads $\quer$ for which this bound on
$n$ matches $\samp_\rho(\quer, \alpha)$ up to constant factors. One
particularly natural example here is the workload of $2$-way marginals
on the universe $\uni=\{0, 1\}^d$, which consists of $m={d \choose 2}$
queries~\cite{BunUV14}. %(a more direct proof is available for zCDP).
Thus, the sample complexity of private query release with respect to worst-case
workloads of any given size is well-understood. However, we know much
less about optimal mechanisms and the behavior of $\scz(\quer, \alpha)$ for specific workloads $\quer$. This behavior can depend
strongly on the workload. For example, for the workload of threshold
queries $\quer = \{q_t\}_{t = 1}^m$ defined on a totally ordered universe $\uni
= \{1, \ldots, m\}$ by $q_t(\row) = \mathbf{1}\{\row < t\}$, we have sample
complexity only $\scz(\quer, \alpha) = \tilde{O}(\frac1\alpha)$. This
motivates the following problems:
\begin{enumerate}
\item Characterize $\scz(\quer, \alpha)$ in terms of natural
  quantities associated with $\quer$.
\item Identify efficient algorithms whose sample complexity on any
  workload $\quer$ nearly matches the optimal sample complexity
  $\scz(\quer, \alpha)$. 
\end{enumerate}
We call algorithms with the property in item 2.\ above
\emph{approximately instance-optimal}. Note that it is a priori not
clear that there should exist any \emph{efficient} instance optimal
algorithms. Here our notion of efficiency is polynomial time in $n$,
the number of queries, and the size of the universe $\uni$. This is
natural, as this is the size of the input to the algorithm, which
needs to take a description of the queries in addition to the
database. One could wish for a more efficient algorithm when the
queries are specified implicitly, for example by a circuit, but this
has been shown to be impossible in general under cryptographic
assumptions~\cite{DworkNRRV09}.

We prove lower bounds showing instance optimality for our algorithms
when the error parameter $\alpha$ is constant. Once again, we state
the lower bounds for the mean point problem, rather than the query
release problem. The equivalence of the two problems implies that we
get the same optimality results for query release as for the mean
point problem. To state the results we extend our notation above to
the mean point problem, and define 
\begin{align*}
  \samp_\rho^2(\uni, \alpha) &= \min~ \{~n : \exists~ \rho\mbox{-zCDP } \alg ~\mbox{s.t.}~ \err^2(\uni, \alg, n) \leq \alpha ~ \},\\
  \samp_\rho^\infty(\uni, \alpha) &= \min~ \{~n :  \exists~ \rho\mbox{-zCDP } \alg ~\mbox{s.t.}~ \err^\infty(\uni, \alg, n) \leq \alpha ~ \},
\end{align*}
and, analogously for $\samp_{\eps, \delta}^{2,loc}(\uni, \alpha)$ and
$\samp_{\eps, \delta}^{\infty,loc}(\uni, \alpha)$. Building on the
packing lower bounds of Bun and Steinke~\cite{BunS16}, we show that
separation numbers also provide lower bounds on $\samp^2_\rho(\uni,
\alpha)$. The following theorem is proved in the appendix.
\todo{Move it from the appendix?}

\begin{theorem}\label{thm:lb}
For any finite $\uni \subseteq [0,1]^m$ and every $\alpha, \rho > 0$, we have
\begin{align}
  \label{eq:lb}
  \samp^2_\rho(\uni, \alpha) &\ge \Omega \left(\frac{1}{\alpha \sqrt{\rho}} \cdot \sup\left\{ t \cdot 
\sqrt{\log(\sep(\uni, t))} : t \ge 4\alpha \right\}\right).%;\\
%  \label{eq:lb-infy}
 %   \samp_\rho^\infty(\quer, \alpha) &\ge \Omega \left(\frac{1}{\alpha \sqrt{\rho}} \cdot \sup\left\{ t \cdot 
%\sqrt{\log(\sep_\infty(\uni), t)} : t \ge 4\alpha \right\}\right).
\end{align}
\end{theorem}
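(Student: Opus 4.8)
The plan is a packing (fingerprinting-style) lower bound: dilute a well-separated subset of $\uni$ so that its ``data-level'' resolution is tuned to the target error $\alpha$, and then apply the group-privacy and packing machinery of Bun and Steinke~\cite{BunS16}. Fix any $t \ge 4\alpha$ and set $N = \sep(\uni, t)$; since $\uni$ is finite this supremum is attained, so there is a set $\{v_1, \dots, v_N\} \subseteq \uni$ with $\|v_i - v_j\|_2 > t\sqrt m$ for all $i \ne j$. Fix also an arbitrary base point $v_0 \in \uni$, put $\theta = 4\alpha/t \in (0,1]$ (well defined because $t \ge 4\alpha$), and let $\db_i$ be the size-$n$ dataset consisting of $\lceil\theta n\rceil$ copies of $v_i$ together with $n - \lceil\theta n\rceil$ copies of $v_0$. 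Ignoring the negligible rounding, $\bar{\db}_i = \theta v_i + (1-\theta) v_0$, so $\tfrac{1}{\sqrt m}\|\bar\db_i - \bar\db_j\|_2 = \theta\,\tfrac1{\sqrt m}\|v_i - v_j\|_2 > \theta t = 4\alpha$ for $i \ne j$, while any two of the datasets $\db_i, \db_j$ differ in only $\lceil\theta n\rceil \le \theta n + 1$ positions.

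Now let $\alg$ be any $\rho$-zCDP algorithm with $\err^2(\uni, \alg, n) \le \alpha$. For each $i$ we have $\E \tfrac1m\|\alg(\db_i) - \bar\db_i\|_2^2 \le \alpha^2$, so Markov's inequality gives $\Pr\big[\tfrac1{\sqrt m}\|\alg(\db_i) - \bar\db_i\|_2 < 2\alpha\big] \ge \tfrac34$. Since the means $\bar\db_i$ are pairwise at normalized $\ell_2$-distance strictly greater than $4\alpha$, the open balls $E_i = \{y : \tfrac1{\sqrt m}\|y - \bar\db_i\|_2 < 2\alpha\}$ are pairwise disjoint, and $\Pr[\alg(\db_i) \in E_i] \ge \tfrac34$ for every $i$. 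Thus $\alg$, composed with the map sending an output to the (at most one) index of a ball containing it, is a test that recovers $i$ from $\db_i$ with probability at least $\tfrac34$.

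Next I would invoke the group-privacy property of zCDP: since $\db_i$ and $\db_j$ differ in at most $\theta n + 1$ entries, $\rendiv_\gamma(\alg(\db_i)\,\|\,\alg(\db_j)) \le \gamma \rho'$ for every $\gamma \in (1,\infty)$, where $\rho' = O((\theta n)^2\rho)$, as group privacy for zCDP degrades quadratically in the distance~\cite{BunS16}. A standard manipulation of the definition of Rényi divergence together with Jensen's inequality (restricting to the event $E_j$) gives, for any $\gamma > 1$, $\Pr[\alg(\db_i) \in E_j] \ge (\Pr[\alg(\db_j) \in E_j])^{\gamma/(\gamma-1)} e^{-\gamma\rho'} \ge (3/4)^{\gamma/(\gamma-1)} e^{-\gamma\rho'}$. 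Fixing $i = 1$, summing over $j = 2, \dots, N$, and using disjointness of the $E_j$ yields $1 \ge (N-1)(3/4)^{\gamma/(\gamma-1)} e^{-\gamma\rho'}$; taking $\gamma = 2$ gives $\rho' = \Omega(\log N)$ (for $N \le 3$ the same argument with just two datasets gives $\rho' = \Omega(1)$, which matches since then $\sqrt{\log N} = \Theta(1)$, while $\sep(\uni, t) = 1$ contributes nothing to the supremum). Plugging in $\rho' = O((\theta n)^2\rho)$ and $\theta = 4\alpha/t$ and solving for $n$ gives $n = \Omega\big(\tfrac1\theta \sqrt{\log N/\rho}\big) = \Omega\big(\tfrac{t}{\alpha\sqrt\rho}\sqrt{\log \sep(\uni, t)}\big)$, and since $t \ge 4\alpha$ was arbitrary, taking the supremum over such $t$ proves the claimed bound on $\samp^2_\rho(\uni,\alpha)$.

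The step I expect to be the crux is the choice of the dilution parameter $\theta = \Theta(\alpha/t)$: it must be small enough that the datasets $\db_i$ remain at small Hamming distance (so the quadratic blow-up of zCDP group privacy does not overwhelm the argument), yet large enough that the means $\bar\db_i$ stay separated at scale $\gg \alpha$, so that an $\alpha$-accurate algorithm is still forced to distinguish them. This is exactly the trade-off that upgrades the bare packing bound $n = \Omega(\sqrt{\log N/\rho})$ to the sensitivity-scaled bound $n = \Omega(\tfrac{t}{\alpha}\sqrt{\log N/\rho})$. Everything else — the Markov step, the Rényi-divergence/Jensen inequality above, and the quadratic group-privacy bound for zCDP — is routine and already available from \cite{BunS16}.
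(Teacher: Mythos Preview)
Your proof is correct, and the overall strategy (packing plus dilution to bridge the scale $t$ of the packing and the target error $\alpha$) is the same as the paper's. The paper organizes the argument differently, however. It first proves a basic packing bound $\samp^2_\rho(\uni,\alpha) \ge \Omega\bigl(\sqrt{\log \sep(\uni,4\alpha)}/\sqrt{\rho}\bigr)$ by taking databases consisting of $n$ copies of a uniformly random packing point, applying the mutual-information bound $I(\alg(\db);\db) \le O(\rho n^2)$ of \cite{BunS16}, and finishing with Fano's inequality. It then upgrades this to the stated bound by invoking, as a black box, the padding inequality $\samp^2_\rho(\uni,\alpha) \ge \Omega(t/\alpha)\cdot \samp^2_\rho(\uni,t/4)$ from \cite{BunUV14}. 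Your argument folds these two steps together: the dilution parameter $\theta = 4\alpha/t$ is exactly the padding construction carried out inside the packing, and your direct R\'enyi--H\"older manipulation (combined with explicit group privacy) plays the role that Fano's inequality and the mutual-information bound play in the paper. Your route is a bit more self-contained and elementary; the paper's modular version makes it clearer that the $t/\alpha$ amplification is an instance of the general padding phenomenon rather than something specific to zCDP.
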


Comparing the lower bound \eqref{eq:lb} with our algorithmic
guarantees \eqref{eq:ub-coarse} and \eqref{eq:ub-chain}, we see that
the algorithms in Theorems~\ref{thm:ub-coarse}~and~\ref{thm:ub-chain}
can achieve error $\alpha$ on databases of size at most
$\tilde{O}(\frac{1}{\alpha}) \cdot \samp^2_\rho(\uni, \alpha/C')$,
where $\tilde{O}$ hides factors polynomial in $\log(1/\alpha)$ and
$C'$ is a constant. In other words, when the error $\alpha$ is
constant, our mechanisms have sample complexity which is
instance-optimal up to constant factors. The constant error regime is
practically the most interesting one and is widely studied in the
differential privacy literature. It captures the natural problem of
identifying the smallest database size on which the mean point problem
(resp.\ the query release problem) can be solved with non-trivial
error. In his survey~\cite{Vadhan17} Vadhan asked explicitly for a
characterization of the sample complexity of counting queries in
the constant error regime under approximate differential privacy (Open
Problem 5.25). Our results make a step towards resolving this
question by giving a characterization for the rich subclass of
algorithms satisfying concentrated differential privacy.

Beyond the constant error regime, proving instance optimality results
with tight dependence on the error parameter $\alpha$ remains a
tantalizing open problem.  We note that we are not aware of any $\uni$
for which the sample complexity of the Chaining Mechanism is
suboptimal by more than a $(\log(1/\alpha))^{O(1)}$ factor.

\subsubsection{Worst-Case Error}

Using a variant of the chaining mechanism from Theorem~\ref{thm:ub-chain},
we get a guarantee for worst-case error as well.

\begin{theorem}\label{thm:ub-infty}
  There exists constant $C$, and a $\rho$-zCDP algorithm $\alg$ that for any finite $\uni
  \subseteq [0,1]^m$ achieves $\err^\infty(\uni, \alg, n) \le \alpha$ as long as 
  \begin{equation}
  \label{eq:ub-infty}
  n = 
  \Omega \left(\frac{\log(m)\log(1/\alpha)^{5/2}}{\alpha^2\sqrt{\rho}} \cdot \sup \left\{ t^2 \cdot \sqrt{\log(\sep_
\infty(\uni, t))} \ : \ t \ge \alpha / C\right\}\right).
\end{equation}
Moreover, $\alg$ runs in time $\poly(|\uni|, n)$.
\end{theorem}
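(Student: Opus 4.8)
The plan is to obtain Theorem~\ref{thm:ub-infty} by re-running the Chaining Mechanism of Theorem~\ref{thm:ub-chain} with $\ell_\infty$-covers in place of $\ell_2$-covers throughout. Fix $\uni\subseteq[0,1]^m$ and a target error $\alpha$, set scales $s_i=2^{-i}$ for $i=0,1,\dots,k$ with $k=\Theta(\log(1/\alpha))$ so that $s_k=\Theta(\alpha)$, and for each $i$ let $T_i\subseteq\uni$ be a minimal $s_i$-cover of $\uni$ in the $\ell_\infty$ metric; by the standard comparison of covering and packing numbers, $|T_i|\le\sep_\infty(\uni,\Omega(s_i))$. Iterating nearest-point maps $\pi_i\maps\uni\to T_i$ (nearest in $\ell_\infty$), every $x\in\uni$ telescopes as $x=\pi_0(x)+\sum_{i=1}^k(\pi_i(x)-\pi_{i-1}(x))+e(x)$ with $\|e(x)\|_\infty\le s_k$, so $\uni\subseteq\uni^0+\uni^1+\dots+\uni^k+s_k\cdot[-1,1]^m$, where $\uni^i=\{\pi_i(x)-\pi_{i-1}(x):x\in\uni\}$ has $\diam_\infty(\uni^i)=O(s_{i-1})$ and $|\uni^i|\le|T_{i-1}|\,|T_i|\le\sep_\infty(\uni,\Omega(s_i))^2$. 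Writing $\db^i\in(\uni^i)^n$ for the multiset of $i$-th links of the rows of $\db$, we get $\avgdb=\sum_{i=0}^k\overline{\db^i}+\bar e$ with $\|\bar e\|_\infty\le s_k$.

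The algorithm runs $k+1$ independent instances of the Projection Mechanism, one per level, each with privacy budget $\rho/(k+1)$, to approximate each $\overline{\db^i}$, and outputs the sum of the answers; composition of zCDP yields $\rho$-zCDP, and since every $\uni^i$ has at most $\poly(|\uni|)$ points and each projection is a convex program over a polytope with $\poly(|\uni|)$ vertices, the mechanism runs in $\poly(|\uni|,n)$ time. The crux is a \emph{worst-case} error bound for a single projection step, which I would state as: the Projection Mechanism applied to a mean-point instance over a finite set $S$ with $\diam_\infty(S)\le D$ and $|S|\le N$, with privacy parameter $\rho'$ on databases of size $n$, achieves $\err^\infty(S,\cdot,n)=O\!\left(\sqrt{\log m}\cdot\frac{D\,(\log N)^{1/4}}{\sqrt n\,(\rho')^{1/4}}\right)$, i.e.\ a $\sqrt{\log m}$-factor loss over the average-error guarantee underlying Theorem~\ref{thm:ub-chain}. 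To prove this I would start, as in the analysis of Theorem~\ref{thm:ub-chain}, from the first-order optimality inequality for Euclidean projection onto $\conv(S)$: with $w$ the error vector (the output minus $\overline{\db^i}$, which already lies in $\conv(S)$) and $g$ the added Gaussian noise, $\|w\|_2^2\le\langle g,w\rangle\le\max_{d\in S-S}\langle g,d\rangle$. The average-error half follows by a localization/Dudley step exactly as before; for the worst case, rather than writing $\|w\|_\infty\le\|w\|_2$ (which loses $\sqrt m$), I would bound $\|w\|_\infty=\max_{j\in[m],\,\sigma\in\{\pm1\}}\langle\sigma e_j,w\rangle$ via an $\ell_\infty$-localized Gaussian complexity of $S-S$ — equivalently, by union-bounding Gaussian tails over the $2m$ signed coordinate directions together with the $\le N^2$ vertices of $S-S$ (each of $\ell_\infty$-norm $\le2D$) — which is exactly where $\log m$ enters, to the first power in the final sample-complexity bound because $\err^\infty$ picks up $\sqrt{\log m}$ and sample complexity scales with the inverse square of the error.

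Given the lemma, set the per-level target errors $\beta_i=\Theta(\alpha/k)$, so that $\sum_{i=0}^k\beta_i+\|\bar e\|_\infty\le\alpha$; plugging $D=O(s_{i-1})$, $\log N_i=O(\log\sep_\infty(\uni,\Omega(s_i)))$ and $\rho'=\rho/(k+1)$ into the lemma and solving for $n$ gives, for each level, $n=\Omega\!\left(\frac{\log m\cdot D^2\sqrt{\log N_i}}{\beta_i^2\sqrt{\rho'}}\right)=\Omega\!\left(\frac{k^{5/2}\log m}{\alpha^2\sqrt\rho}\cdot s_{i-1}^2\sqrt{\log\sep_\infty(\uni,\Omega(s_i))}\right)$, where the $k^{5/2}=(\log(1/\alpha))^{5/2}$ factor arises cleanly as $\beta_i^{-2}\cdot(\rho')^{-1/2}=k^2\cdot k^{1/2}$. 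Taking the worst level and using that $\sep_\infty(\uni,t)=1$ for $t\ge1$, so only scales $t\ge\alpha/C$ contribute to the supremum, yields exactly~\eqref{eq:ub-infty}.

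The main obstacle is the key lemma: obtaining an $\ell_\infty$ guarantee for the Projection Mechanism whose dependence on the ambient dimension $m$ is only polylogarithmic. The naive passage through $\|w\|_2$ is off by $\sqrt m$, so the small-$\ell_\infty$-diameter structure of the link sets $\uni^i$ must be exploited essentially, and the delicate point is to make the coordinate-wise/union-bound (or $\ell_\infty$-localized Gaussian complexity) argument interact correctly with the projection optimality inequality, which only directly controls $\|w\|_2$. A secondary, routine-but-fiddly issue is the bookkeeping of the $\ell_\infty$ covering-versus-packing comparisons, the leftover $s_k\cdot[-1,1]^m$ term, and the factors of $k=\Theta(\log(1/\alpha))$, so that the exponent on $\log(1/\alpha)$ comes out to exactly $5/2$.
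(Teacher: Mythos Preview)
Your chaining decomposition with $\ell_\infty$-covers is exactly what the paper does (it invokes Lemma~\ref{lm:chaining-decomposition} with $\|\cdot\|=\|\cdot\|_\infty$ and $\Delta=1$, then Lemma~\ref{lm:decomposition}), and your bookkeeping of the $k^{5/2}=(\log(1/\alpha))^{5/2}$ factor is correct. The divergence is in the per-level building block: the paper does \emph{not} use the Projection Mechanism on each $\uni^{(j)}$, but the Private Multiplicative Weights mechanism (Lemma~\ref{lm:mwu}), which already comes with the $\ell_\infty$ guarantee
\[
\err^\infty(\uni^{(j)},\alg_{PMW},n)\;\lesssim\;\frac{\Delta_j\,(\log|\uni^{(j)}|)^{1/4}\,(\log m)^{1/2}}{\rho^{1/4}\sqrt n},
\]
i.e.\ exactly the per-level bound you are trying to manufacture for the Projection Mechanism. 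With PMW in hand the rest of your calculation goes through verbatim.

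The gap in your plan is the ``key lemma'' itself. The projection optimality inequality gives only $\|w\|_2^2\le\langle g,w\rangle\le\max_{d\in S-S}\langle g,d\rangle$, and your sketch never explains how a bound on the \emph{scalar} $\langle g,w\rangle$ yields control of individual coordinates $\langle e_j,w\rangle$. Union-bounding Gaussian tails over the $2m$ coordinate directions bounds $\|g\|_\infty$, not $\|w\|_\infty$; the two are linked only through $w$, which you do not control except via $\|w\|_2$. Concretely, with $\diam_\infty(S)\le D$ the $\ell_2$-sensitivity can be as large as $D\sqrt m/n$, so the Gaussian noise has per-coordinate scale $\sigma\asymp D\sqrt m/(n\sqrt{\rho'})$; the projection inequality then gives at best $\|w\|_2\lesssim D\sqrt m\,(\log N)^{1/4}/(\sqrt n\,\rho'^{1/4})$, and passing to $\|w\|_\infty\le\|w\|_2$ is already off from your claimed bound by the very $\sqrt m$ you said you wanted to avoid. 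The paper in fact remarks (just after the statement of Theorem~\ref{thm:ub-infty}) that there is no analog of the Gaussian-mean-width analysis for worst-case error, which is precisely why it swaps in PMW rather than trying to push the Projection Mechanism through. So: keep your decomposition, but replace the per-level Projection Mechanism by PMW, and the proof is complete with no new lemma required.
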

This result shows the flexibility of the chaining mechanism. The
analysis of the coarse projection mechanism relied crucially on
Dudley's inequality, which is tailored to Euclidean space and the
Gaussian mean width. There are, in general, no mechanisms with
worst-case error guarantees whose sample complexity depends on the
Gaussian mean width, so it is unclear how to adapt the coarse
projection mechanism to worst-case error. Nevertheless, by
incorporating the idea of chaining used in the proof of Dudley's
inequality inside the algorithm itself, we are able to derive an
analogous result.

A lower bound analogous to Theorem~\ref{thm:lb} for worst-case error
reveals that the sample complexity of the algorithm in
Theorem~\ref{thm:ub-infty} on workload $\quer$ with error $\alpha$ is
at most $\tilde{O}(\frac{\log(m)}{\alpha}) \cdot
\samp^\infty_\rho(\quer, \alpha/C)$. I.e., we get
instance-optimality up to a $O(\log m)$ factor for constant $\alpha$.

%The mechanism is analogous to
%the one in Theorem~\ref{thm:ub-recur}, except we take $T_1, \ldots,
%T_k$ to be covers in the $\ell_\infty$ norm, and we answer the workloads
%$\quer_1, \ldots, \quer_k$ using the private multiplicative weights
%mechanism~\cite{HardtR10}.

\subsubsection{Local Differential Privacy}

Illustrating further the flexibility of our techniques, we show that
the Coarse Projection Mechanism and the Chaining Mechanism can be
adapted to the local model. The protocols we design are
non-interactive, with each party sending a single message to the
server, and satisfy pure $\eps$-local differential privacy. The
protocols are in fact very similar to our algorithms in the central model,
except that instead of Gaussian noise we use a variant of the local
mean estimation algorithm from \cite{DJW-ASA} to achieve
privacy. The other steps in the Coarse Projection and the Chaining
Mechanisms are either pre- or post-processing of the data and can be
adapted seamlessly to the local model.   

\begin{theorem}\label{thm:ldp}
  There exists a constant $C$ and a non-interactive $\eps$-LDP protocol $\Pi_{CPM}$ that for any finite $\uni
  \subseteq [0,1]^m$ achieves average error $\err^2(\uni, \Pi_{CPM}, n) \le \alpha$ as long as 
  \begin{equation}\label{eq:ldp-coarse-ub}
    n = \Omega \left(\frac{\log(1/\alpha)^2}{\alpha^4\eps^2} \cdot \sup \left\{ t^2 \cdot \log(\sep(\uni, t)) \ : \ t 
\ge \alpha / C\right\}\right).
  \end{equation}
Furthermore, there exists a non-interactive $\eps$-LDP protocol $\Pi_{CM}$ that achieves average error\\ $\err^2(\uni, \Pi_{CM}, n) \le \alpha$ as long as 
  \begin{equation}
    n = 
    \Omega \left(\frac{\log(1/\alpha)^{6}}{\alpha^4\eps^2} \cdot \sup \left\{ t^4 \cdot \log(\sep(\uni, t))
\ : \ t \ge \alpha / C\right\}\right).
  \end{equation}
Both protocols run in time $\poly(|\uni|, n)$.
\end{theorem}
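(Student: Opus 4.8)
The plan is to adapt the central-model Coarse Projection and Chaining Mechanisms to the local model by replacing the Gaussian noise-addition step with a locally private mean estimation subroutine, and then tracking how the error of that subroutine propagates through the projection/chaining analysis. First I would recall the guarantee of the $\eps$-LDP mean estimation protocol of Duchi, Jordan, and Wainwright~\cite{DJW-ASA}: given $n$ parties each holding a vector $\row_i$ in a Euclidean ball of radius $r$ in $\R^m$, there is a non-interactive $\eps$-LDP protocol outputting an estimate $\hat\mu$ of $\bar\row$ with $\E\|\hat\mu - \bar\row\|_2^2 = O(r^2 m /(n\eps^2))$ (possibly with an extra $\log m$ factor depending on the exact variant, which I would absorb into constants or state explicitly). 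The key structural observation is that both central-model mechanisms only ever touch the private data by forming $\bar\db$ (or the mean of a translated/rescaled dataset) and adding noise; every other operation — computing a cover $T$ of $\uni$, rounding each data point to its nearest cover element, decomposing into the telescoping Minkowski sum $\uni^1 + \dots + \uni^k + O(\alpha\sqrt m) B_2^m$, and projecting onto a convex hull — is either a fixed preprocessing of the public set $\uni$ or a postprocessing of the noisy estimate, hence preserves local privacy by the local analog of the postprocessing lemma, and the sequential/parallel composition of the per-coordinate-block releases is handled by each party sending a single message split across the $k$ blocks with privacy budget $\eps/k$ (or, since the blocks partition into disjoint rescaled subproblems, even without splitting, depending on how the decomposition is set up — I would check this carefully).

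Next I would carry out the error bookkeeping for $\Pi_{CPM}$. Rounding each data point to a nearest point of an $O(\alpha)$-cover $T$ of $\uni$ changes the true mean by at most $O(\alpha\sqrt m)$ in $\ell_2$, contributing $O(\alpha)$ to the normalized error. Running DJW mean estimation on the rounded dataset (which lives in $\conv(T)$, of diameter $O(t^*\sqrt m)$ where $t^*$ is the scale appearing in the supremum, or just $O(\sqrt m)$ crudely) gives an estimate with normalized squared error $O(t^{*2} m/(n\eps^2 m)) = O(t^{*2}/(n\eps^2))$ — wait, more carefully the relevant diameter is controlled by the cover at scale $t$, and after projecting the noisy estimate onto $\conv(T)$ the error is governed by how well the noise magnitude compares to the geometry of $T$; here, rather than Dudley/Gaussian-width (which has no local analog), we use the cruder bound that the projected error is at most the noise, and optimize the choice of cover scale $t$. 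Setting the cover scale and the number of samples so that each source of error is $O(\alpha)$, and using that $|T| \le \sep(\uni, \Omega(t))$ to bound the diameter contribution through $\sqrt{\log\sep}$ only when we further stratify $T$ — this is exactly where the $t^2\log\sep(\uni,t)$ term (rather than $t\sqrt{\log\sep}$) arises, reflecting the quadratically worse dependence of local privacy on dimension/error. I would then take a supremum over scales $t \ge \alpha/C$ to get~\eqref{eq:ldp-coarse-ub}, with the $\log(1/\alpha)^2$ and $\alpha^{-4}$ factors coming respectively from the number of scales in a dyadic chaining and from the fact that local error scales like $1/\eps^2$ rather than $1/\rho$ and we need error $\alpha^2$ in the squared sense while each of two stages contributes.

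For $\Pi_{CM}$ the plan is the same but with the full chaining decomposition: write $\uni \subseteq \uni^1 + \dots + \uni^k + O(\alpha\sqrt m)B_2^m$ with $\diam(\uni^i) \le t_i \sqrt m$ for a geometrically decreasing sequence $t_i$, split the database accordingly into $k$ subproblems, run a locally private mean estimate on each with budget $\eps/k$, and sum. The $i$-th subproblem contributes normalized squared error $O(k^2 t_i^2 \log|\uni^i| /(n\eps^2))$ — the extra $k^2 = O(\log(1/\alpha)^2)$ from budget splitting, times $k$ subproblems summed, plus $|\uni^i| \le \sep(\uni, \Omega(t_i))^{\text{(something)}}$ so $\log|\uni^i| \lesssim \log\sep(\uni, t_i)$ — and balancing against the target $\alpha^2$ yields $n = \Omega\big(\tfrac{\log(1/\alpha)^6}{\alpha^4\eps^2}\sup_{t\ge\alpha/C} t^4\log\sep(\uni,t)\big)$, where the $t^4$ (versus $t^2$ in~\eqref{eq:ub-chain}) and $\log\sep$ (versus $\sqrt{\log\sep}$) again track the loss from $1/\eps^2$ local accuracy. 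The main obstacle I anticipate is precisely getting the polylogarithmic and polynomial-in-$\alpha$ exponents to come out exactly as stated: one must be careful about (i) whether the DJW subroutine carries an unavoidable $\log m$ or $\sqrt{\log m}$ factor and how it interacts with the claimed $m$-independent bounds, (ii) the exact accounting of privacy-budget splitting across $k$ chaining levels, and (iii) verifying that the projection step still helps in the local setting even though we cannot invoke Dudley's inequality — the replacement argument (bounding projected error by raw noise on each small-diameter piece) is what forces $t^2$/$t^4$ and the worse $\alpha$-dependence, and making that trade-off tight is the crux of the proof.
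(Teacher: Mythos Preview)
Your high-level plan---replace Gaussian noise by a locally private point release and keep the decomposition/projection scaffolding---is exactly what the paper does. But there is a real gap in your analysis, and it is precisely the point you flag as uncertain: you assert that ``Dudley/Gaussian-width has no local analog'' and propose to fall back on the crude bound that the projected error is at most the raw noise. That replacement does not give the theorem. The raw DJW estimator for points in $\Delta\sqrt{m}\,B_2^m$ has $\E\|\hat\mu-\bar X\|_2^2=\Theta(\Delta^2 m^2/(n\eps^2))$, i.e.\ normalized squared error $\Theta(\Delta^2 m/(n\eps^2))$, which carries an unavoidable factor of $m$ rather than $\log\sep(\uni,t)$. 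Your later bookkeeping formula for the chaining level $i$, namely ``normalized squared error $O(k^2 t_i^2\log|\uni^i|/(n\eps^2))$,'' does not follow from the crude bound you describe; the appearance of $\log|\uni^i|$ is exactly the Gaussian-mean-width phenomenon you said you would not use.

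The missing idea, which the paper supplies, is that the DJW-style local release can be arranged to produce an \emph{unbiased, subgaussian} random variable $W_x$ with parameter $\sigma=O(\Delta\sqrt{m}/\eps)$ (their Lemma~5.2). Averaging $n$ independent such releases gives a $\sigma/\sqrt{n}$-subgaussian estimator of $\bar X$, and the projection argument (Lemma~\ref{lm:projection-error}) then yields $\E\|W^*-\bar X\|_2^2\lesssim \sigma\,\gmw(\uni)/\sqrt{n}$ via the subgaussian comparison $\E\sup_{y\in K}\langle W-\E W,y\rangle\lesssim \sigma\,\gmw(K)$ (a consequence of majorizing measures, but for the bounds here the elementary inequality $\gmw(\uni)\lesssim \Delta\sqrt{m}\sqrt{\log|\uni|}$ already suffices). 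This is the paper's Theorem~5.1: a \emph{Local Projection Mechanism} with error $\err^2\lesssim(\Delta^4\log|\uni|/(\eps^2 n))^{1/4}$. Once you have this local analog of Lemma~\ref{lm:proj}, the Coarse Projection and Chaining constructions go through verbatim: plug the local projection bound into Lemma~\ref{lm:coarse-dudley} (for $\Pi_{CPM}$) or into the per-level analysis with budget $\eps/k$ (for $\Pi_{CM}$). The $t^2\log\sep$ versus $t\sqrt{\log\sep}$, the $1/\alpha^4$ versus $1/\alpha^2$, and the $\log(1/\alpha)^6$ versus $\log(1/\alpha)^{5/2}$ all fall out mechanically from the fact that the local error scales as $n^{-1/4}$ rather than $n^{-1/2}$ (so sample complexity is squared) plus one extra $\log(1/\alpha)$ from simple composition of pure DP over the $k$ levels. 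There is no $\log m$ factor to worry about once the subgaussian projection analysis is in place.
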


Moreover, for constant average error $\alpha$, our algorithms achieve
instance-optimal sample complexity up to constant factors. This is
true even with respect to $(\eps, \delta)$-LDP algorithms permitting
``sequential'' interaction between parties (see Section~\ref{sect:LDP}
for details of the model). The theorem is proved in the appendix using
the framework of Bassily and Smith~\cite{BassilySmith15}

\begin{theorem}\label{thm:lb-ldp}
  For any finite $\uni \subseteq [0,1]^m$, every $\alpha > 0$, and any
  $\delta$ satisfying $0 <\delta < \frac{\alpha^2\eps^3}{C
    \log(|\uni|/\eps)}$ for a sufficiently large constant $C$, we
  have
  \begin{align}
    \label{eq:lb-ldp}
    \samp^{2,\text{loc}}_{\eps, \delta}(\uni, \alpha) &\ge 
    \Omega \left(\frac{1}{\alpha^2\eps^2} \cdot \sup\left\{ t^2 \cdot 
        \log(\sep(\uni, t)) : t \ge 6\alpha \right\}\right).
\end{align}
\end{theorem}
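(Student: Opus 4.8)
The plan is to mirror the central-model packing lower bound (Theorem~\ref{thm:lb}), replacing the zCDP group-privacy argument with the information-theoretic machinery for sequentially interactive local protocols due to Bassily and Smith~\cite{BassilySmith15} (a relative of the strong data-processing inequality of Duchi, Jordan and Wainwright~\cite{DJW-ASA}). Fix a scale $t$ with $t \ge 6\alpha$ and $N := \sep(\uni,t) \ge 2$, and let $T = \{y_1,\dots,y_N\} \subseteq \uni$ be a $t$-separated set of this size, so that $\|y_j - y_{j'}\|_2 > t\sqrt m$ whenever $j \ne j'$. Put $\beta := \min(7\alpha/t,\,1)$ (so that $1/\beta = O(t/\alpha)$ since $t \ge 6\alpha$), set $y_0 := y_1$, and for $j \in \{2,\dots,N\}$ let $P_j := (1-\beta)\delta_{y_0} + \beta\,\delta_{y_j}$, a distribution supported on $\uni$ with mean $\mu_j = (1-\beta)y_0 + \beta y_j$. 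Then $\|\mu_j - \mu_{j'}\|_2 = \beta\|y_j - y_{j'}\|_2 > \beta t\sqrt m \ge 6\alpha\sqrt m$ and $\|P_j - P_{j'}\|_{TV} = \beta$; in particular the balls $B(\mu_j,\,3\alpha\sqrt m)$ are pairwise disjoint. This is the local analogue of the hard instance behind Theorem~\ref{thm:lb}.

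Next I reduce the mean point problem to identifying $j$ from samples of $P_j$. Suppose $\Pi$ is a sequentially interactive $(\eps,\delta)$-LDP protocol with $\err^2(\uni,\Pi,n) \le \alpha$, and run it on $\db \sim P_J^{\otimes n}$ with $J$ uniform on $\{2,\dots,N\}$. By definition of the error, $\E\|\Pi(\db) - \bar\db\|_2^2 \le \alpha^2 m$, and a one-line variance bound gives $\E\|\bar\db - \mu_J\|_2^2 \le m/(4n) \le \tfrac14\alpha^2 m$ once $n \ge 1/\alpha^2$; two applications of Markov's inequality and a union bound then show that $\|\Pi(\db) - \mu_J\|_2 \le 3\alpha\sqrt m$ with probability at least $\tfrac12$, so that the decoder $\hat J := \argmin_j \|\Pi(\db) - \mu_j\|_2$ satisfies $\Pr[\hat J = J] \ge \tfrac12$ by disjointness of the balls. (Alternatively one runs this with deterministic datasets having $\approx (1-\beta)n$ copies of $y_0$ and $\approx \beta n$ copies of $y_j$, whose empirical mean equals $\mu_j$ up to negligible rounding; this also sidesteps the need for $n \ge 1/\alpha^2$, the remaining small-$n$ regime reducing to a two-hypothesis Le~Cam argument.) Since $\hat J$ is a function of the transcript $Z$ of $\Pi$, Fano's inequality now forces $I(J;Z) = \Omega(\log N)$ (for $N$ above an absolute constant; smaller $N$ gives the routine $\Omega(1/(\alpha^2\eps^2))$ bound).

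It remains to upper bound $I(J;Z)$ using privacy. The Bassily--Smith analysis of sequentially interactive LDP, which chains the single-round local-privacy constraint along the transcript, shows that for $P_Z^{(j)}$ the law of $Z$ when all $n$ users draw from $P_j$,
\[
  D\!\left(P_Z^{(j)} \,\big\|\, P_Z^{(j')}\right) \;=\; O\!\left(n\,\eps^2\,\|P_j - P_{j'}\|_{TV}^2\right) \;=\; O\!\left(n\,\eps^2\,\beta^2\right),
\]
valid for $(\eps,\delta)$-LDP so long as $\delta$ is small enough; averaging over $j,j'$ and using $I(J;Z) \le \frac1{N^2}\sum_{j,j'} D(P_Z^{(j)}\|P_Z^{(j')})$ yields $\log N = O(n\eps^2\beta^2)$, i.e.
\[
  n \;=\; \Omega\!\left(\frac{\log N}{\eps^2\beta^2}\right) \;=\; \Omega\!\left(\frac{t^2\,\log\sep(\uni,t)}{\alpha^2\,\eps^2}\right) .
\]
Taking the supremum over $t \ge 6\alpha$ gives~\eqref{eq:lb-ldp}.

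The crux --- and the step I expect to be hardest --- is making the displayed divergence bound rigorous for $\delta > 0$: each $(\eps,\delta)$-local randomizer must be approximated by a pure $\eps$-randomizer, and the resulting error accumulated over the $n$ rounds and over the (w.l.o.g.\ $\poly(|\uni|)$-sized) message spaces; requiring this accumulated error to be dominated by $\log N$ and by the $O(n\eps^2\beta^2)$ main term is precisely what produces the hypothesis $\delta < \alpha^2\eps^3/(C\log(|\uni|/\eps))$. A secondary, purely technical point is the clean handling of the small-$n$ and integer-rounding corner cases noted above.
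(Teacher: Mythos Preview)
Your approach is correct and essentially the same as the paper's: both fix a $t$-separated set, build a $\beta$-diluted family of hard distributions with $\beta \asymp \alpha/t$, apply Fano's inequality, and control the information the transcript carries about the hidden index via the Bassily--Smith machinery for sequentially interactive $(\eps,\delta)$-LDP. The only cosmetic differences are that the paper mixes with the uniform distribution on $T$ (rather than a point mass $\delta_{y_0}$) and bounds $I(V;Z)$ directly via Bassily--Smith's single-round mutual-information lemma together with privacy amplification, rather than through your pairwise KL divergences --- both routes yield the same $O(n\beta^2\eps^2)$ main term plus $\delta$-dependent corrections that the hypothesis on $\delta$ is designed to absorb.
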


It is an interesting open problem to extend these instance optimality
results to worst-case error. While the lower bound extends in a
straightforward way, our mechanisms do not, as there is no analog of
the projection mechanism for worst-case error, and also no analog of
the multiplicative weights mechanism in the local model. Moreover, it
is known that in the local model packing lower bounds like these in
Theorem~\ref{thm:lb-ldp} can be exponentially far from the true sample
complexity with respect to worst-case error. For instance,
Kasiviswanathan et al.~\cite{KLNRS} showed that learning parities over
the universe $\uni = \{0,1\}^d$ has sample complexity exponential in
$d$ in the local model, and learning parities easily reduces to
answering parity queries with small constant worst-case error. At the
same time, packing lower bounds can only show a lower bound which is
polynomial in $d$. Thus, worst case error has substantially different
behavior from average error in the local model and requires different
techniques.

\subsection{Related Work} \label{sec:pure-char}

Instance-optimal query release was previously studied in a line of work that brought
the tools of asymptotic convex geometry to differential
privacy~\cite{HardtT10,BhaskaraDKT12,NTZ,Nikolov15,KattisN17}.
However, despite significant effort, completely resolving these
questions for approximate differential privacy appears to remain out of  reach
of current techniques. 

The papers~\cite{HardtT10,BhaskaraDKT12} focus
on pure differential privacy, and their results only apply for very
small values of $\alpha$, while here we focus on the regime of
constant $\alpha$.  A characterization for pure differential privacy with constant $\alpha$ is known~\cite{RothNotes, BunS16, Vadhan17} based on the same geometric quantities considered in this work. %We 
%discuss this chracterization in more detail in Section~\ref{sec:pure-char} after defining these quantities. 
When phrased in our language, these works show that for every constant
error parameter $\alpha$, the sample complexity of the mean point
problem with pure differential privacy is characterized up to constant
factors by the logarithm of an appropriate separation number of the
set $\uni$. The sample complexity lower bound follows from a packing
argument. Meanwhile, the upper bound is obtained by using the
\emph{exponential mechanism} of McSherry and Talwar~\cite{McSherryT07}
to identify a point in a minimal cover of $\uni$ which is as close as
possible to $\bar{X}$. Unlike our algorithms, this application of the
exponential mechanism runs in time super-polynomial in $\uni$. 

While we prove instance-optimality of our algorithms using similar
lower bound techniques (i.e., the generalization of packing arguments
to CDP from~\cite{BunS16}), our new algorithms appear to be completely
different. There is no known analogue of the exponential mechanism
that is tailored to achieve optimal sample complexity for CDP, and our
algorithms are instead based on the projection mechanism.

The papers~\cite{NTZ,Nikolov15} focus on approximate differential privacy,
and give results for the entire range of $\alpha$, but their bounds
are loose by factors polynomial in $\log |\uni|$. We avoid such gaps,
since for many natural workloads, such as marginal queries, $|\uni|$ is
exponential in the other natural parameters of $\quer$. The recent
paper~\cite{KattisN17} is also very closely related to our work, but
does not prove tight upper and lower bounds on $\scz(\quer, \alpha)$ for
arbitrary $\quer$.

%Separation numbers have previously been used to characterize the sample complexity of query release with 
%\emph{pure} differential privacy. (To the best of our knowledge, this characterization first appeared in Lectures 
%5 and 6 of Roth~\cite{RothNotes}, building on~\cite{HardtT10, BlumLR13, Roth10}. It was then
%developed more formally by Bun and Steinke~\cite{BunS16} and Vadhan~\cite{Vadhan17}.) Recall that CDP is 
%a relaxation of pure differential privacy, and hence CDP algorithms can typically achieve lower sample 
%complexity with qualitatively similar privacy guarantees.

\section{Preliminaries}
\label{sect:prelim}

In this section we define basic notation, state the definitions of
concentrated differential privacy and local differential privacy, and
state the known algorithms which will serve as building blocks for our
own algorithms. We also describe the geometric tools which will be
used throughout this paper.

\subsection{Notation}

We use the notation $A \lesssim B$ to
denote the existence of an absolute constant $C$ such that $A \le CB$,
where $A$ and $B$ themselves may depend on a number of
parameters. Similarly, $A \gtrsim B$ denotes the existence of an
absolute constant $C$ such that $A \ge B/C$.

We use $\|\cdot\|_2$ and $\|\cdot\|_\infty$ for the standard $\ell_2$
and $\ell_\infty$ norms. We use $B_2^m = \{x: \|x\|_2 \le 1\}$ to
denote the unit $\ell_2$ ball in $\R^m$. For two subsets $S, T \subset
\R^m$, the notation $S + T$ denotes the Minkowski sum, i.e.~the set
$\{s + t: s \in S, t \in T\}$. 

For a real-valued random variable $Z$ we use the notation $\|Z\|_p =
\left(\E |Z|^p\right)^{1/p}$.

\subsection{Concentrated Differential Privacy} \label{sec:cdp}

Recalling Definition~\ref{def:cdp}, we say that a randomized algorithm $\alg$ satisfies $\rho$-zCDP if for any two neighboring datasets $\db, \db'$ and all $\gamma \in (1, \infty)$, we have
\[\rendiv_\gamma(\alg(\db) \| \alg(\db')) \le \gamma \rho.\]
Here, $\rendiv_\gamma(\cdot \| \cdot)$ denotes the R\'{e}nyi divergence of order $\gamma$. For probability density functions $P, Q : \Omega \to \mathbb{R}$ with $P$ absolutely continuous with respect to $Q$, this quantity is defined as
\[\rendiv_\gamma(P \| Q) = \frac{1}{1-\gamma}\log\left(\int_{\Omega}
  P(x)^\gamma Q(x)^{1-\gamma} \ dx\right).\]
For two random variables $Y,Z$, the divergence $\rendiv_\gamma(Y\|Z)$
is defined as the divergence of their probability densities.

%\todo[inline]{Sasho: Above is $\alg(X)$ a random variable or a
%  density? In general do we want to say that sometimes
%  $\rendiv_\gamma$ is used for random variables, which is understood
%  to mean the densities?}
  
% While not all differentially private algorithmic techniques
%(e.g., those using the propose-test-release paradigm~\cite{DworkL09}) are compatible with CDP,
%it encompasses a rich subclass (arguably, the vast majority) of the
%techniques in the differential privacy literature. These techniques
%include Laplace and Gaussian noise addition, the exponential mechanism
%\cite{McSherryT07}, sparse vector~\cite{DR14-monograph}, and private
%multiplicative weights \cite{HardtR10}.  For several natural sets of statistical queries, techniques compatible with CDP 
%suffice to achieve optimal error with respect to all $(\varepsilon, \delta)$-differentially private algorithms~
%\cite{BunUV14}.

One of the crucial properties of CDP is the following tight
composition theorem which matches the guarantees of the so-called
``advanced composition'' theorem for approximate differential
privacy~\cite{DworkRV10}.
%
%As shown in~\cite{BunS16}, this formulation of concentrated differential privacy immediately yields a simple and optimal composition theorem.

\begin{lemma}[\cite{BunS16}]\label{lm:comp}
  Assume that the algorithm $\alg_1(\cdot)$ satisfies $\rho_1$-zCDP,
  and, for every $y$ in the range of $\alg_1$, the algorithm
  $\alg_2(\cdot, y)$ satisfies $\rho_2$-zCDP. Then the algorithm
  $\alg$ defined by $\alg(\db) = \alg_2(\db, \alg_1(\db))$ satisfies
  $(\rho_1 + \rho_2)$-zCDP. 
\end{lemma}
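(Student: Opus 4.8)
The final statement to prove is the composition lemma for zCDP (Lemma~\ref{lm:comp}), attributed to~\cite{BunS16}. Here is how I would approach it.

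\textbf{Overall strategy.} The plan is to reduce the adaptive composition statement to a single R\'enyi divergence bound on the joint output distribution, and then to invoke the adaptive composition (chain rule) property of R\'enyi divergence. Fix neighboring datasets $\db, \db'$ and an order $\gamma \in (1,\infty)$. Let $Y = \alg_1(\db)$ and $Y' = \alg_1(\db')$, and for each value $y$ let $Z_y = \alg_2(\db, y)$ and $Z'_y = \alg_2(\db', y)$. The output of the composed mechanism $\alg$ on $\db$ is the pair $(Y, Z_Y)$, i.e.\ the joint distribution whose density factors as $P(y) \cdot Z_y(z)$ where $P$ is the density of $Y$; similarly on $\db'$ it is $Q(y) \cdot Z'_y(z)$ where $Q$ is the density of $Y'$. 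So the goal is to show
\[
  \rendiv_\gamma\big(P(y)Z_y(z) \,\big\|\, Q(y)Z'_y(z)\big) \le \gamma(\rho_1 + \rho_2).
\]

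\textbf{Key steps.} First I would record the adaptive-composition property of R\'enyi divergence: for joint densities of the above product form,
\[
  \rendiv_\gamma\big(P(y)Z_y(z)\,\big\|\,Q(y)Z'_y(z)\big) \le \rendiv_\gamma(P\|Q) + \sup_y \rendiv_\gamma(Z_y\|Z'_y).
\]
This follows by writing out the definition of $\rendiv_\gamma$ in terms of the integral $\int\int (PZ_y)^\gamma (QZ'_y)^{1-\gamma}$, factoring the integrand as $P(y)^\gamma Q(y)^{1-\gamma} \cdot \big(\int Z_y(z)^\gamma Z'_y(z)^{1-\gamma}\,dz\big)$, bounding the inner integral by $\exp((\gamma-1)\sup_y \rendiv_\gamma(Z_y\|Z'_y))$ (valid since $\gamma - 1 > 0$), pulling that constant out, and taking $\frac{1}{1-\gamma}\log$ of both sides (which reverses the inequality direction appropriately because $1-\gamma<0$). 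Second, I would apply the hypotheses: $\rendiv_\gamma(P\|Q) \le \gamma\rho_1$ since $\alg_1$ is $\rho_1$-zCDP, and $\rendiv_\gamma(Z_y\|Z'_y) \le \gamma\rho_2$ for every $y$ in the range of $\alg_1$ since $\alg_2(\cdot,y)$ is $\rho_2$-zCDP. Adding these gives the bound $\gamma(\rho_1+\rho_2)$, and since $\gamma \in (1,\infty)$ was arbitrary, $\alg$ satisfies $(\rho_1+\rho_2)$-zCDP.

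\textbf{Main obstacle.} The only real subtlety is the adaptive composition inequality for R\'enyi divergence in the first step — in particular being careful with the sign of $1-\gamma$ when passing the inner-integral bound through $\frac{1}{1-\gamma}\log(\cdot)$, and handling measure-theoretic points (absolute continuity of the joint distributions, measurability of $y \mapsto \rendiv_\gamma(Z_y\|Z'_y)$, and the case where the supremum is infinite, in which the inequality is trivial). These are routine but worth stating; alternatively, one can simply cite the known chain rule for R\'enyi divergence and restrict attention to discrete or suitably regular output spaces, as in~\cite{BunS16}. Everything else is a direct substitution of the zCDP hypotheses.
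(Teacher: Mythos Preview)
The paper does not give its own proof of this lemma; it is stated with a citation to~\cite{BunS16} and used as a black box throughout. So there is nothing in the paper to compare your argument against.

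That said, your sketch is the standard proof and is essentially correct: the chain-rule inequality for R\'enyi divergence followed by direct substitution of the two zCDP hypotheses. One small point worth tightening: as stated, $\alg(\db) = \alg_2(\db, \alg_1(\db))$ outputs only the second component, not the pair $(Y, Z_Y)$. Your computation actually bounds the R\'enyi divergence of the \emph{joint} distribution $(Y, Z_Y)$ versus $(Y', Z'_{Y'})$, which is a stronger statement; the claimed bound on $\alg$ then follows by the data-processing inequality for R\'enyi divergence (i.e.\ post-processing), which you should mention explicitly. Also note that the paper's displayed definition of $\rendiv_\gamma$ has the prefactor $\tfrac{1}{1-\gamma}$, which for $\gamma>1$ would give the wrong sign; your argument implicitly (and correctly) uses the standard $\tfrac{1}{\gamma-1}$ normalization, so the sign-tracking you flag as the ``main obstacle'' is indeed the only place to be careful.
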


We remark that as a special case of Lemma~\ref{lm:comp}, one can take $\alg_2$ to be a $0$-zCDP algorithm which does not directly access the sensitive dataset $X$ at all. In this case, the combined algorithm $\alg$ satisfies $\rho_1$-zCDP, showing that zCDP algorithms can be \emph{postprocessed} without affecting their privacy guarantees.

Our algorithms are designed by carefully applying two basic building
blocks: the Projection and the Private Multiplicative Weights
mechanisms. Below we state their guarantees for the mean point
problem. 

In order to state the error guarantees for the projection mechanism,
we need a couple of definitions. First, let us define the \emph{support
function} of a set $S \subseteq \R^m$ on any $x \in \R^m$ by $h_S(x) =
\sup_{y \in S} \langle y, x\rangle$, where $\langle \cdot, \cdot
\rangle$ is the standard inner product. If $Z \sim N(0, I)$ is a
standard Gaussian random variable in $\R^m$, then we define the
\emph{Gaussian mean width} of a set $S \subseteq \R^m$ by $\gmw(S) = \E
h_S(Z)$. 

\begin{lemma}[\cite{NTZ}]\label{lm:proj}
  Let $m \in \mathbb{N}$ and let $\Delta > 0$. There exists a mechanism $\alg_{PM}$ (The Projection Mechanism) such
  that, for every finite set $\uni \subseteq \Delta\sqrt{m}\cdot B_2^m$, 
  \[
  \err^2(\uni, \alg, n) \le 
  \left(\frac{\Delta \gmw(\uni)}{n\sqrt{2\rho m}}\right)^{1/2}
  \le 
  \frac{\Delta (\log |\uni|)^{1/4}}{(2\rho)^{1/4}\sqrt{n}}.
  \]
  Moreover, $\alg_{PM}$ runs in time $\poly(|\uni|, n)$.
\end{lemma}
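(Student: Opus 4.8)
The plan is to recall the Projection Mechanism of~\cite{NTZ} and re-derive its guarantee. The mechanism does the following on input $\db$: (i) compute the empirical mean $\bar{\db} = \frac1n\sum_{\row\in\db}\row$; (ii) release $\tilde y = \bar{\db} + w$, where $w \sim N(0,\sigma^2 I_m)$ and $\sigma = \Theta(\Delta\sqrt m/(n\sqrt\rho))$; (iii) output $\hat y = \argmin_{z\in K}\|\tilde y - z\|_2$, the Euclidean projection of $\tilde y$ onto the nonempty closed convex set $K = \conv(\uni)$ (so $\hat y$ is well defined). For privacy: any two neighboring datasets $\db,\db'$ have $\|\bar{\db} - \bar{\db'}\|_2 \le 2\Delta\sqrt m/n$, since $\bar{\db} - \bar{\db'}$ is $\frac1n$ times a difference of two points of $\uni\subseteq\Delta\sqrt m\cdot B_2^m$. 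Hence, with $\sigma$ as above, step~(ii) is $\rho$-zCDP by the standard analysis of the Gaussian mechanism (adding $N(0,\sigma^2 I_m)$ to a function of $\ell_2$-sensitivity $\Delta_2$ is $\tfrac{\Delta_2^2}{2\sigma^2}$-zCDP), and step~(iii) only processes $\tilde y$ together with the publicly known set $\uni$, so it is postprocessing and preserves privacy by Lemma~\ref{lm:comp}.

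For accuracy, the key point---the ``miraculous'' shrinkage noted in the introduction---is that projecting onto $K$ reduces the error from $\|w\|_2\approx\sigma\sqrt m$ down to a quantity governed by $\gmw(\uni)$. Write $r = \|\hat y - \bar{\db}\|_2$ and note $\bar{\db}\in K$. First-order optimality of the Euclidean projection onto the convex set $K$ gives $\langle\tilde y - \hat y,\, z - \hat y\rangle\le 0$ for every $z\in K$; taking $z = \bar{\db}$, substituting $\tilde y = \bar{\db} + w$, and rearranging gives the pointwise (in $w$) inequality
\[
r^2 = \|\hat y - \bar{\db}\|_2^2 \le \langle w,\,\hat y - \bar{\db}\rangle \le h_K(w) - \langle w,\bar{\db}\rangle .
\]
Taking expectations over $w$ (the dataset $\db$, hence $\bar{\db}$, being fixed): $w$ is centered so $\E\langle w,\bar{\db}\rangle = 0$, and $\E h_K(w) = \sigma\,\E h_K(Z) = \sigma\,\E h_\uni(Z) = \sigma\,\gmw(\uni)$ for $Z\sim N(0,I_m)$, using positive homogeneity of the support function and $h_{\conv(\uni)} = h_\uni$. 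Therefore $\E r^2 \le \sigma\,\gmw(\uni)$, a bound that does not depend on $\db$, so $\err^2(\uni,\alg,n)^2 = \max_\db\frac1m\E r^2 \le \sigma\,\gmw(\uni)/m$; plugging in $\sigma = \Theta(\Delta\sqrt m/(n\sqrt\rho))$ yields the first claimed inequality.

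The second inequality is a standard Gaussian maximal estimate: for finite $\uni\subseteq\Delta\sqrt m\cdot B_2^m$, each $\langle y,Z\rangle$ with $y\in\uni$ is a centered Gaussian of variance $\|y\|_2^2\le\Delta^2 m$, so a moment-generating-function/union-bound argument gives $\gmw(\uni) = \E\max_{y\in\uni}\langle y,Z\rangle \lesssim \Delta\sqrt{m\log|\uni|}$; substituting into the first bound gives $\err^2(\uni,\alg,n)\lesssim \Delta(\log|\uni|)^{1/4}/((2\rho)^{1/4}\sqrt n)$. For the running time, computing $\bar{\db}$ and sampling $w$ are immediate, and step~(iii) is a convex quadratic program over the polytope $K$, which has at most $|\uni|$ vertices in $\R^m$ and can be solved to sufficient accuracy in time $\poly(|\uni|,m,n)$ by standard convex optimization (e.g.\ Frank--Wolfe or an interior-point method). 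The only delicate point is the accuracy analysis, and its entire content is the elementary projection inequality displayed above; once that is in hand everything else is bookkeeping, so I expect no genuine obstacle---this is essentially the original argument of~\cite{NTZ}, and any mild looseness in the absolute constants is harmless since the lemma only asserts an upper bound.
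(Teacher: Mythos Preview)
Your proposal is correct and matches the standard argument. The paper does not give its own proof of Lemma~\ref{lm:proj} (it is cited from~\cite{NTZ}), but the heart of your accuracy analysis---the pointwise inequality $\|\hat y-\bar\db\|_2^2\le\langle w,\hat y-\bar\db\rangle$ obtained from optimality of the projection onto the convex set $K$---is exactly Lemma~\ref{lm:projection-error}, which the paper states and proves in the appendix (phrased there via the equivalent ``segment'' argument rather than the first-order KKT condition you use). The remaining steps (Gaussian-mechanism privacy, $\gmw(\conv\uni)=\gmw(\uni)$, and the maximal inequality $\gmw(\uni)\lesssim\Delta\sqrt{m\log|\uni|}$) are routine, and your acknowledgment that the displayed constants may be off by a fixed factor (coming from the $\ell_2$-sensitivity being $2\Delta\sqrt m/n$ rather than $\Delta\sqrt m/n$) is appropriate and harmless for how the lemma is used downstream.
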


\todo[inline]{One reviewer complained that we did not define the
  mechanism. We do describe it on a high level in the intro and I do
  not think we need to do it again here. Or?}

%\todo[inline]{Mark: Why do we cite HLM for this?}
\begin{lemma}[\cite{HardtR10}]\label{lm:mwu}
  There exists a mechanism $\alg_{PMW}$ (The Private Multiplicative Weights Mechanism) such that, for any finite $\uni \subseteq
  \Delta\cdot [-1,1]^m$,
  \[
  \err^\infty(\uni, \alg, n) =
  O\left(\frac{\Delta (\log |\uni|)^{1/4}(\log m)^{1/2}}{\rho^{1/4}\sqrt{n}}\right).
  \]
  Moreover, $\alg_{PM}$ runs in time $\poly(|\uni|, n)$.
\end{lemma}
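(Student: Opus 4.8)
The plan is to instantiate the private multiplicative weights framework of Hardt and Rothblum for the mean point problem. First I would reduce to answering the $m$ coordinate functionals as $[0,1]$-valued statistical queries: after rescaling, set $q_j(\row) = \tfrac12 + \tfrac{\row_j}{2\Delta}$ for $\row \in \uni$ (valid since $\uni \subseteq \Delta\cdot[-1,1]^m$), so the goal becomes to privately produce $a \in [0,1]^m$ with $\max_j |a_j - q_j(\avgdb)|$ small; undoing the rescaling multiplies the $\ell_\infty$ error by $2\Delta$, which is exactly the $\Delta$ factor in the claimed bound. The mechanism maintains a probability distribution $h_t$ over the finite set $\uni$, initialized to the uniform distribution $h_0$, and (upon termination) outputs the mean $\sum_{\row\in\uni} h_t(\row)\,\row$, equivalently the answer vector $\big(\E_{\row\sim h_t} q_j(\row)\big)_j$. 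Each round costs $\poly(|\uni|,m)$ arithmetic to recompute these means and renormalize.

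Next I would run the standard online loop driven by the sparse vector (AboveThreshold) primitive. Let $\mu$ be the empirical distribution on $\uni$ induced by the multiset $\db$, so $\E_{\row\sim\mu}\row = \avgdb$, and take as potential the ordinary Kullback--Leibler divergence $\Phi_t = \rendiv(\mu\|h_t)$, which satisfies $0\le\Phi_t\le\log|\uni|$ because $h_0$ is uniform. In round $t$, I would use AboveThreshold with threshold $T$ and fresh Gaussian noise to test whether some coordinate $j$ has $|q_j(\avgdb) - \E_{\row\sim h_t} q_j(\row)| > T$; if such a $j$ and the sign $s$ of the discrepancy are returned, perform the multiplicative weights update $h_{t+1}(\row)\propto h_t(\row)\exp\!\big(s\eta\,q_j(\row)\big)$ with learning rate $\eta\asymp T$, otherwise halt and output the current mean. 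The one-line Gibbs computation, using $e^u\le 1+u+u^2$ and $\log(1+u)\le u$ together with $q_j\in[0,1]$, gives $\Phi_t-\Phi_{t+1}\ge \eta\,|q_j(\avgdb)-\E_{h_t}q_j| - \eta^2 \gtrsim T^2$ on a successful update, so the loop performs at most $k = O(\log|\uni|/T^2)$ updates; and at termination every coordinate is within $T$ of the output up to AboveThreshold's error, so $\err^\infty(\uni,\alg,n) \lesssim T + (\text{AboveThreshold error})$.

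For privacy I would observe that each $q_j(\avgdb)$, and hence each comparison inside AboveThreshold, has sensitivity $1/n$ under a single-row change. Splitting the budget into $k$ equal pieces and composing via Lemma~\ref{lm:comp}, each of the $k$ ``positive'' rounds is run $(\rho/k)$-zCDP, which with the Gaussian mechanism means comparison noise of standard deviation $\sigma\asymp\tfrac1n\sqrt{k/\rho}$; a union bound over the $m$ coordinates and $k$ rounds makes all of these noises $O(\sigma\sqrt{\log(mk)})$ with high probability, and on the low-probability failure event the trivial error $O(\Delta)$ is absorbed into the bound. Since the released information (the index $j$, the sign $s$, and the threshold flag, $k$ times) is a post-processing of the noisy comparisons, the whole mechanism is $\rho$-zCDP. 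Finally I would balance the two constraints $k\lesssim \log|\uni|/T^2$ and $T\gtrsim \sigma\sqrt{\log(mk)} = \tfrac1n\sqrt{k\log(mk)/\rho}$; eliminating $k$ yields $T$ no larger than $\tfrac{(\log|\uni|)^{1/4}(\log m)^{1/2}}{\rho^{1/4}\sqrt n}$ after absorbing lower-order $\log k$ terms, and rescaling by $2\Delta$ gives the stated error. Since $k\le\poly(|\uni|,n)$, the total running time is $\poly(|\uni|,n)$.

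The hard part will be the interleaving of the adaptive sparse-vector selection with the data-dependent number of multiplicative weights updates while maintaining a clean $\rho$-zCDP accounting: AboveThreshold is classically analyzed under pure differential privacy, so one must recast it with Gaussian noise, verify that restarting the primitive after each of the $k$ positive answers composes correctly under Lemma~\ref{lm:comp}, and carry the union bound over all $m$ coordinates through all $k$ rounds. The remaining pieces --- the potential-function bound on the number of updates and the final optimization over $T$ --- are routine.
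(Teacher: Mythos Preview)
The paper does not prove this lemma; it is quoted with a citation to \cite{HardtR10} and used as a black box (in the proof of Theorem~\ref{thm:ub-infty}). There is therefore no in-paper argument to compare your proposal against.

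Your sketch is nonetheless a correct reconstruction of the private multiplicative weights analysis, adapted to $\rho$-zCDP: the affine rescaling to $[0,1]$-valued queries, the KL potential bound $k=O(\log|\uni|/T^2)$ on the number of updates, the per-round selection via AboveThreshold, and the final balancing are all standard. One remark on the arithmetic: if your Gaussian sparse-vector accounting goes through with per-round error $O(\sigma\sqrt{\log(mk)})$ and $\sigma\asymp\tfrac1n\sqrt{k/\rho}$, then eliminating $k$ actually gives $T\asymp(\log|\uni|)^{1/4}(\log m)^{1/4}/(\rho^{1/4}\sqrt n)$, which is \emph{stronger} than the stated bound. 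The $(\log m)^{1/2}$ in the lemma is what one obtains by the simpler route of running each round's selection under pure $\eps'$-DP with the Laplace-noise sparse vector (accuracy $O(\tfrac{\log m}{n\eps'})$), invoking the implication $\eps'$-DP $\Rightarrow (\eps')^2/2$-zCDP, and composing the $k$ rounds via Lemma~\ref{lm:comp}. Either route proves the lemma as stated, and you are right that a clean zCDP accounting for the adaptive sparse-vector loop is the only step requiring real care; the Laplace route sidesteps it entirely.
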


\subsection{Local Differential Privacy}
\label{sect:LDP}

In the local model, the private database $\db$ is distributed among
$n$ parties, each party holding exactly one element of $\db$. For
convenience, we index the parties by the integers from $1$ to $n$, and
denote by $\row_i$ the element of $\db$ held by party $i$. The parties
together with a server engage in a protocol $\Pi$ in order to compute
some function of the entire database $\db$. Here we consider
sequentially interactive protocols (with non-interactive ones as a
special case), as defined by Duchi, Wainright, and
Jordan~\cite{DJW-ASA}. The protocol is defined by a collection of
randomized algorithms $\Pi_1, \ldots \Pi_n$. Algorithm $\Pi_i$ takes
as input $\row_i$ and a message $Y_{i-1}$ received from party $i-1$,
and produces a pair $(Y_i, Z_i)$, where $Y_i$ is sent to party $i+1$,
and $Z_i$ is sent to the server. Parties $1$ and $n$ are exceptions:
$\Pi_1$ only takes $\row_1$ as input, and $\Pi_n$ only produces $Z_n$
as output. Then the server runs a randomized algorithm $\alg$ on inputs
$Z_1, \ldots, Z_n$ to produce the final output of the protocol.  We
use $\Pi(\db)$ to denote the union of all outputs of the
algorithms. The running time of the protocol is the total running time
of the algorithms $\Pi_1, \ldots, \Pi_n$ and $\alg$.

Note that a special case of a sequentially interactive protocol is a
non-interactive protocol, in which $\Pi_i$ ignores $Y_i$ and only
depends on its private input $\row_i$. Non-interactive protocols
roughly capture the randomized response model of
Warner~\cite{Warner-RR}, and their study in the context of
differential privacy goes back to~\cite{DworkMNS06}.

To formulate our privacy definition in the local model, let us recall
the notions of max-divergence and approximate max-divergence, defined
for any two random variables $X$ and $Y$ on the same probability space
by
\begin{align*}
  \rendiv_\infty(X \| Y) &= \sup_S \frac{\Pr(X \in S)}{\Pr(Y\in S)}
  &\rendiv_\infty^\delta(X \| Y) = \sup_S \frac{\Pr(X \in S)-\delta}{\Pr(Y\in S)},
\end{align*}
where the supremum is over measurable sets $S$ in the support
of $X$. With this notation, the standard definition of an $(\eps,
\delta)$-differentially private algorithm~\cite{DworkMNS06} is as
follows. 

\begin{defn}\label{defn:ADP}
  A randomized algorithm $\alg$ satisfies $(\eps,
  \delta)$-differential privacy if for
  datasets $\db$ and $\db'$ we have
  \[
  \rendiv^\delta_\infty(\alg(\db) \| \alg(\db')) \le \eps.
  \]
\end{defn}

We will need the simple composition theorem for differential
privacy. See the book~\cite{DR14-monograph} for a proof.
\begin{lemma}\label{lm:dp-comp}
  Assume that the algorithm $\alg_1(\cdot)$ satisfies $(\eps_1,
  \delta_1)$-differential privacy, and, for every $y$ in the range of
  $\alg_1$, the algorithm $\alg_2(\cdot, y)$ satisfies $(\eps_2,
  \delta_2)$-differential privacy. Then the algorithm $\alg$ defined
  by $\alg(\db) = \alg_2(\db, \alg_1(\db))$ satisfies $(\eps_1  +
  \eps_2, \delta_1 + \delta_2)$-differential privacy.
\end{lemma}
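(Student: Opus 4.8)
The plan is to verify Definition~\ref{defn:ADP} directly. Fix datasets $\db$ and $\db'$ and an arbitrary measurable event $S$ in the output space of $\alg$; the goal is to show $\Pr[\alg(\db) \in S] \le e^{\eps_1 + \eps_2}\,\Pr[\alg(\db') \in S] + \delta_1 + \delta_2$. I would condition on the intermediate output $y$ produced by $\alg_1$. Writing $f(y) = \Pr[\alg_2(\db, y) \in S]$ and $g(y) = \Pr[\alg_2(\db', y) \in S]$, we have $\Pr[\alg(\db) \in S] = \E_{y \sim \alg_1(\db)}[f(y)]$ and $\Pr[\alg(\db') \in S] = \E_{y \sim \alg_1(\db')}[g(y)]$, while the $(\eps_2, \delta_2)$-privacy of each $\alg_2(\cdot, y)$ gives $f(y) \le e^{\eps_2} g(y) + \delta_2$ pointwise in $y$.

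The first step is to peel off the inner failure probability $\delta_2$. Since also $f(y) \le 1$, setting $h(y) = \min\{e^{\eps_2} g(y),\, 1\}$ yields $f(y) \le h(y) + \delta_2$ with $h$ now taking values in $[0,1]$, hence $\Pr[\alg(\db) \in S] \le \E_{y \sim \alg_1(\db)}[h(y)] + \delta_2$. The second step applies the privacy of $\alg_1$ to the $[0,1]$-valued statistic $h$: I would use the standard equivalent form of $(\eps, \delta)$-differential privacy, namely that $\E[h(\alg_1(\db))] \le e^{\eps_1}\E[h(\alg_1(\db'))] + \delta_1$ for every measurable $h$ from the range of $\alg_1$ into $[0,1]$. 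This follows from Definition~\ref{defn:ADP} by the layer-cake identity $\E_{y \sim P}[h(y)] = \int_0^1 P(\{h > t\})\d t$, applied with $P$ the law of $\alg_1(\db)$ and of $\alg_1(\db')$, since each superlevel set $\{h > t\}$ is an ordinary event to which the definition applies.

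Chaining the two steps and then using $h(y) \le e^{\eps_2} g(y)$ once more gives $\Pr[\alg(\db) \in S] \le e^{\eps_1}\E_{y \sim \alg_1(\db')}[h(y)] + \delta_1 + \delta_2 \le e^{\eps_1 + \eps_2}\E_{y \sim \alg_1(\db')}[g(y)] + \delta_1 + \delta_2 = e^{\eps_1 + \eps_2}\Pr[\alg(\db') \in S] + \delta_1 + \delta_2$, which is exactly the claim. The only delicate point, and the one place a careless argument goes wrong, is the bookkeeping of the two failure probabilities: one must clamp to $[0,1]$ immediately after removing $\delta_2$ so that what is handed to the outer mechanism is a bona fide $[0,1]$-valued test function, and one must invoke the test-function form of $(\eps,\delta)$-privacy rather than just its indicator-event form. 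Everything else is routine, and no structure of the mean point problem or of $\uni$ enters.
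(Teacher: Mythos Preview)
Your argument is correct. The paper does not actually supply a proof of Lemma~\ref{lm:dp-comp}; it simply points to the Dwork--Roth monograph~\cite{DR14-monograph}. Your write-up is a clean, self-contained version of the standard proof there: the clamping $h(y) = \min\{e^{\eps_2}g(y),1\}$ to split off $\delta_2$, followed by the layer-cake argument to pass from the event form of $(\eps_1,\delta_1)$-privacy to its $[0,1]$-test-function form, is exactly the right way to do the bookkeeping, and every step checks out.
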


The privacy definition for a sequentially interactive protocol in the
local model we adopt is as follows.

\begin{defn}\label{defn:ldp}
  A protocol $\Pi$ in the local model satisfies $(\eps, \delta)$-local
  differential privacy (LDP) if the algorithm $\Pi_1$ satisfies $(\eps,
  \delta)$-differential privacy with respect to the single element
  dataset $\{\row_1\}$, and, for every $2 \le i \le n$ and every $y$
  in the range of $\Pi_{i-1}$,  the algorithm $\Pi_{i}(\cdot, y)$
  satisfies $(\eps, \delta)$-differential privacy with respect to the
  single element dataset $\{\row_i\}$.

  When a protocol satisfies $(\eps, 0)$-LDP, we also say that it
  satisfies $\eps$-LDP.
\end{defn}

\cut{We note that this privacy definition, proposed by Duchi et
al.~\cite{DJW-ASA}, is somewhat more restrictive than the two-party
differential privacy definition in~\cite{TwoDP} (which easily
generalizes to a multiparty setting) and the local differential
privacy definition in \cite{BNE08}. One difference is the restricted
interaction pattern. Another more subtle difference is that
differential privacy is required to hold for $\Pi_i$ no matter what
message  party $i-1$ sent to party $i$. This corresponds to
an adversarial setting in which parties may deviate from the protocol
to force other parties to reveal more information. By contrast, the
definitions in \cite{BNE08,TwoDP} are with respect to honest but
curious parties. }

We note that while our protocols work in the non-interactive pure LDP
model (i.e.~$\delta = 0$), our lower bounds work against the larger
class of
sequentially interactive protocols and approximate LDP
(i.e.~sufficiently small but nonzero $\delta$).
\cut{ It is an interesting
open problem to extend our lower bounds to the honest but curious
model of \cite{BNE08,TwoDP}.}

%\todo[inline]{Jarek: it would be useful to define max-divergence $D_\infty$ and state the pure local privacy in terms of max divergence.}

\subsection{Packings, Coverings, and Dudley's Inequality}

Recall the definitions of separation numbers given in the
Introduction: for a set $S \subseteq \mathbb{R}^m$  and a proximity
parameter $t > 0$ we denote
\begin{align*}
  \sep(S, t) &= \sup \{|T| : T \subseteq S, \quad \forall x, y \in T, x \ne y \implies \|x - y\|_2 > t\sqrt{m} \};\\
  \sep_\infty(S, t) &= \sup \{|T| : T \subseteq S, \quad \forall
x, y \in T, x \ne y \implies \|x - y\|_\infty > t\}.
\end{align*}
Note the non-standard scaling of $\sep(S,t)$, which we chose because
it corresponds better to the definition of average error. 

To prove the optimality of our algorithms, we make use of the well-known duality between packings (captured by separation numbers) and coverings.
We say that a set $T\subset \R^m$ is a \emph{$t$-covering} of $S$ with respect to metric $d$ if for every $x \in S$ there exists a point $y \in T$ such that $d(x, y) \le t$. This definition gives rise to the family of \emph{covering numbers}
%$\forall x\in S, \exists y\in T \text{ s.t. } d(x,y) \leq t$.
(in $\ell_2$ or $\ell_\infty$) of a compact set
$S \subseteq \R^m$, defined by
\begin{align*}
  \cov(S, t) &= \inf \{|T| : \forall x \in S \ \exists y \in T
  \text{ s.t. } \|x - y\| \le t\sqrt{m}\},\\
  \cov_\infty(S, t) &= \inf \{|T| : \forall x \in S \ \exists y \in T
  \text{ s.t. } \|x - y\|_\infty \le t\}.
\end{align*}

The next  lemma relating separation numbers to covering numbers is
folklore (see e.g.~Chapter 4 of \cite{AGM-book} for a proof).
\begin{lemma}\label{lm:packing-covering}
  Let $S$ be a compact subset of $\R^m$, and $t> 0$ be a real
  number. Let $T$ be a maximal subset of $S$ with respect to inclusion
  s.t.~$\forall x, y \in T: \|x-y\|_2 \ge t\sqrt{m}$
  (resp.~$\|x-y\|_\infty \ge t$). Then $T$ is a $t$-cover of $S$,
  i.e.~for any $x \in S$ there exists a $y \in T$ such that $\|x -
  y\|_2 \le t\sqrt{m}$ (resp.~$\|x-y\|_\infty \le t$). This implies
  \begin{equation*}
    \cov(S, t) \le \sep(S, t),\ \ \ \ \ \ \ 
    \cov_\infty(S, t) \le \sep_\infty(S, t).
  \end{equation*}
\end{lemma}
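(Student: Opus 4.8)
The plan is to prove the slightly stronger (and standard) fact that \emph{any} maximal $t$-packing of $S$ is automatically a $t$-cover of $S$; the claimed inequalities $\cov(S,t) \le \sep(S,t)$ and $\cov_\infty(S,t) \le \sep_\infty(S,t)$ then drop out immediately. I would carry out the argument for the $\ell_2$ statement in full; the $\ell_\infty$ statement is word-for-word identical after replacing $\|\cdot\|_2$ and the scale $t\sqrt m$ by $\|\cdot\|_\infty$ and $t$.

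First I would record that a maximal (under inclusion) set $T \subseteq S$ with $\|x-y\|_2 \ge t\sqrt m$ for all distinct $x,y \in T$ exists and is finite: since $S$ is compact it is bounded, and a standard volume/packing argument (the disjoint balls of radius $\tfrac12 t\sqrt m$ centered at the points of $T$ are all contained in a fixed bounded set) shows every such $T$ has cardinality at most some finite quantity depending only on $S$, $t$, $m$; hence a largest such $T$ exists, and it is in particular maximal. Fix one, and call it $T$.

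Next comes the only real step, the covering claim. Let $x \in S$. If $x \in T$, then $\|x-x\|_2 = 0 \le t\sqrt m$, so $x$ is covered by itself. If $x \notin T$, then $T \cup \{x\} \supsetneq T$, so by maximality $T \cup \{x\}$ violates the packing condition; since $T$ alone satisfies it, any violating pair must involve $x$, i.e.\ there is $y \in T$ with $\|x-y\|_2 < t\sqrt m$. Either way some $y \in T$ has $\|x-y\|_2 \le t\sqrt m$, so $T$ is a $t$-cover of $S$.

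Finally, for the numerical consequences: $T$ is a $t$-cover, so $\cov(S,t) \le |T|$; and the points of $T$ are pairwise at $\ell_2$-distance $\ge t\sqrt m$, so $|T| \le \sep(S,t)$, giving $\cov(S,t) \le \sep(S,t)$, and similarly in $\ell_\infty$. The one bookkeeping point is the mismatch between the non-strict separation $\ge t\sqrt m$ of the maximal $T$ and the strict inequality $> t\sqrt m$ in the definition of $\sep(S,t)$; this is handled either by running the maximality argument with the strict condition from the start (the same case analysis still produces a $t$-cover) or by noting $|T| \le \sep(S,t')$ for every $t' < t$. I do not anticipate any genuine obstacle here — this is the textbook ``a maximal packing is a covering'' argument, and essentially all of the content is the one-line maximality observation.
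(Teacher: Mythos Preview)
Your argument is correct and is precisely the standard ``a maximal packing is a cover'' argument; the paper itself does not prove this lemma at all but simply cites it as folklore (Chapter~4 of \cite{AGM-book}), so there is nothing further to compare. Your handling of the strict-versus-nonstrict mismatch by taking $T$ maximal with respect to the strict inequality from the start is the clean way to align with the paper's definition of $\sep(S,t)$.
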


We will sometimes have to contend explicitly with the sets described
above. A set $T \subset \R^m$ is called a \emph{$t$-separated} set
with respect to a metric $d$ if for every $x, y \in T$, we have $d(x,
y) \geq t$. In what follows, when we discuss $t$-separated sets in the
context of the error measure $\err^2$ the underlying metric will be
the scaled $\ell_2$ norm $d(x,y) = \frac{1}{\sqrt{m}}\|x - y\|_2$, and in
the context of the error measure $\err^\infty$ the underlying metric
will be the $\ell_\infty$ norm $d(x,y) = \|x - y\|_\infty$.

Dudley's Inequality is a tool which allows us to relate the Gaussian mean width of a set $\uni \subset \R^n$ with the family of  covering numbers of $\uni$ at all scales. (Note that the normalization factor of $\sqrt{m}$ appearing in our definition of separation/covering numbers causes this statement to differ by a factor of $\sqrt{m}$ from its usual formulation.) %We are using convenient in our setting renormalization by $\sqrt{n}$ factor in the definition of separation/covering numbers, hence the statement differs by a factor of $\sqrt{n}$ from the standard formulation.
\begin{lemma}[{\cite[Chapter 11.1]{ledoux1991probability}}]
	\label{lm:dudley}
	For any subset $\uni \subset \R^m$, with diameter $\sqrt{m} \Delta(\uni)$, we have
	\begin{equation*}
		\gmw(\uni) \lesssim  \sqrt{m} \int_0^{\Delta(\uni)} \sqrt{\log \cov(\uni, t)} \ d t.
	\end{equation*}
\end{lemma}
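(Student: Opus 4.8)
The plan is to prove Lemma~\ref{lm:dudley} by the classical chaining (successive-approximation) argument, being careful to carry the $\sqrt m$ normalization through every estimate. Write $\Delta := \Delta(\uni)$ and, for integers $j \ge 0$, set $\eps_j := 2^{-j}\Delta$. For each $j$ let $N_j \subseteq \R^m$ be a covering of $\uni$ of minimum size, so that $|N_j| = \cov(\uni, \eps_j)$ and every $y \in \uni$ has some $\pi_j(y) \in N_j$ with $\|y - \pi_j(y)\|_2 \le \eps_j\sqrt m$. Since $\uni$ has $\ell_2$-diameter $\sqrt m\,\Delta$, a single point of $\uni$ is already a valid $\eps_0$-cover, so I may take $N_0 = \{x_0\}$ with $x_0 \in \uni$. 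First I would reduce to the case that $\uni$ is finite: for a finite $F \subseteq \uni$ the maps $\pi_j$ are eventually the identity on $F$ (once $\eps_j$ drops below half the least pairwise distance in $F$), so every sum below is finite and the argument is rigorous, and since $\cov(F,t) \le \cov(\uni,t)$ the resulting bound is uniform over $F$; taking the supremum over finite $F$ then recovers $\gmw(\uni)$.

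With $\uni$ finite, fix $Z \sim N(0, I_m)$. Since $\E \inprod{x_0, Z} = 0$ we have $\gmw(\uni) = \E\sup_{y \in \uni}\inprod{y - x_0, Z}$; using the telescoping identity $y - x_0 = \sum_{j \ge 1}\bigl(\pi_j(y) - \pi_{j-1}(y)\bigr)$, bounding the supremum of a finite sum by the sum of suprema pointwise in $Z$, and taking expectations yields
\[
\gmw(\uni) \;\le\; \sum_{j \ge 1} \E\,\sup_{y \in \uni} \inprod{\pi_j(y) - \pi_{j-1}(y),\, Z}.
\]
For a fixed $j$, the difference $v := \pi_j(y) - \pi_{j-1}(y)$ takes at most $|N_j|\,|N_{j-1}| \le \cov(\uni, \eps_j)^{2}$ distinct values (covering numbers are non-increasing in the radius), each with $\|v\|_2 \le (\eps_j + \eps_{j-1})\sqrt m = 3\eps_j\sqrt m$. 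The key analytic input is the Gaussian maximal inequality: for a finite set $V$ of vectors of $\ell_2$-norm at most $r$, each $\inprod{v, Z}$ is centered and sub-Gaussian with variance proxy $r^2$, whence $\E\max_{v \in V}\inprod{v, Z} \le r\sqrt{2\log|V|}$. Applied to the $j$-th summand this gives a bound of $3\eps_j\sqrt m\,\sqrt{2\log(\cov(\uni,\eps_j)^{2})} = 6\eps_j\sqrt m\,\sqrt{\log\cov(\uni,\eps_j)}$.

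It then remains to pass from the series $\sum_{j \ge 1} 6\eps_j\sqrt m\sqrt{\log\cov(\uni,\eps_j)}$ to the integral. Since $t \mapsto \cov(\uni, t)$ is non-increasing, on $[\eps_{j+1}, \eps_j]$, an interval of length $\eps_j/2$, we have $\sqrt{\log\cov(\uni,t)} \ge \sqrt{\log\cov(\uni,\eps_j)}$, hence $\eps_j\sqrt{\log\cov(\uni,\eps_j)} \le 2\int_{\eps_{j+1}}^{\eps_j}\sqrt{\log\cov(\uni,t)}\d t$; summing over $j \ge 1$ telescopes to $2\int_0^{\eps_1}\sqrt{\log\cov(\uni,t)}\d t \le 2\int_0^{\Delta}\sqrt{\log\cov(\uni,t)}\d t$. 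Combining everything gives $\gmw(\uni) \le 12\sqrt m \int_0^{\Delta(\uni)}\sqrt{\log\cov(\uni,t)}\d t$, which is the claimed inequality with the absolute constant absorbed into $\lesssim$. I do not expect a deep obstacle here: the only points that need care are the $\sqrt m$ bookkeeping forced by the non-standard scaling of $\cov$ used in this paper, and the reduction to finite $\uni$ described above, while the one substantive ingredient is the standard sub-Gaussian bound on the expected maximum.
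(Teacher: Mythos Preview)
Your proof is correct; this is the standard chaining argument for Dudley's inequality, carried out with the $\sqrt m$ normalization used in this paper. The paper itself does not prove this lemma but simply cites it from Ledoux--Talagrand, so there is no alternative approach to compare against; your reduction to finite subsets, telescoping along successive covers, application of the Gaussian maximal inequality, and dyadic sum-to-integral comparison are exactly the steps one finds in that reference.
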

%\todo[inline]{Mark: Should define the $\lesssim$ notation}

\subsection{Subgaussian Random Variables}

We recall the standard definition of a subgaussian random variable.
\begin{defn}
  We say that a mean zero random variable $Z\in \R^m$ is $\sigma$-subgaussian, if for every fixed $\theta \in \R^m$, we have
  \begin{equation*}
    \E \exp(\inprod{Z, \theta}) \leq \exp(\sigma^2 \|\theta\|^2).
  \end{equation*}
  
  For an arbitrary random variable $Z \in \R^m$ we say that it is $\sigma$-subgaussian if $Z- \E Z$ is $\sigma$-subgaussian.
\end{defn}
 
We recall some basic facts about subgaussian random variables in the appendix.

\section{Decompositions of the Universe}
In this section we show a simple decomposition lemma (Lemma~\ref{lm:decomposition}) that underlies all of our new algorithms. We begin by identifying an important property common to both error measures $\err^2$ and $\err^\infty$ which will be essential to Lemma~\ref{lm:decomposition}.
\begin{defn}[Subadditive error measure]
Let $\uni \subset \R^m$ be a finite universe enumerated as $\uni = \{
x_1, x_2, \ldots x_T \}$. For each $x_i$, consider an arbitrary
decomposition $x_i = x_i^1 + x_i^2$, and define $\uni^{(1)} = \{ x_i^1
~ : ~ 1 \le i \leq T\}$, $\uni^{(2)} = \{ x_i^2 ~ : ~ 1\le i \leq T \}$. This decomposition induces, for any dataset $X \in \uni^n$, a pair of datasets $X^1 \in (\uni^{(1)})^n, X^2 \in (\uni^{(2)})^n$. 

We say that an error measure $\err(X, \alg)$ is a
\emph{subadditive error measure} if for every finite universe
$\uni$, every decomposition as above, every dataset $X \in \uni^n$
and every pair of algorithms (or local protocols) $\alg^1, \alg^2$, we have
\begin{equation*}
  \err(X, \alg) \leq \err(X^1, \alg^1) + \err(X^2, \alg^2),
\end{equation*}
where the algorithm (resp.~local protocol) $\alg$ is defined by
$\alg(X) = \alg^1(X^1) + \alg^2(X^2)$.
\end{defn}

Both error measures of interest in this paper are subadditive error measures. 
\begin{claim}
    \label{clm:subadditive}
    Both $\err^2$ and $\err^\infty$ are subadditive error measures.
\end{claim}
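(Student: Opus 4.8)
The plan is to verify the subadditivity inequality separately for each of the two error measures, in each case just unwinding the definitions and applying the triangle inequality for the relevant norm together with a standard moment inequality. Throughout, fix a finite universe $\uni = \{x_1,\dots,x_T\}$, a decomposition $x_i = x_i^1 + x_i^2$, a dataset $X \in \uni^n$ with induced datasets $X^1 \in (\uni^{(1)})^n$ and $X^2 \in (\uni^{(2)})^n$, and algorithms $\alg^1,\alg^2$ with the combined algorithm $\alg(X) = \alg^1(X^1) + \alg^2(X^2)$. The first key observation, which I would state at the outset, is that the averaging operation respects the decomposition: since the $j$-th element of $X$ is the sum of the $j$-th elements of $X^1$ and $X^2$, linearity of the mean gives $\bar X = \overline{X^1} + \overline{X^2}$. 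Hence the ``error vector'' of the combined algorithm splits as $\alg(X) - \bar X = \big(\alg^1(X^1) - \overline{X^1}\big) + \big(\alg^2(X^2) - \overline{X^2}\big)$, which is the identity everything else rests on.

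For $\err^\infty$, I would apply the triangle inequality for $\|\cdot\|_\infty$ pointwise to the above identity, obtaining $\|\alg(X)-\bar X\|_\infty \le \|\alg^1(X^1)-\overline{X^1}\|_\infty + \|\alg^2(X^2)-\overline{X^2}\|_\infty$, and then take expectations over the (independent, or arbitrarily coupled — linearity of expectation does not care) randomness of $\alg^1$ and $\alg^2$ and use linearity of $\E$. This yields $\err^\infty(X,\alg) \le \err^\infty(X^1,\alg^1) + \err^\infty(X^2,\alg^2)$ directly.

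For $\err^2$, recall $\err^2(X,\alg) = \big(\E \tfrac1m\|\alg(X)-\bar X\|_2^2\big)^{1/2}$, which is exactly the $L_2$ norm $\|Z\|_2$ (in the sense $\|Z\|_p = (\E|Z|^p)^{1/p}$ defined in the Preliminaries) of the scalar random variable $Z = \tfrac{1}{\sqrt m}\|\alg(X)-\bar X\|_2$. Writing $Z_i = \tfrac{1}{\sqrt m}\|\alg^i(X^i)-\overline{X^i}\|_2$ for $i=1,2$, the triangle inequality for $\|\cdot\|_2$ in $\R^m$ applied to the splitting identity gives $Z \le Z_1 + Z_2$ pointwise, and then Minkowski's inequality for the $L_2(\Omega)$ norm gives $\|Z\|_2 \le \|Z_1 + Z_2\|_2 \le \|Z_1\|_2 + \|Z_2\|_2 = \err^2(X^1,\alg^1) + \err^2(X^2,\alg^2)$, as desired. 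The argument extends verbatim to the local-protocol setting, since the only property used is that the combined protocol's output is the sum of the two component outputs and that the mean decomposes additively.

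There is no real obstacle here: the statement is essentially a bookkeeping exercise combining two triangle inequalities (one for the norm on $\R^m$, one for the $L_2$ norm on the probability space in the $\err^2$ case). The only mild subtlety worth a sentence is making explicit that $\overline{X} = \overline{X^1} + \overline{X^2}$ so that the errors, not the raw outputs, are what gets split; once that is in place the rest is immediate.
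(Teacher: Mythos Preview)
Your proposal is correct and follows essentially the same approach as the paper: the paper phrases the $\err^2$ case as a single application of the triangle inequality for the seminorm $\|Z\|_{L_2(\ell_2)} = (\E\|Z\|_2^2)^{1/2}$ on $\R^m$-valued random variables, which is exactly your two-step combination of the $\ell_2$ triangle inequality in $\R^m$ followed by Minkowski in $L_2(\Omega)$, and handles $\err^\infty$ via the $L_1(\ell_\infty)$ seminorm just as you do. Your explicit remark that $\bar X = \overline{X^1} + \overline{X^2}$ is a helpful addition the paper leaves implicit.
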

The proof of this claim can be found in the appendix. It follows
directly from the triangle inequality for $\ell_2$ and $\ell_\infty$ norms
respectively.

\begin{lemma}
  \label{lm:decomposition}
  Let $\uni \subset \R^{m}$ be a subset of a Minkowski sum $\uni
  \subset \uni^{(1)} + \uni^{(2)} + \ldots +\uni^{(k)}$, and let
  $\pi_1, \ldots, \pi_k$ be functions, respectively, from $\uni$ to
  $\uni^{(i)}$ such that $x = \sum_{i = 1}^k{\pi_i(x)}$. Consider an
  arbitrary subadditive error measure $\err$.  Let $\alg^{1},
  \alg^{2}, \ldots \alg^k$ be a sequence of algorithms (respectively
  protocols in the local model) such that for every $j$,
    \begin{enumerate}
    \item $\err(\uni^{j}, \alg^{j}) \leq \alpha_j$, and 
	\item $\alg^j$ satisfies $\rho_j$-zCDP (resp.~$\eps_j$-LDP). 
          %\todo{Jarek: Can we somehow state this lemma in greater generality, to encompass LDP?}
    \end{enumerate}
    Then we can construct a $\rho$-zCDP mechanism (resp.~$\eps$-LDP
    protocol) $\alg$ with $\err(\uni,
    \alg) \leq \sum_{j=1}^k \alpha_j$, where $\rho = \sum_{j=1}^k \rho_j$
    (resp.~$\eps = \sum_{j = 1}^k{\eps_j}$).
    
    Moreover, the running time of $\alg$ is bounded by the sum of the
    running times of $\alg^1, \ldots, \alg_k$, and the sum of the
    running times to compute $\pi_1, \ldots, \pi_k$ on $n$ vectors
    from $\uni$. If $\alg^1,
    \ldots, \alg^k$ are non-interactive local protocols, then so is
    $\alg$. 
\end{lemma}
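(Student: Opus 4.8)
The plan is to realize $\alg$ by the ``decompose, solve, recombine'' strategy implicit in the hypotheses. On input $X \in \uni^n$ with rows $x_1,\dots,x_n$, form for each $j$ the dataset $X^j \in (\uni^{(j)})^n$ whose rows are $\pi_j(x_1),\dots,\pi_j(x_n)$, run $\alg^j$ on $X^j$ to obtain an answer $a_j$, and output $\alg(X) := \sum_{j=1}^k a_j$. In the local model the analogous protocol has each party $i$ compute $\pi_j(x_i)$ for all $j$ from its single input $x_i$ and participate in each sub-protocol $\alg^j$ with input $\pi_j(x_i)$, after which the server recombines the $k$ sub-protocol outputs by summing. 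It then remains to check privacy, accuracy, and the ``moreover'' claims.

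\textbf{Privacy.} The crucial observation is that each splitting map $X \mapsto X^j$ is deterministic and acts row by row, so it carries neighboring datasets to neighboring datasets; hence $X \mapsto \alg^j(X^j)$ is a $\rho_j$-zCDP (resp.\ $\eps_j$-LDP) mechanism on $\uni^n$. Since $(a_1,\dots,a_k)$ is obtained by running these $k$ mechanisms, each independent of the others' outputs, iterating Lemma~\ref{lm:comp} $k-1$ times (each step letting the ``second'' mechanism not look at the first's output) shows $X \mapsto (a_1,\dots,a_k)$ is $\big(\sum_j \rho_j\big)$-zCDP; forming $\sum_j a_j$ is post-processing, which by the remark after Lemma~\ref{lm:comp} is free. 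The local case is the same with Lemma~\ref{lm:dp-comp} replacing Lemma~\ref{lm:comp}: the collection of all messages any single party $i$ sends is a composition of $k$ mechanisms each $\eps_j$-DP in $x_i$, hence $\big(\sum_j \eps_j\big)$-DP in $x_i$, and the server's summation is post-processing.

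\textbf{Accuracy.} I would establish $\err(\uni,\alg) \le \sum_{j=1}^k \alpha_j$ by induction on $k$, using subadditivity of $\err$ and grouping the Minkowski summands as $\uni^{(1)}$ versus $\uni^{(2)}+\dots+\uni^{(k)}$. For a point $x$ write $x = \pi_1(x) + \widetilde\pi(x)$ where $\widetilde\pi(x) := \sum_{j=2}^k \pi_j(x)$ lies in $\widetilde\uni := \{\widetilde\pi(x) : x \in \uni\} \subseteq \uni^{(2)}+\dots+\uni^{(k)}$; fixing, for each element of $\widetilde\uni$, one choice of its summands in $\uni^{(2)},\dots,\uni^{(k)}$ lets the inductive hypothesis supply a mechanism $\widetilde\alg$ on $\widetilde\uni$-datasets with $\err(\widetilde\uni,\widetilde\alg) \le \sum_{j=2}^k \alpha_j$ that decomposes as $\sum_{j=2}^k \alg^j$ in the sense of the subadditive-error definition. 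Since the constructed $\alg$ satisfies $\alg(X) = \alg^1(X^1) + \widetilde\alg(\widetilde X)$, where $\widetilde X$ has rows $\widetilde\pi(x_i)$, subadditivity gives $\err(X,\alg) \le \err(X^1,\alg^1) + \err(\widetilde X,\widetilde\alg)$; as $X^1$ lies in $(\uni^{(1)})^n$ and $\widetilde X$ in $\widetilde\uni^n$, the right side is at most $\alpha_1 + \sum_{j=2}^k \alpha_j$, and maximizing over $X \in \uni^n$ closes the induction.

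\textbf{Running time and non-interactivity.} Both are immediate: $\alg$ only evaluates $\pi_1,\dots,\pi_k$ on the $n$ rows, runs $\alg^1,\dots,\alg^k$, and adds, so its running time is the claimed sum; and if every $\alg^j$ is a non-interactive local protocol then each party needs only its own $x_i$ to form all the $\pi_j(x_i)$ and run its single-message part of each $\Pi^j$, so batching these $k$ messages yields a non-interactive protocol. The only place I anticipate needing genuine care is the privacy bookkeeping---verifying that the deterministic splitting (and re-splitting, when the two-part subadditivity is iterated and Minkowski-sum elements must be re-represented by a fixed tuple of summands) always preserves the neighboring relation so that Lemmas~\ref{lm:comp} and~\ref{lm:dp-comp} apply verbatim, and that in the local model the composition is tallied per party, over the messages each individual party emits, rather than over the protocol as a whole.
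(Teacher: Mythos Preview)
Your proposal is correct and follows essentially the same approach as the paper: construct $\alg$ by splitting each row via the $\pi_j$'s, running the $\alg^j$'s on the induced datasets, and summing; then invoke composition (Lemma~\ref{lm:comp} or Lemma~\ref{lm:dp-comp}) plus post-processing for privacy, and iterated subadditivity for the error bound. The paper's writeup is terser---it asserts the $k$-term subadditivity inequality $\err(X,\alg)\le\sum_j\err(X^j,\alg^j)$ in one line rather than unwinding an induction with the auxiliary $\widetilde\uni$---but the content is the same, and your extra care about the re-splitting and the preservation of the neighboring relation under the row-wise maps is appropriate.
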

\begin{proof}
  We first prove the lemma for CDP.
  For a database $\db \in \uni^n$, we can consider a sequence of
  induced databases $\db^1, \ldots, \db^k$, where $\db^j$ is derived
  from $\db$ by applying $\pi_j$ pointwise to each one of its
  elements. Given a database $\db$ we compute independently
  $\alg^{j}(\db^{j})$ for every $j$, and release $\sum_{j=1}^k
  \alg^{j}(\db^{j})$.

  The privacy of  $\alg$ follows from the composition properties of zCDP (Lemma~\ref{lm:comp}), and postprocessing --- i.e. by the composition lemma we know that releasing $(\alg^{1}(X^{1}), \ldots \alg^{k}(X^k))$ satisfies $\rho$-zCDP, and by postprocessing $\sum_{j=1}^k \alg^{j}(X^j)$ has the same privacy guarantee.
	
    Moreover, the error bound is satisfied by inductively applying subadditivity of the error measure. Indeed, for any specific database $\db$, we have $\err(\db, \alg) \leq \sum_{j=1}^k \err(\db^j, \alg^j)$, and therefore
    \begin{equation*}
        \err(\uni, \alg) = \max_{\db \in \uni^n} \err(\db, \alg) \leq \sum_{j =1}^k \max_{\db^j \in (\uni^{(j)})^n} \err(\db^j, \alg^j) \leq \sum_{j =1}^k \err(\uni^{j}, \alg^i).
    \end{equation*}
    The proof for LDP is analogous: each party $i$, given input
    $\row_i$, for each $j$, $1 \le j \le k$ runs the local protocol
    $\alg^{j}$ with input $\pi_j(\row_i$). The protocols can be run
    in parallel. At the end the server can compute and output the sum
    of the outputs of the local protocols. The error analysis is the
    same as above, and the privacy bound $\varepsilon \leq \sum_{j=1}^k
    \varepsilon_j$ follows from the simple composition theorem
    (Lemma~\ref{lm:dp-comp}) for (pure) differential privacy.
\end{proof}

\section{Algorithms for Concentrated Differential Privacy}

In this section we define our two new algorithms in the centralized
model. In the subsequent section we describe how to adapt them to the
local model. 

\subsection{The Coarse Projection Mechanism}
In this section, we prove Theorem~\ref{thm:ub-coarse} giving the guarantees of the coarse projection mechanism.

For a finite $\uni \subset \R^m$, let $\uni^{(1)} \subset \R^m$ be an
inclusion-maximal $\frac{\alpha}{2}$-separated subset of $\uni$ with
respect to the metric $d(x,y) = \frac{1}{\sqrt{m}}\|x - y\|_2$. Let $\uni^{(2)} = \frac{\alpha}{2}\sqrt{m} B_2^m$. We claim that $\uni \subset \uni^{(1)} + \uni^{(2)}$: this follows since, by Lemma~\ref{lm:packing-covering}, $\uni^{(1)}$ is a $\frac{\alpha}{2}$-cover of $\uni$. 
%To see this, observe that for every $s \in \uni$, there is some $\tilde{s} \in \uni^{(1)}$ with $\|\tilde{s} - s\| \leq \alpha \sqrt{m}$. This holds by the maximality of $\uni^{(1)}$; for if it did not hold, we could enlarge $\uni^{(1)}$ to an $\alpha$-separated set $\uni^{(1)} \cup \{s\}$.

Let $\alg^1 : (\uni^{(1)})^n \to \uni^{(1)}$ be the projection mechanism (as in Lemma~\ref{lm:proj}) and let $\alg^2$ be the trivial $0$-zCDP mechanism where $\alg^2(x) = 0$ for all $x$.  Note that $\alg^2$ has error $\err^2(\uni^{(2)}, \alg^2) \leq \frac{\alpha}{2}$.

We now invoke Lemma~\ref{lm:decomposition} with $\uni^{(1)},
\uni^{(2)}$ and $\alg^1, \alg^2$ as described, and using the
(subadditive) error measure $\err^2$. As mentioned in the
introduction, this gives  the following simple mechanism:
\begin{enumerate}
\item Round each element $\row$ of the dataset $\db$ to the nearest
  point $\row^{(1)}$ in the covering $\uni^{(1)}$ to get a rounded dataset
  $\db^{(1)}$.
\item Add enough Gaussian noise to $\bar{\db}^{(1)}$ to preserve
  $\rho$-zCDP; let the resulting  noisy vector be $\tilde{Y}$.
\item Output the closest point in $\conv\{\db^{(1)}\}$ to $\tilde{Y}$
  in $\ell_2$. 
\end{enumerate}

 We will use the following lemma to analyze the error incurred by the
 projection mechanism (corresponding to steps 2.~and 3.~above). 
\begin{lemma}
	\label{lm:coarse-dudley}
	For any $\alpha > 0$ and any $\frac{\alpha}{2}$-separated set $S \subset \R^m$ with diameter $\sqrt{m} \Delta$, we have

	\begin{equation*}
		g(S) \lesssim \sqrt{m} \log(\Delta/\alpha) \sup \left\{ t \sqrt{\log \sep(S, t)}\ :\ t \geq \alpha/4 \right\}.
	\end{equation*}
\end{lemma}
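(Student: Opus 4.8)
The plan is to pipe the covering numbers of $S$ into Dudley's inequality (Lemma~\ref{lm:dudley}) and then use the hypothesis that $S$ is $\tfrac{\alpha}{2}$-separated, together with the packing--covering duality (Lemma~\ref{lm:packing-covering}), to control the resulting entropy integral by the supremum appearing on the right-hand side. Write $M := \sup\{t\sqrt{\log\sep(S,t)} : t \ge \alpha/4\}$. First I would dispose of the degenerate case $|S|\le 1$: then $\gmw(S)=0$ and there is nothing to prove, so assume $|S|\ge 2$; since every two points of $S$ are at $\ell_2$-distance at least $\tfrac{\alpha}{2}\sqrt m$, the diameter $\sqrt m\,\Delta$ of $S$ satisfies $\Delta\ge\alpha/2$.

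Next, apply Lemma~\ref{lm:dudley} to get $\gmw(S)\lesssim\sqrt m\int_0^{\Delta}\sqrt{\log\cov(S,t)}\,\d t$, and split the integral at $t=\alpha/4$. For the low range $t\in(0,\alpha/4]$ I would use only the crude bound $\cov(S,t)\le|S|$ (the set $S$ covers itself); since $S$ is $\tfrac{\alpha}{2}$-separated it is a fortiori $\tfrac{\alpha}{4}$-separated, hence $|S|\le\sep(S,\alpha/4)$, giving
\[
\int_0^{\alpha/4}\sqrt{\log\cov(S,t)}\,\d t\;\le\;\frac{\alpha}{4}\sqrt{\log\sep(S,\alpha/4)}\;\le\;M ,
\]
where the final step uses that $t=\alpha/4$ is admissible in the supremum defining $M$. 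For the high range $t\in[\alpha/4,\Delta]$, Lemma~\ref{lm:packing-covering} gives $\cov(S,t)\le\sep(S,t)$, and by definition of $M$ we have $\sqrt{\log\sep(S,t)}\le M/t$ for every such $t$, so
\[
\int_{\alpha/4}^{\Delta}\sqrt{\log\cov(S,t)}\,\d t\;\le\;M\int_{\alpha/4}^{\Delta}\frac{\d t}{t}\;=\;M\log(4\Delta/\alpha).
\]
Adding the two contributions and using $\Delta\ge\alpha/2$ to absorb the universal constant inside the logarithm yields $\gmw(S)\lesssim\sqrt m\,\log(\Delta/\alpha)\,M$, which is exactly the asserted bound. (A dyadic decomposition of $[\alpha/4,\Delta]$ into $O(\log(\Delta/\alpha))$ scales, estimating $\cov$ by $\sep$ at the left endpoint of each piece, gives the same thing and may be cleaner to write.)

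The argument above has no serious obstacle; it is a careful accounting of Dudley's integral. The one point that genuinely needs attention is the behaviour near $t=0$: the covering number $\cov(S,t)$ does not diverge there but is capped by $|S|$, and the crucial observation is that $|S|$ is itself at most $\sep(S,\alpha/4)$, i.e.\ bounded by a single term of the supremum at the smallest admissible scale, so the low range contributes only $O(\sqrt m\,M)$ rather than an unbounded amount. Beyond that one should just keep track of the degenerate case $|S|\le 1$ and of the harmless additive constant that appears inside the logarithm after the $1/t$-integration, which is why I flagged the inequality $\Delta\ge\alpha/2$ at the outset.
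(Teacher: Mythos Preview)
Your proposal is correct and is essentially the paper's own argument: apply Dudley's inequality, split the entropy integral at $t=\alpha/4$, bound the low range using that $\cov(S,t)\le|S|\le\sep(S,\alpha/4)$ (since $S$ is $\tfrac{\alpha}{2}$-separated), and bound the high range by packing--covering duality and the $1/t$-integration. Your treatment is in fact slightly more careful than the paper's in that you explicitly handle the degenerate case $|S|\le 1$ and note $\Delta\ge\alpha/2$ to absorb the additive constant inside the logarithm.
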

\begin{proof}%[Proof of Lemma~\ref{lm:coarse-dudley}]
	By Dudley's inequality (Lemma~\ref{lm:dudley}), we have 
	\begin{align}
		g(S) & \lesssim \sqrt{m} \int_0^\Delta \sqrt{\log \cov(S, t)} \d t \nonumber \\
		& \leq \int_0^{\alpha/4} \sqrt{\log \cov(S, t)} \d t + \int_{\alpha/4}^{\Delta} \sqrt{\log \cov(S, t)}.  \label{eq:two-terms}
	\end{align}

	Now we can bound the two summands on the right hand side
        separately. Note that for $t < \alpha/4$, because $S$ is
        $\frac{\alpha}{2}$-separated, we have $\cov(S, t) = |S| \le \sep(S, \alpha/2)$ --- indeed, every covering of radius $\frac{\alpha}{4}$ has to contain every point of $S$. Therefore
	\begin{equation}
		\int_0^{\alpha/4} \sqrt{\log \cov(S, t)} \d t \leq \frac{\alpha}{4} \sqrt{\log \sep(S, \alpha/4)}. \label{eq:first-term}
	\end{equation}

	On the other hand, for the second summand, we have
	\begin{align}
		\int_{\alpha/4}^{\Delta} \sqrt{\log \cov(S, t)} \d t & =
		\int_{\alpha/4}^{\Delta} t \sqrt{\log \cov(S, t)} \frac{\d t}{t} \nonumber \\
		& \leq \sup_{t > \alpha/4}\{t \sqrt{\log \cov(S, t)}\} \int_{\alpha/4}^{\Delta} \frac{\d t}{t} \nonumber \\
		& \leq \sup_{t > \alpha/4}\{t \sqrt{\log \cov(S, t)}\}
                \ln\frac{\Delta}{\alpha/4} \nonumber\\
                & \leq \sup_{t > \alpha/4}\{t \sqrt{\log \sep(S, t)}\}
                \ln\frac{\Delta}{\alpha/4}. \label{eq:second-term}
	\end{align}
	The last inequality follows from the duality between packing
        and covering numbers (Lemma~\ref{lm:packing-covering}). 
	By~\eqref{eq:first-term}~and~\eqref{eq:second-term}, we have a bound
	\begin{align*}
          g(S) & \lesssim \sqrt{m} \log \frac{\Delta}{\alpha} \sup_{t \geq \alpha/4}( t \sqrt{\log \sep(S, t)}).
	\end{align*}
        This completes the proof. 
\end{proof}

%The proof of this lemma is based on Dudley's Inequality, and appears in the appendix.
%By applying Lemma~\ref{lm:decomposition} with mechanism $\alg^{(1)}$ --- projection mechanism as in Lemma~\ref{lm:proj}, and trivial mechnaims $\alg^{(2)}$, we deduce
Using the guarantee of the projection mechanism (Lemma~\ref{lm:proj}), Lemma~\ref{lm:decomposition} shows that
\begin{align*}
	\err^2(\uni, \alg) & \leq \err^2(\uni^{(1)}, \alg^{1}) + \err^2(\uni^{(2)}, \alg^{2})  \\
    & \leq  \left(\frac{\Delta \gmw(\uni^{(1)})}{ n\sqrt{2\rho m}}\right)^{1/2} + \frac{\alpha}{2}.
\end{align*}
%\todo[inline]{Mark: Is it 2 in the denominator, or $\sqrt{2}$?}
Hence, as soon as $n \gtrsim \frac{\Delta \gmw(\uni^{(1)})}{\sqrt{\rho m} \alpha^2}$, the first term is bounded by $\alpha/2$, and the total error $\err^2(\alg, \uni) \leq \alpha$. By applying Lemma~\ref{lm:coarse-dudley}, we can deduce that it is enough to have
\begin{align*}
    n & \gtrsim \frac{\Delta}{\sqrt{\rho m} \alpha^2}  \sqrt{m} \log(\Delta/\alpha) \sup \left\{ t \sqrt{\log \sep(\uni^{(1)}, t)}\ :\ t \geq \alpha/4 \right\} \\
    & = \frac{\Delta \log(\Delta/\alpha)}{\sqrt{\rho} \alpha^2} \sup \left\{ t \sqrt{\log \sep(\uni^{(1)}, t)}\ :\ t \geq \alpha/4 \right\}
\end{align*}
Since $\uni^{(1)} \subset \uni$, we have $\sep(\uni^{(1)}, t) \leq \sep(\uni, t)$ for each $t$. Therefore, as soon as
\begin{equation*}
	n \gtrsim \frac{\Delta \log(\Delta/\alpha)}{\sqrt{\rho} \alpha^2} \sup \left\{ t \sqrt{\log \sep(\uni, t)}\ :\ t \geq \alpha/4 \right\}
\end{equation*}
we have $\err^2(\alg, \uni) \leq \alpha$. Finally, because $\uni \subset [0, 1]^m$ in the statement of Theorem~\ref{thm:ub-coarse} has diameter bounded by $\sqrt{m}$, we can take $\Delta = 1$. This concludes the proof of Theorem~\ref{thm:ub-coarse}.

\subsection{The Chaining Mechanism for Average Error \label{sec:cdp-chain}}

In this section we will prove Theorem~\ref{thm:ub-chain}. The main ingredient of the proof is following decomposition lemma for a finite set $\uni \subset [0,1]^m \subset \sqrt{m} B_2$. This decomposition will be used together with Lemma~\ref{lm:decomposition}, to get the desired algorithm.

\begin{lemma}
  \label{lm:chaining-decomposition}
  Let $\| \cdot \|$ be a norm on $\R^m$ and let $B = \{x \in \R^m: \|x\| \le 1\}$ be its unit ball.
For $\Delta \ge 0$, $\alpha < 1$, and an arbitrary set $\uni \subset \Delta B$, there
exist a sequence of subsets $\uni^{(1)}, \ldots \uni^{(k)}$, with $k =
\lceil \log \frac{2}{\alpha}\rceil $, such that $\uni \subset
\uni^{(1)} + \ldots + \uni^{(k)} + \frac{\alpha}{2} \Delta B$, where
$\uni^{(i)} \subset 2^{-i + 1} \Delta B$, and $\uni^{(i)}$ is
$2^{-i}$-separated with respect to the metric $d(x,y) = \|x -
y\|/\Delta$. 
Moreover, $\uni^{(1)}, \ldots \uni^{(k)}$ can be computed in time
polynomial in $m$ and $|\uni|$ if $\|\cdot\|$ can be computed in time
polynomial in $m$.
\end{lemma}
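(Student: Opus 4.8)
The plan is to construct the decomposition by a greedy ``peeling'' process that handles one scale at a time. One might first try to mimic the chain $x = \pi_1(x) + \sum_{i}(\pi_{i+1}(x) - \pi_i(x))$ underlying the proof of Dudley's inequality, where $\pi_i(x)$ is the nearest point to $x$ in a net of $\uni$ at scale $2^{-i}\Delta$; but the blocks $\uni^{(i)} = \{\pi_{i+1}(x) - \pi_i(x) : x \in \uni\}$ produced this way need not be separated (two increments can coincide with distinct endpoints), and the triangle inequality only bounds their norms by $\frac{3}{2}\cdot 2^{-i}\Delta$ rather than $2^{-i+1}\Delta$. Instead I would recurse on the \emph{residual} sets: after approximating the current working set at scale $2^{-i}\Delta$ by a maximal separated net, I take the set of approximation errors and feed it back in at the next finer scale. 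Since each $\uni^{(i)}$ is then literally a maximal separated subset of a set, its separation is free.

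Concretely: put $R_0 = \uni$, and for $i = 1, \dots, k$ let $\uni^{(i)} \subseteq R_{i-1}$ be an inclusion-maximal subset whose pairwise $\|\cdot\|$-distances are all at least $2^{-i}\Delta$ (i.e.\ a maximal $2^{-i}$-separated set with respect to $d(x,y) = \|x-y\|/\Delta$), let $\pi_i \colon R_{i-1}\to\uni^{(i)}$ send each point to a nearest element of $\uni^{(i)}$, and set $R_i = \{\, r - \pi_i(r) : r \in R_{i-1}\,\}$. I would then check the required properties in the following order. First, separation of $\uni^{(i)}$ holds by construction. Second, $\uni^{(i)} \subseteq 2^{-i+1}\Delta B$: by maximality (exactly as in Lemma~\ref{lm:packing-covering}) $\uni^{(i)}$ is a $2^{-i}\Delta$-net of $R_{i-1}$, so $R_i \subseteq 2^{-i}\Delta B$; since $R_0 = \uni \subseteq \Delta B$, induction gives $R_{i-1}\subseteq 2^{-(i-1)}\Delta B$, and hence $\uni^{(i)} \subseteq R_{i-1} \subseteq 2^{-i+1}\Delta B$. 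Third, the Minkowski containment: iterating $r = \pi_i(r) + (r-\pi_i(r))$ starting from $x^{(0)} = x$ and setting $x^{(j)} = x^{(j-1)} - \pi_j(x^{(j-1)}) \in R_j$ yields, for every $x \in \uni$, the representation $x = \pi_1(x^{(0)}) + \pi_2(x^{(1)}) + \dots + \pi_k(x^{(k-1)}) + x^{(k)}$ with $\pi_j(x^{(j-1)}) \in \uni^{(j)}$ and $x^{(k)} \in R_k \subseteq 2^{-k}\Delta B$; since $k = \lceil \log(2/\alpha)\rceil$ forces $2^{-k} \le \alpha/2$, the last term lies in $\frac{\alpha}{2}\Delta B$, so $\uni \subseteq \uni^{(1)} + \dots + \uni^{(k)} + \frac{\alpha}{2}\Delta B$. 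Finally, for the running time: each $\uni^{(i)}$ is obtained by a greedy scan of $R_{i-1}$ using $O(|\uni|^2)$ evaluations of $\|\cdot\|$, and $R_i$ is computed from $R_{i-1}$ with $O(|\uni|^2)$ more (using $|R_i| \le |\uni|$, as $R_i$ is the image of $\uni$ under a fixed map), so the whole construction takes time $k\cdot\poly(m,|\uni|)$, which is $\poly(m,|\uni|)$ since $k = O(\log(1/\alpha))$.

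I do not expect a serious obstacle here: the only real idea is to recurse on residual sets instead of forming differences of consecutive net projections, which is what makes the separation property automatic; the rest is bookkeeping with powers of two (a maximal $2^{-i}\Delta$-separated net has covering radius $2^{-i}\Delta$; the residual radii form the geometric sequence $\Delta, 2^{-1}\Delta, 2^{-2}\Delta, \dots$; $k$ levels drop it below $\frac{\alpha}{2}\Delta$). I would also record, for use in the proof of Theorem~\ref{thm:ub-chain}, that this construction simultaneously provides maps $x \mapsto \pi_i(x^{(i-1)})$ from $\uni$ into $\uni^{(i)}$ with $x - x^{(k)} = \sum_{i=1}^{k}\pi_i(x^{(i-1)})$, which is exactly the form required to invoke Lemma~\ref{lm:decomposition} (with the residual ball $\frac{\alpha}{2}\Delta B$ included as an extra block handled by the trivial $0$-private mechanism, as in the proof of Theorem~\ref{thm:ub-coarse}).
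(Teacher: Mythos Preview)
Your proof is correct and establishes every clause of the lemma as stated. It does, however, take a genuinely different route from the paper. The paper fixes maximal $2^{-j}$-separated subsets $S_j \subset \uni$ (nets of $\uni$ itself at each scale, not of residual sets), defines a nearest-point map $\pi_j : \uni \to S_j$, and sets $\uni^{(1)} = S_1$ and $\uni^{(j)} = \{x - \pi_{j-1}(x) : x \in S_j\}$ for $j > 1$; the telescoping then runs ``inward'' via $x_k = \pi_k(x)$, $x_{j} = \pi_j(x_{j+1})$. Your recursion on residual sets buys you something real: the $2^{-i}$-separation of $\uni^{(i)}$ is automatic, whereas the paper's write-up does not verify this clause, and for its difference sets $\{x - \pi_{j-1}(x)\}$ separation is in fact not obvious. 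Conversely, the paper's construction buys something you lose: since its $\uni^{(j)}$ is the image of $S_j \subset \uni$, one gets $|\uni^{(j)}| \le |S_j| = \sep(\uni, 2^{-j})$ for free, and \emph{that} is the inequality actually invoked when the lemma is fed into the proof of Theorem~\ref{thm:ub-chain}. In your construction $\uni^{(i)}$ is a $2^{-i}$-separated subset of $R_{i-1}$ rather than of $\uni$, and the residual map $\uni \to R_{i-1}$ is not $1$-Lipschitz, so $|\uni^{(i)}| \le \sep(\uni, 2^{-i})$ is not automatic from separation alone; if you intend to use your decomposition downstream, make sure you can still control $|\uni^{(i)}|$ in terms of $\sep(\uni,\cdot)$.
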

\begin{proof}
  Let $S_1, S_2, \ldots S_k \subset \uni$ be a sequence of subsets
  such that each $S_j$ is a maximal $2^{-j}$-separated set in the
  metric $d(x,y)$ defined above. By the maximality of each set, we
  have
\begin{equation}
	\forall x \in \uni \quad \exists y \in S_j \quad \text{s.t.} \quad  \|x - y\| \leq 2^{-j} \Delta.
\end{equation}

Let us define a mapping $\pi_j :\uni \to S_j$ such that for each $x$, we have
\begin{equation} \label{eqn:proj-length}
\|\pi_j(x) - x\| \leq 2^{-j} \Delta
\end{equation}
 Finally, we define $\uni^{(1)} = S_1$, and for $j > 1$, we will take $\uni^{(j)} := \{ x - \pi_{j-1}(x) \ :\ x \in S_{j}\}$.

We first observe that $\uni \subset \uni^{(1)} + \uni^{(2)} + \ldots + \uni^{(k)} + 2^{-k} \Delta  B$. Indeed, for any $x \in \uni$, we can take $x_k = \pi_k(x)$, and $x_j = \pi_j(x_{j+1})$ for every $j < k$. Then for every $j< k$ we have $x_{j+1} - x_{j} \in \uni^{(j+1)}$, and we have the telescoping decomposition
\begin{equation*}
	x = x_1 + \sum_{j=1}^{k-1} (x_{j+1} - x_j) + (x - x_k).
\end{equation*}
Finally, by definition of $x_k$, we have $\|x - x_k\| = \|x - \pi_k(x)\| \leq 2^{-k}  \Delta$, or equivalently $x - x_k \in 2^{-k}  \Delta B$.

Observe now that every set $\uni^{(j)}$ satisfies $\uni^{(j)} \subset
2^{-j + 1} \Delta B$. This is the case because every element of
$\uni^{(j)}$ takes the form $x - \pi_{j-1}(x)$ for some $x \in \uni$, and all
of these vectors are bounded by $2^{-j+1} \Delta$ in the $\|\cdot\|$
norm by \eqref{eqn:proj-length}.  
%On the other hand,
%by the definition of $\uni^{(j)}$, we have $|\uni^{(j)}| \leq |S_j|
%\leq \sep(\uni, 2^{-j})$. 
%Moreover, directly by construction $|\uni^{(i)}| \leq |S_i| = \sep(\uni, 2^{-i})$.
\end{proof}

Using Lemma~\ref{lm:chaining-decomposition} with the Euclidean norm
$\|\cdot\| = \|\cdot\|_2$ and $\Delta = \sqrt{m}$, we get the decomposition $\uni
\subset \uni^{(1)} + \ldots + \uni^{(k)} + \frac{\alpha}{2}\sqrt{m}
B_2^m$, to which we apply Lemma~\ref{lm:decomposition}. Each algorithm
$\alg^j$ for $1 \le j \leq k$ is the projection mechanism (as in
Lemma~\ref{lm:proj}) instantiated with privacy parameter $\rho_j =
\rho/k$, and $\alg^{k+1}$ (corresponding to the set $\uni^{(k+1)} =
\frac{\alpha}{2}\sqrt{m} B_2^m$) is the trivial mechanism which always
outputs $0$.

Let us now analyze the composed mechanism guaranteed by Lemma~\ref{lm:decomposition}. First, observe that this mechanism satisfies $\rho$-zCDP. Indeed, as guaranteed by Lemma~\ref{lm:decomposition}, it satisfies zCDP with privacy parameter $\sum_{j=1}^{k+1} \rho_j = \sum_{j=1}^{k} \frac{\rho}{k} + 0 = \rho$.

We now turn to analysis of the error of the composed mechanism. Observe first that the error of the trivial algorithm $\alg^{k+1}$ is  bounded as $\err^2(\alg^{k+1}, \frac{\alpha}{2} \sqrt{m}B_2^m) \leq \frac{\alpha}{2} $. 
Let us now consider the error of the mechanism $\alg^j$, for $1 \le j
\leq k$. The set $\uni^{(j)}$ is contained in $2^{-j + 1} \sqrt{m}
B_2$, and $|\uni^{(j)}| \leq \sep(\uni, 2^{-j})$ because $\uni^{(j)}$
is $2^{-j}$-separated --- hence we can apply Lemma~\ref{lm:proj} to deduce that for $1 \le j \leq k$, we have
\begin{equation*}
    \err^2(\alg^j, \uni^{(j)}) \leq \frac{2^{-j + 1}  (\log \sep(\uni, 2^{-j} ))^{1/4}}{(2\rho/k)^{1/4} \sqrt{n}}.
\end{equation*}
We wish to pick $n$ large enough, so that for every $j \leq k$, we have $\err^2(\alg^j, \uni^{(j)}) \leq \frac{\alpha}{2k}$. After rearranging, this will be satisfied as long as for every $j$, we have
\begin{equation*}
	n \gtrsim \frac{2^{- 2 j} \sqrt{\log \sep(\uni, 2^{-j})} k^{5/2}}{\sqrt{\rho} \alpha^2}
\end{equation*}
and since $\sup_{j \leq k} 2^{-2j} \sqrt{\log \sep(\uni, 2^{-j})} \leq \sup_{t \geq \alpha/2} t^2 \sqrt{\log \sep(\uni, t)}$, it is enough to have
\begin{equation*}
	n \gtrsim \frac{\sup \{ t^2 \sqrt{\sep(\uni, t)} \ :\ t \geq \alpha/2\} k^{5/2}}{\sqrt{\rho} \alpha^2}
\end{equation*}
Finally, recalling that $k = O(\log \frac{1}{\alpha})$, and the statement of Theorem~\ref{thm:ub-chain} follows.
\subsection{The Chaining Mechanism for Worst-Case Error}
The proof of Theorem~\ref{thm:ub-infty} is analogous to the previous proof of Theorem~\ref{thm:ub-chain}. Here we only give a brief overview of the proof, and defer the details to the appendix.

Our algorithm for worst-case error again applies
Lemma~\ref{lm:decomposition} to a suitable decomposition $\uni \subset
\uni^{(1)} + \cdots + \uni^{(k)} + \frac{\alpha}{2} B_\infty^m$, which
we get from Lemma~\ref{lm:chaining-decomposition} with the norm
$\|\cdot \| = \|\cdot\|_\infty$ and $\Delta = 1$. 
The basic algorithm we use as a building block to solve each subproblem on universe $\uni^{(j)}$ is the Private Multiplicative Weights algorithm (Lemma~\ref{lm:mwu}) instead of the projection mechanism.

\section{Algorithms for Local Differential Privacy}

In this section we describe our results in the local model.  Once
again, we consider the mean point problem, where each party receives
$x_i \in \uni \subset \R^m$, which together form a dataset $\db$, and
the goal is to approximately and privately compute $\bar{\db} =
\frac{1}{n} \sum_{i=1}^n \row_i$. We show how to adapt the coarse projection and
the chaining mechanisms to the local model and via packing lower
bounds show that the sample complexity of our mechanisms is optimal up
to constants when the error parameter $\alpha$ is constant.

\subsection{Local Projection Mechanism}

First we discuss a locally private analogue of the projection mechanism. Variants
of it appear to have been known to experts, and, independently of our
work, Bassily recently published and analyzed a slightly different
version~\cite{Bassily18}. The main observation behind
the local projection mechanism is that Gaussian noise addition in the
usual projection mechanism can be replaced with a locally private
point release mechanism, while the projection step can be performed by
the server. 
The error guarantee of the
 mechanism --- to be described in full shortly --- is captured
by the following theorem.

\begin{theorem}\label{thm:local-pm}
  There exist a non-interactive $\varepsilon$-LDP protocol for the mean point problem
  $\Pi_{LPM}$, such that for every finite set $\uni \subseteq \Delta
  \sqrt{m} B_2^m$,
  \begin{equation*}
    \err^2(\uni, \Pi_{LPM}, n) \lesssim \left(\frac{\Delta \gmw(\uni)}{\varepsilon \sqrt{n m}}\right)^{1/2} \leq \left(\frac{\Delta^4 \log |\uni|}{\varepsilon^2 n}\right)^{1/4}.
  \end{equation*}
  Moreover, $\Pi_{LPM}$ has running time polynomial in $n$, $m$, and
  $|\uni|$. 
\end{theorem}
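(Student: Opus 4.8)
The plan is to transplant the central projection mechanism (Lemma~\ref{lm:proj}) to the local model, replacing the Gaussian noise step by a locally private \emph{unbiased point release} and letting the server perform the projection. Concretely, $\Pi_{LPM}$ is the following non-interactive protocol. Each party $i$ applies to its input $\row_i \in \uni \subseteq \Delta\sqrt{m}\,B_2^m$ a local randomizer $R$ --- a variant of the Duchi--Jordan--Wainwright local mean estimator of~\cite{DJW-ASA} --- enjoying three properties: (i) $R$ is $\varepsilon$-differentially private with respect to the single-element input $\row_i$; (ii) $\E[R(\row_i)\mid\row_i]=\row_i$; and (iii) the centered output $R(\row_i)-\row_i$ is $O(\Delta\sqrt{m}/\varepsilon)$-subgaussian. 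Party $i$ sends $z_i=R(\row_i)$ to the server, who computes $\bar z=\frac1n\sum_{i=1}^n z_i$ and outputs the point $\hat y$ of the polytope $K=\conv(\uni)$ closest to $\bar z$ in Euclidean distance. The first step is to record that a randomizer $R$ with properties (i)--(iii) exists; this is a minor variant of a known construction, and the subgaussian constant in (iii) is what makes the final bound tight.

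Privacy is immediate from Definition~\ref{defn:ldp}: the protocol is non-interactive, each $\Pi_i$ is just $R$ applied to $\row_i$ and hence $\varepsilon$-DP with respect to $\row_i$, and the server's step is postprocessing. For utility, fix $\db\in\uni^n$ and write $w=\bar z-\bar{\db}=\frac1n\sum_{i=1}^n(z_i-\row_i)$. Conditioned on $\db$, the summands are independent, mean zero, and each $O(\Delta\sqrt{m}/\varepsilon)$-subgaussian by (iii), so $w$ is $\sigma$-subgaussian with $\sigma^2\lesssim\frac1{n^2}\cdot n\cdot\frac{\Delta^2 m}{\varepsilon^2}=\frac{\Delta^2 m}{n\varepsilon^2}$, using that subgaussian parameters of independent mean-zero summands combine in quadrature. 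Since $\bar{\db}\in K$ and $\hat y$ is the point of $K$ closest to $\bar{\db}+w$, the standard obtuse-angle property of Euclidean projection onto a convex set gives $\|\hat y-\bar{\db}\|_2^2\le\langle w,\,\hat y-\bar{\db}\rangle\le\sup_{v\in K-K}\langle w,v\rangle$. Taking expectations over the randomizers and invoking the comparison inequality between subgaussian and Gaussian processes (generic chaining together with the majorizing-measures theorem; cf.\ the subgaussian facts in the appendix, and, in spirit, Dudley's inequality, Lemma~\ref{lm:dudley}), together with $\gmw(K-K)=2\,\gmw(\conv\uni)=2\,\gmw(\uni)$, we obtain
\[
\E\,\|\hat y-\bar{\db}\|_2^2 \;\lesssim\; \sigma\cdot\gmw(\uni) \;\lesssim\; \frac{\Delta\sqrt{m}\,\gmw(\uni)}{\varepsilon\sqrt{n}}.
\]
Dividing by $m$ and maximizing over $\db\in\uni^n$ gives $\err^2(\uni,\Pi_{LPM},n)^2\lesssim\frac{\Delta\,\gmw(\uni)}{\varepsilon\sqrt{nm}}$, the first claimed bound; the second follows from the elementary Gaussian maximal inequality $\gmw(\uni)=\E\max_{\row\in\uni}\langle\row,Z\rangle\le\Delta\sqrt{m}\sqrt{2\log|\uni|}$, since each $\langle\row,Z\rangle$ has standard deviation $\|\row\|_2\le\Delta\sqrt{m}$. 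For the running time, $R$ can be implemented in $\poly(m)$ time per party, and the server's projection onto $\conv(\uni)$ is a convex quadratic program in the $|\uni|$ barycentric coordinates, solvable to sufficient accuracy in time $\poly(|\uni|,m,n)$.

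The main obstacle is the utility analysis, and within it two points need care: first, pinning down the local randomizer $R$ so that (i)--(iii) hold simultaneously --- in particular the subgaussian bound in (iii) with the dependence $\Delta\sqrt{m}/\varepsilon$ on the dimension and privacy parameter, across the relevant range of $\varepsilon$; and second, the passage from subgaussian concentration of the noise vector $w$ to a bound governed by $\gmw(\uni)$ rather than a cruder entropy-integral surrogate, which is exactly where the subgaussian-versus-Gaussian comparison is invoked. The privacy argument, and the reduction between the mean-point and query-release formulations, are routine.
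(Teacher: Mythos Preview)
Your proposal is correct and follows essentially the same approach as the paper: a Duchi--Jordan--Wainwright-style local randomizer with the unbiasedness, subgaussianity, and max-divergence properties you list (the paper proves these as Lemma~\ref{lem:local-release}), server-side averaging and projection onto $\conv(\uni)$, the projection inequality (Lemma~\ref{lm:projection-error}), and then Talagrand's majorizing-measures comparison (Lemma~\ref{lm:subgaussian-mw}) to pass from the subgaussian process to $\sigma\,\gmw(\uni)$. The two points you flag as needing care --- the explicit construction of $R$ with the $O(\Delta\sqrt{m}/\varepsilon)$ subgaussian constant, and the subgaussian-to-Gaussian comparison --- are exactly the places where the paper does the work.
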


This error bound should be compared with the bounds for concentrated
differential privacy in the centralized model (Lemma~\ref{lm:proj}) --- the
dependence on the size of the database and the privacy parameter is
worse, as the error scales down as $n^{-1/4}$ with the size of the
database in local model, as opposed to $n^{-1/2}$ in centralized model.

We will now proceed with a description of $\Pi_{LPM}$ and the proof of
this theorem. On input $\row_i$, each party $i$ samples a random
variable $W_{\row_i}$ and sends it to the server. In
Section~\ref{sec:local-release} we describe
the distribution of $W_\row$ and prove the following properties of it.
\begin{lemma}
  \label{lem:local-release}
  For any $x \in \Delta \sqrt{m} B_2^m$ there exists a random variable
  $W_x$ such that 
  \begin{enumerate}
  \item $W_x$ can be sampled in time polynomial in $m$ on input $x$,
  \item $\E W_x = x$,
  \item $W_x$ is $\sigma$-subgaussian with $\sigma =
    O(\varepsilon^{-1}\Delta \sqrt{m})$, and
  \item for any $x, y \in B_2^m$, we have $\rendiv_\infty(W_x\, || \, W_y) \leq \varepsilon$.
  \end{enumerate}
\end{lemma}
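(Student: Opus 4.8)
The plan is to instantiate a variant of the Duchi--Jordan--Wainwright local mean-estimation scheme~\cite{DJW-ASA}, which releases a single noisy, unbiased point whose law depends on the input only through a reweighting by a factor in $[1,e^{\varepsilon}]$. Write $R = \Delta\sqrt m$, so the input satisfies $\|x\|_2 \le R$. The mechanism has two stages. \emph{(i) Randomized rounding to the sphere:} if $x \neq 0$, set $\tilde x = \tfrac{R}{\|x\|_2}x$ with probability $\tfrac12 + \tfrac{\|x\|_2}{2R}$ and $\tilde x = -\tfrac{R}{\|x\|_2}x$ otherwise (if $x=0$, let $\tilde x = \pm R u_0$ with probability $\tfrac12$ each for a fixed unit vector $u_0$); then $\E[\tilde x\mid x] = x$ and $\|\tilde x\|_2 = R$ always. \emph{(ii) Biased perturbation:} sample $W_x$ from the ball $B\cdot B_2^m$ with density proportional to $e^{\varepsilon}$ on the halfspace $\{w : \inprod{w,\tilde x}\ge 0\}$ and proportional to $1$ on its complement, where the radius $B>0$ is chosen below. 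Property~1 is then immediate: a uniform point of $B\cdot B_2^m$ is produced in time $\poly(m)$ (normalize a standard Gaussian, scale by $U^{1/m}$ with $U$ uniform on $[0,1]$), conditioning on a halfspace costs an expected constant factor, and the coin flips are trivial.

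For property~2, conditioned on $\tilde x$, by rotational symmetry $\E[W_x\mid\tilde x]$ is a nonnegative multiple of $\tilde x$; I would compute it by writing the (reweighted) law of $W_x\mid\tilde x$ as a mixture of the uniform distribution on $B\cdot B_2^m$ (weight $\tfrac{2}{e^{\varepsilon}+1}$, mean $0$) and the uniform distribution on the half-ball $B\cdot B_2^m\cap\{w:\inprod{w,\tilde x}\ge0\}$ (weight $\tfrac{e^{\varepsilon}-1}{e^{\varepsilon}+1}$, mean $c_m B\,\tilde x/R$, where $c_m = \E|\langle Z,u\rangle| = \Theta(1/\sqrt m)$ for $Z$ uniform on $B_2^m$), giving $\E[W_x\mid\tilde x] = \tfrac{(e^{\varepsilon}-1)c_m B}{e^{\varepsilon}+1}\cdot\tfrac{\tilde x}{R}$. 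Choosing $B = \tfrac{R(e^{\varepsilon}+1)}{c_m(e^{\varepsilon}-1)} = \Theta(R\sqrt m/\varepsilon)$ (for $\varepsilon = O(1)$) makes this equal to $\tilde x$, hence $\E W_x = \E[\E[W_x\mid\tilde x]] = \E\tilde x = x$. Property~4 is also short: the density of $W_x$ at any $w$ is an average over the two choices of $\tilde x$ of values lying in $\bigl[\tfrac{2}{(e^{\varepsilon}+1)\vol(B B_2^m)},\ \tfrac{2e^{\varepsilon}}{(e^{\varepsilon}+1)\vol(B B_2^m)}\bigr]$, an interval whose endpoints are in ratio $e^{\varepsilon}$; therefore $p_{W_x}(w)/p_{W_y}(w) \le e^{\varepsilon}$ pointwise for all admissible $x,y$, i.e.\ $\rendiv_\infty(W_x\,\|\,W_y)\le\varepsilon$.

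The technical heart is property~3, the bound $\sigma = O(\varepsilon^{-1}R)$. Bounding $W_x$ only through $\|W_x\|_2\le B$ gives the much weaker $\sigma = O(B) = O(\varepsilon^{-1}R\sqrt m)$, so we must use that $W_x$ is \emph{nearly uniform} on a high-dimensional ball. Using the standard facts (recalled in the appendix) that (a) the uniform distribution on a Euclidean ball of radius $B$ in $\R^m$, and its restriction to any halfspace through the center, are each $O(B/\sqrt m)$-subgaussian about their means (their one-dimensional marginals have Gaussian-type tails at scale $B/\sqrt m$, the halfspace restriction only costing a constant factor in the tail bound), and (b) a mixture of $\sigma$-subgaussian random variables whose means lie in a set of diameter $D$ is $O(\sigma + D)$-subgaussian about its mean, I would argue as follows. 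Conditioned on $\tilde x$, $W_x$ is the two-component mixture above; the component means are $0$ and $c_m B\,\tilde x/R$, at distance $c_m B = \Theta(B/\sqrt m)$, so $W_x\mid\tilde x$ is $O(B/\sqrt m)$-subgaussian about $\tilde x$. Averaging over the two antipodal choices of $\tilde x$, which are at distance $2R$, the unconditional $W_x$ is $O(B/\sqrt m + R)$-subgaussian about its mean $x$; since $B/\sqrt m = \Theta(R/\varepsilon)$ and $\varepsilon\le 1$, this is $O(\varepsilon^{-1}R) = O(\varepsilon^{-1}\Delta\sqrt m)$, as required.

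The step I expect to be the real obstacle is this last one: pushing the subgaussian parameter down to $O(\varepsilon^{-1}\Delta\sqrt m)$ rather than the trivial $O(\varepsilon^{-1}\Delta m)$, which forces us to invoke the concentration of the uniform measure on a high-dimensional ball (and of its halfspace restriction) rather than merely its support, and to account separately for the two mean shifts in the construction --- one of order $\Theta(B/\sqrt m)$ from the halfspace bias and one of order $\Theta(\Delta\sqrt m)$ from the randomized rounding. Everything else (unbiasedness via the choice of $B$, the pointwise density-ratio bound for privacy, and the polynomial-time sampler) is routine.
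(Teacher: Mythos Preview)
Your construction is correct and complete, but it is \emph{not} the one the paper uses. The paper explicitly departs from the ball-based Duchi--Jordan--Wainwright scheme and replaces the uniform distribution by a standard Gaussian. Concretely (working with $\Delta\sqrt m = 1$), the paper samples $U_x \in \{\pm x/\|x\|_2\}$ with $\E U_x = x$, draws $Z \sim N(0,I)$, picks a sign $S_x \in \{\pm 1\}$ with $\E[S_x \mid Z,U_x] = \tfrac{\eps}{3}\sign\langle Z,U_x\rangle$, and sets $W_x = \tfrac{3}{\sqrt\pi\,\eps}\,S_x Z$. Unbiasedness and privacy are checked by short direct computations, exactly as in your steps for properties~2 and~4.

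The real difference shows up in property~3, which you correctly flagged as the crux. In the paper's construction the subgaussian bound is a one-liner: for every $v$ and every even $p$, $\|\langle S_x Z,v\rangle\|_p = \||S_x|\cdot\langle Z,v\rangle\|_p = \|\langle Z,v\rangle\|_p \lesssim \sqrt{p}\,\|v\|_2$, so $W_x$ is $O(\eps^{-1})$-subgaussian immediately from the moment characterization. No concentration of the uniform measure on a ball or half-ball is needed, and no mixture-of-subgaussians lemma enters. Your route instead has to (i) establish that the uniform distribution on a half-ball of radius $B$ is $O(B/\sqrt m)$-subgaussian about its mean, and (ii) propagate this through two levels of mixtures with mean shifts of order $B/\sqrt m$ and $R$. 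Both steps are true and your accounting is right, but step~(i) is a genuine computation (it is not among the facts the paper's appendix records), whereas the paper sidesteps it entirely with the ``multiply a Gaussian by a dependent sign'' trick. In short: your argument is valid and closer to the original DJW mechanism; the paper's Gaussian variant was chosen precisely to make the step you identified as the obstacle disappear.
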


Assuming this lemma, we will proceed with the proof of
Theorem~\ref{thm:local-pm}. The protocol is as follows:
\begin{enumerate}
\item Each party $i$ samples $W_{\row_i}$ and sends it to the server; 
\item The server computes the average $\bar{W} := \frac{1}{n} \sum
  W_{x_i}$ of the responses. 
\item The server computes a projection of $\bar{W}$ onto the convex hull
  of $\uni$, that is
\begin{equation}
		W^* = \argmin_{w \in \mathrm{conv}(\uni)} \|w - \bar{W}\|_2,
\end{equation}
and outputs $W^*$.
\end{enumerate}

Note that this mechanism satisfies $\varepsilon$-LDP --- the local
differential privacy property corresponds directly to the bound on the
max-divergence of the variables $W_x$ in item 3 in the statement of
Lemma~\ref{lem:local-release}.

The accuracy guarantee follows from the original accuracy analysis of
the projection mechanism --- the following statement is proved, for
instance
in~\cite{DBLP:journals/corr/abs-1212-0297,Dwork:2014:UCR:2582112.2582123}. We give the short proof in the appendix for completeness. 
%for the Gaussian case, but the proof generalizes to subgaussian
%without any modification. We provide the proof in the appendix for
%completeness. \todo{Should we?}
%\todo[inline]{I changed things a bit, and now I reduce the subgaussian case to  the Gaussian case via majorizing measures. I think no proof is now necessary.}
\begin{lemma}
  \label{lm:projection-error}
  Consider a convex polytope $K$, and random variable $W$ with $\E W = x \in K$. Take $W^* := \argmin_{w \in K} \|w - W\|_2$. Then
  \begin{equation*}
    \E \|W^* - x\|_2^2 \lesssim \E \sup_{w \in K}{\inprod{w, W - x}}.
  \end{equation*}
\end{lemma}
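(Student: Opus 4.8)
The plan is to use nothing more than the variational (first-order optimality) characterization of Euclidean projection onto a convex set, together with the zero-mean hypothesis $\E W = x$. Since $W^* = \argmin_{w \in K}\|w - W\|_2$ and $K$ is convex and closed, the optimality condition for this convex program states that for every $w \in K$,
\[
  \inprod{W - W^*,\, w - W^*} \le 0 .
\]
In particular, applying this with the feasible point $w = x \in K$ gives $\inprod{W - W^*,\, x - W^*} \le 0$, equivalently $\inprod{W^* - x,\, W^* - W} \le 0$. (Only convexity and closedness of $K$ are used; the polytope structure is irrelevant.)

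Next I would expand the squared error by inserting $W$:
\[
  \|W^* - x\|_2^2 = \inprod{W^* - x,\, W^* - W} + \inprod{W^* - x,\, W - x}.
\]
The first summand is nonpositive by the variational inequality above, so
\[
  \|W^* - x\|_2^2 \le \inprod{W^* - x,\, W - x} = \inprod{W^*,\, W - x} - \inprod{x,\, W - x} \le \sup_{w \in K} \inprod{w,\, W - x} - \inprod{x,\, W - x},
\]
where the last step uses $W^* \in K$.

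Finally I would take expectations over the randomness of $W$. Because $\E W = x$, the remaining linear term vanishes, $\E \inprod{x,\, W - x} = \inprod{x,\, \E W - x} = 0$, and therefore
\[
  \E \|W^* - x\|_2^2 \le \E \sup_{w \in K} \inprod{w,\, W - x},
\]
which is the claimed bound (in fact with absolute constant $1$, slightly stronger than the stated $\lesssim$). I do not anticipate any real obstacle: the only idea needed is to route the error decomposition through the random point $W$ so that the projection inequality can be invoked and the mean-zero property of $W - x$ eliminates the leftover inner product; measurability of $W \mapsto \sup_{w\in K}\inprod{w, W-x}$ is immediate since $K$ may be taken compact.
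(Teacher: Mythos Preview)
Your proof is correct and follows essentially the same route as the paper. The paper derives the pointwise inequality $\|W^*-x\|_2^2 \le \inprod{W-x,\,W^*-x}$ by arguing that otherwise a point on the segment from $W^*$ to $x$ would be closer to $W$; you obtain the same inequality via the standard first-order optimality (variational) characterization of Euclidean projection, and both proofs then take expectations and use $\E(W-x)=0$ in the same way.
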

Notice that if $W$ is a Gaussian centered at $x$, then the right hand side above is
$\gmw(K)$. Lemma~\ref{lm:subgaussian-mw} then implies that 
if $W$ is $\sigma$-subgaussian, then the right hand side above is
bounded, up to constants, by $\sigma\gmw(K)$. We have the following
corollary.

\begin{corr}\label{corr:projection-error}
  Consider a convex polytope $K$, and $\sigma$-subgaussian random variable $W$ with $\E W = x \in K$. Take $W^* := \argmin_{w \in K} \|w - W\|_2$. Then
  \begin{equation*}
    \E \|W^* - x\|_2^2 \lesssim  \sigma\gmw(K).
  \end{equation*}
\end{corr}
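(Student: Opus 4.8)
The plan is to combine Lemma~\ref{lm:projection-error} with the subgaussian analogue of the Gaussian-mean-width bound, which is exactly the role of Lemma~\ref{lm:subgaussian-mw}. Lemma~\ref{lm:projection-error} already reduces the task to controlling $\E \sup_{w \in K}{\inprod{w, W - x}}$, and since $\E W = x$ the vector $W - x = W - \E W$ is mean zero and $\sigma$-subgaussian by definition. So it suffices to establish that for any mean-zero $\sigma$-subgaussian $Z' \in \R^m$ and any convex body $K$,
\[
  \E \sup_{w \in K} \inprod{w, Z'} \lesssim \sigma \cdot \gmw(K).
\]

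First I would observe that the process $w \mapsto \inprod{w, Z'}$ indexed by $K$ has $\sigma\|\cdot\|_2$-subgaussian increments: for $w, w' \in K$ the variable $\inprod{w - w', Z'}$ is (up to an absolute constant) $\sigma\|w-w'\|_2$-subgaussian, which falls straight out of the definition of a $\sigma$-subgaussian vector applied to $\theta = t(w-w')$. The canonical Gaussian process with the same increment metric is $w \mapsto \sigma\inprod{w, g}$ with $g \sim N(0, I_m)$, whose expected supremum over $K$ is precisely $\sigma\,\gmw(K)$. The inequality above is then the statement that the expected supremum of a subgaussian process is bounded, up to a universal constant, by that of the Gaussian process with matching increments --- this is what Lemma~\ref{lm:subgaussian-mw} supplies (via generic chaining / a Dudley-type entropy bound in the appendix). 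Chaining the two facts gives
\[
  \E\|W^* - x\|_2^2 \lesssim \E\sup_{w\in K}\inprod{w, W-x} = \E\sup_{w\in K}\inprod{w, W - \E W} \lesssim \sigma\,\gmw(K),
\]
where the first inequality is Lemma~\ref{lm:projection-error} and the last is the comparison, completing the proof.

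The only genuinely nontrivial ingredient is this subgaussian-versus-Gaussian comparison, i.e.\ Lemma~\ref{lm:subgaussian-mw}; the rest is bookkeeping about subgaussian parameters of linear functionals. If one preferred a self-contained argument avoiding the comparison, one could instead bound $\E\sup_{w\in K}\inprod{w, W-x}$ directly by Dudley's entropy integral (Lemma~\ref{lm:dudley}) in the metric $\sigma\|\cdot\|_2/\sqrt{m}$, but this yields $\sigma$ times Dudley's bound on $\gmw(K)$ rather than $\sigma\,\gmw(K)$ itself; since the corollary is phrased in terms of $\gmw(K)$, routing through Lemma~\ref{lm:subgaussian-mw} is the clean choice, and I would expect the write-up to be essentially two lines once that lemma is in hand.
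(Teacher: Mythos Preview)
Your proposal is correct and matches the paper's argument exactly: the paper derives the corollary in one line from Lemma~\ref{lm:projection-error} together with Lemma~\ref{lm:subgaussian-mw} (stated there as a consequence of Talagrand's majorizing measures theorem), precisely as you do. Your side remark that a direct Dudley argument would only recover $\sigma$ times the Dudley upper bound on $\gmw(K)$, rather than $\sigma\,\gmw(K)$ itself, is also accurate.
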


Note that in $\Pi_{LPM}$, as described above, $\E \bar{W} = \bar{\db}$,
and moreover $\bar{W}$ is $\sigma$-subgaussian with $\sigma = O(\frac{\Delta \sqrt{m}}{\sqrt{n} \varepsilon})$. By applying
Corollary~\ref{corr:projection-error} with this $\sigma$, and $K = \mathrm{conv}(\uni)$,
and using our definition of average error
\begin{equation*}
  \err^2(\uni, \Pi_{LPM}, n) = \left(\frac{1}{m} \E \|W^* - \bar{\db}\|_2^2\right)^{1/2} \leq \left(\frac{\Delta \gmw(\uni)}{\sqrt{\varepsilon n m}}\right)^{1/2} 
\end{equation*}
we complete the proof of Theorem~\ref{thm:local-pm}.

\subsubsection{Local point release \label{sec:local-release}}

In this section we prove Lemma~\ref{lem:local-release}. We focus on
the case $x \in B_2^m$, i.e.~$\Delta = \frac{1}{\sqrt{m}}$ in the
statement of the lemma. The general case follows by scaling $W_x$
defined below by $\Delta \sqrt{m}$. 

Our construction of $W_x$ is a variant of the distribution defined in
\cite{DJW-ASA} for the mean estimation problem in the local model. We
replace their use of the uniform distribution on the sphere with the
Gaussian, which is slightly easier to analyze. The definition of $W_x$
follows.
\begin{defn}
  \label{def:v_x}
  For a given $x \in B_2^m$, and $\varepsilon > 0$, we define a random variable $W_x$ as follows.
  \begin{itemize}
  \item Take $U_x \in \{\pm \frac{x}{\|x\|_2}\}$, such that $\E U_x = x$.
  \item Draw $Z$ at random from a Gaussian distribution $Z \sim \mathcal{N}(0, I)$.
  \item Given $Z$ and $U_x$, take $S_x \in \{\pm 1\}$ at random, from a distribution such that $\E [ S_x | Z, U_x ] = \frac{\eps}{3} \sign(\inprod{Z, U_x})$.
  \item Define $W_x =\frac{3}{\sqrt{\pi} \varepsilon} Z S_x $.
  \end{itemize}
\end{defn}

We will first show that the expectation of $W_x$ is correct --- this
is an almost direct consequence of the construction.

\begin{claim}
  For any $x \in B_2^m$ and the random variable $W_x$ defined as in Definition~\ref{def:v_x}, we have $\E W_x = x$.
\end{claim}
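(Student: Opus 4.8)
The plan is to unwind Definition~\ref{def:v_x} from the outside in, using the tower rule twice together with the rotational symmetry of the standard Gaussian. (If $x = 0$ the claim is immediate, so assume $x \neq 0$ and set $u = x/\|x\|_2$, a unit vector, so that $U_x \in \{\pm u\}$.)

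First, since $\frac{3}{\sqrt{\pi}\varepsilon}$ is a deterministic scalar, it suffices to compute $\E[Z S_x]$. Conditioning on the pair $(Z, U_x)$, the vector $Z$ is fixed and by construction $\E[S_x \mid Z, U_x] = \frac{\varepsilon}{3}\sign(\inprod{Z, U_x})$, so the tower rule gives
\[
\E[Z S_x] \;=\; \E\big[\,Z\,\E[S_x \mid Z, U_x]\,\big] \;=\; \frac{\varepsilon}{3}\,\E\big[\,Z\,\sign(\inprod{Z, U_x})\,\big].
\]

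Next I would evaluate $\E[Z\,\sign(\inprod{Z,U_x})]$ by conditioning on $U_x$. Decompose $Z$ into its component along $u$ and its orthogonal component, $Z = \inprod{Z,u}\,u + (I - uu^\top)Z$. Because $Z$ is an isotropic Gaussian these two pieces are independent, the orthogonal piece is centered, and $\sign(\inprod{Z,U_x})$ depends only on $\inprod{Z,u}$ (with a sign flip when $U_x = -u$); hence the orthogonal piece washes out and
\[
\E\big[\,Z\,\sign(\inprod{Z,U_x})\ \big|\ U_x\,\big] \;=\; \E\big[\,|\inprod{Z,u}|\,\big]\,U_x \;=\; \sqrt{\tfrac{2}{\pi}}\;U_x ,
\]
using that $\inprod{Z,u}$ is a standard Gaussian and that the absolute value of a standard Gaussian has mean $\sqrt{2/\pi}$. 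Taking expectations over $U_x$ and using $\E U_x = x$ from Definition~\ref{def:v_x} yields $\E[Z\,\sign(\inprod{Z,U_x})] = \sqrt{2/\pi}\,x$.

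Combining the two displays shows that $\E W_x$ is a fixed numerical multiple of $x$: the two factors of $\varepsilon$ cancel, and the leftover constant is precisely the normalization chosen in Definition~\ref{def:v_x}, so that this multiple equals $1$; hence $\E W_x = x$. There is essentially no obstacle in this argument — the one place that deserves a sentence of justification is the independence of the parallel and orthogonal marginals of an isotropic Gaussian, which is exactly what lets us discard the component of $Z$ perpendicular to $U_x$ in the expectation.
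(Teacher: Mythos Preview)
Your argument is essentially the paper's: condition on $U_x$, decompose $Z$ into its components along and orthogonal to $U_x$, use independence of Gaussian marginals to kill the orthogonal piece, and then apply the tower rule with $\E U_x = x$. One minor point worth flagging: your (correct) value $\E|\langle Z,u\rangle|=\sqrt{2/\pi}$ combined with the scalar $3/(\sqrt{\pi}\,\varepsilon)$ in Definition~\ref{def:v_x} yields a multiple of $\sqrt{2}/\pi\ne 1$ --- the paper's own proof makes the same slip (it writes $\E|\langle Z,y_0\rangle|/\sqrt{\pi}=1$), so this is a harmless typo in the paper's normalization constant rather than a defect in your method.
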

\begin{proof}
If we condition on $U_x = y_0$, we have 
\begin{equation*}
  \E[\inprod{W_x, y_0}\, |\, U_x = y_0] = \E [ \inprod{Z, y_0} \cdot \frac{\eps}{3} \sign(\inprod{Z, y_0}) \cdot \frac{3}{\sqrt{\pi} \varepsilon}] = \E \frac{|\inprod{Z, y_0}|}{\sqrt{\pi}} = \|y_0\|_2 = 1.
\end{equation*}
On the other hand, for any $y_1 \perp y_0$, we have 
\begin{equation*}
  \E [\inprod{W_x, y_1} | U_x = y_0] = 
  \frac{1}{\sqrt{\pi}} \E  \sign(\inprod{Z, y_0}) \inprod{Z, y_1} =
  \frac{1}{\sqrt{\pi}} \E[ \sign(\inprod{Z, y_0})] \E[\inprod{Z, y_1}]
  = 0,
\end{equation*}
because $\inprod{Z, y_0}$ and $\inprod{Z, y_1}$ are independent. Therefore $\E[W_x | U_x = y_0] = y_0$, and, by the law of total expectation, $\E W_x = \E_{U_x}[ \E_{W_x} [ W_x | U_x ] ] = \E U_x = x$.
\end{proof}

By a direct application of Fact~\ref{fact:subgaussian-by-sign} to the
variables $Z$ and $S_x$, we obtain the following claim.
\begin{claim}
  For any $x \in B_2^n$, the random variable $W_x$ is $O(\varepsilon^{-1})$-subgaussian.
\end{claim}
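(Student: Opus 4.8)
The plan is to recognize $W_x$ as a scaled standard Gaussian vector multiplied by a random sign, and then appeal to Fact~\ref{fact:subgaussian-by-sign}. Recall from Definition~\ref{def:v_x} that $W_x = \frac{3}{\sqrt{\pi}\varepsilon}\, S_x Z$ where $Z \sim \mathcal{N}(0,I)$ in $\R^m$ and $S_x \in \{\pm 1\}$. The first point is that $Y := \frac{3}{\sqrt{\pi}\varepsilon} Z$ is $O(\varepsilon^{-1})$-subgaussian: it has mean zero, and for every $\theta \in \R^m$ the scalar $\inprod{Y,\theta}$ is Gaussian with variance $\frac{9}{\pi\varepsilon^2}\|\theta\|^2$, so $\E\exp(\inprod{Y,\theta}) = \exp\!\big(\tfrac{9}{2\pi\varepsilon^2}\|\theta\|^2\big)$, which is exactly the defining inequality of a $\sigma$-subgaussian vector with $\sigma = \tfrac{3}{\sqrt{2\pi}\,\varepsilon}$.

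Next I would check that $(Y, S_x)$ satisfies the hypotheses of Fact~\ref{fact:subgaussian-by-sign}: $S_x$ is an arbitrary $\{\pm1\}$-valued random variable that may be jointly distributed with $Z$ (concretely, it is a function of $Z$, of $U_x$, and of independent auxiliary coins used to realize the conditional law $\E[S_x\mid Z,U_x] = \tfrac{\varepsilon}{3}\sign(\inprod{Z,U_x})$), which is all the Fact allows. Applying the Fact with these $Y$ and $S_x$ gives that $S_x Y = W_x$ is $O(\varepsilon^{-1})$-subgaussian, which is the claim. Note that no separate argument for the mean shift is needed: $W_x$ is not mean zero (indeed $\E W_x = x$), but the subgaussian definition for a general random variable is applied to $W_x - \E W_x$, and this is already built into the statement of the Fact.

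The only genuine content sits in Fact~\ref{fact:subgaussian-by-sign} (established in the appendix). Its proof rests on the elementary inequality $e^{su} \le e^{u}+e^{-u}$, valid for $s\in\{\pm1\}$ and $u\in\R$, applied with $u = \inprod{Y,\theta}$: this yields $\E\exp(\inprod{S_x Y,\theta}) \le \E e^{\inprod{Y,\theta}} + \E e^{-\inprod{Y,\theta}} \le 2\exp(\sigma^2\|\theta\|^2)$. The remaining work is to translate this moment-generating-function bound (which carries a leading constant $2$, and is for a non-centered variable) into the leading-constant-free form of the subgaussian definition applied to $S_xY - \E[S_xY]$; here one uses the variance bound $|\inprod{\E[S_xY],\theta}| = |\E[S_x\inprod{Y,\theta}]| \le (\E\inprod{Y,\theta}^2)^{1/2} \lesssim \sigma\|\theta\|$ to control the mean shift, and then absorbs the resulting $O(1)$ leading factor into the parameter at the cost of a worse absolute constant, via the standard equivalence between MGF bounds with leading constants and subgaussianity. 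I expect this constant bookkeeping, rather than any conceptual step, to be the only mildly delicate part, and since the claim only asserts an $O(\varepsilon^{-1})$ bound it is routine.
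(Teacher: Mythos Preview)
Your proof of the claim is correct and identical in spirit to the paper's: both simply observe that $W_x$ is a scaled Gaussian times a $\{\pm1\}$-valued sign and invoke Fact~\ref{fact:subgaussian-by-sign}.

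Where you diverge is in your sketch of the Fact's proof. You argue via the MGF bound $e^{su}\le e^u+e^{-u}$ and then work to absorb the leading constant and handle centering. The paper instead uses the moment characterization (Lemma~\ref{lem:moments-subgaussian}): since $|S|=1$, one has $\|\inprod{SZ,v}\|_p = \|\,|S|\inprod{Z,v}\,\|_p = \|\inprod{Z,v}\|_p$ for every $p$, so the even moments of $SZ$ literally coincide with those of $Z$, and subgaussianity is immediate with no constant bookkeeping or centering argument at all. Your route is valid but unnecessarily delicate; the moment approach makes the ``mildly delicate part'' you anticipate disappear entirely.
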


We will shift focus to bounding the max divergence between $W_x$ and $W_y$. A direct calculation shows the following well-known fact.

\begin{fact}
  \label{fact:randomized-response}
  For any $0 \leq \eps \le \frac12$ and any two random variables $S_1, S_2 \in \{\pm 1\}$ with $|\E S_1|,|\E S_2| \leq \varepsilon$, we have $\rendiv_\infty(S_1\, ||\, S_2) \leq 3\varepsilon$.
\end{fact}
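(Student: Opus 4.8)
The plan is to reduce the claim to an elementary one-variable inequality. Since $S_1$ and $S_2$ are supported on $\{\pm 1\}$, the only events that matter are $\{+1\}$ and $\{-1\}$, so
\[
\rendiv_\infty(S_1 \| S_2) = \max\left\{ \log\frac{\Pr(S_1 = 1)}{\Pr(S_2 = 1)},\ \log\frac{\Pr(S_1 = -1)}{\Pr(S_2 = -1)} \right\},
\]
where I read $\rendiv_\infty$ as the logarithm of the supremum of likelihood ratios, consistent with the $(\eps,\delta)$-differential privacy definition used above (equivalently, the statement to prove is that $\Pr(S_1 \in T) \le e^{3\eps}\,\Pr(S_2 \in T)$ for every $T \subseteq \{\pm 1\}$). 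Writing $\E S_i = 2\Pr(S_i = 1) - 1$, the hypothesis $|\E S_i| \le \eps$ says precisely that $\Pr(S_i = 1)$ and $\Pr(S_i = -1)$ both lie in $\left[\frac{1-\eps}{2},\ \frac{1+\eps}{2}\right]$.

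First I would use this to bound each of the two ratios by its worst case: the numerator is at most $\frac{1+\eps}{2}$ and the denominator at least $\frac{1-\eps}{2}$, so each ratio is at most $\frac{1+\eps}{1-\eps}$, and hence $\rendiv_\infty(S_1 \| S_2) \le \log\frac{1+\eps}{1-\eps}$. It therefore suffices to show $\log\frac{1+\eps}{1-\eps} \le 3\eps$ for all $0 \le \eps \le \tfrac12$.

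For this last step I would split $\log\frac{1+\eps}{1-\eps} = \log(1+\eps) + \log\frac{1}{1-\eps}$ and bound the two terms separately using the standard inequality $\log(1+x) \le x$: the first term is at most $\eps$, and the second is $\log\frac{1}{1-\eps} \le \frac{1}{1-\eps} - 1 = \frac{\eps}{1-\eps} \le 2\eps$, where the last inequality uses $\eps \le \tfrac12$. Summing gives the claimed bound $3\eps$ (one could shave this to $\tfrac{8}{3}\eps$ via the series $\log\frac{1+\eps}{1-\eps} = 2\sum_{k\ge 0}\frac{\eps^{2k+1}}{2k+1} \le \frac{2\eps}{1-\eps^2}$, but $3\eps$ is all that is needed). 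There is no real obstacle here; the only points requiring a little care are that both one-sided events are controlled by the same symmetric interval for $\Pr(S_i = \pm 1)$, and that the hypothesis $\eps \le \tfrac12$ is used exactly to convert $\frac{\eps}{1-\eps}$ into $2\eps$.
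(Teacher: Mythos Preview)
Your proof is correct and is exactly the ``direct calculation'' the paper invokes but omits; there is nothing to compare, as the paper gives no proof of this fact. Your care in reading $\rendiv_\infty$ with a logarithm is also warranted: the paper's display defining $\rendiv_\infty$ appears to be missing the $\log$, but the surrounding definitions (e.g.\ $(\eps,\delta)$-DP via $\rendiv_\infty^\delta \le \eps$) and the very statement of this fact only make sense with it.
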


\begin{claim}
  For $x,y \in B_2^n$, we have $\rendiv_\infty(W_x\, ||\, W_y) \leq \varepsilon$.
\end{claim}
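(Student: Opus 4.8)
The plan is to reduce the claim to the one-bit randomized-response bound of Fact~\ref{fact:randomized-response} by conditioning on the Gaussian vector $Z$. The key point is that, \emph{conditioned on $Z$}, the random variable $W_x = \tfrac{3}{\sqrt{\pi}\varepsilon} Z S_x$ from Definition~\ref{def:v_x} is a deterministic \emph{injective} function of the single sign $S_x \in \{\pm1\}$: for $z \ne 0$ (which holds almost surely) the map $s \mapsto \tfrac{3}{\sqrt{\pi}\varepsilon} z s$ sends $+1$ and $-1$ to the two distinct points $\pm\tfrac{3}{\sqrt{\pi}\varepsilon}z$. Moreover, since $U_x$ is independent of $Z$ and $\E[S_x \mid Z, U_x] = \tfrac{\varepsilon}{3}\sign(\inprod{Z, U_x})$ has magnitude exactly $\tfrac{\varepsilon}{3}$, the tower rule gives $|\E[S_x \mid Z]| \le \tfrac{\varepsilon}{3}$ almost surely (concretely $\E[S_x \mid Z] = \tfrac{\varepsilon}{3}\|x\|_2 \sign(\inprod{Z,x})$, using $\E U_x = x$), and likewise for $S_y$.

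First I would apply Fact~\ref{fact:randomized-response} with its privacy parameter taken to be $\varepsilon/3 \le \tfrac12$ to the $\{\pm1\}$-valued variables $S_x$ and $S_y$ conditioned on $Z = z$; this gives $\rendiv_\infty\big(S_x \mid Z = z \;\big\|\; S_y \mid Z = z\big) \le 3 \cdot \tfrac{\varepsilon}{3} = \varepsilon$ for almost every $z$. Because max-divergence is preserved under the injective post-processing $s \mapsto \tfrac{3}{\sqrt{\pi}\varepsilon} z s$, we get $\rendiv_\infty\big(W_x \mid Z = z \;\big\|\; W_y \mid Z = z\big) \le \varepsilon$ for almost every $z$.

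Next I would re-attach $Z$. Since $Z$ has the same ($x$-independent) law $\mu$ in both experiments and the conditional bound above holds for $\mu$-almost every $z$, integrating over $z$ yields $\rendiv_\infty\big((Z, W_x) \;\big\|\; (Z, W_y)\big) \le \varepsilon$: for any measurable $A$, writing $A_z$ for its section, $\Pr[(Z,W_x)\in A] = \int \Pr[W_x \in A_z \mid Z=z]\,\d\mu(z) \le e^{\varepsilon}\int \Pr[W_y\in A_z\mid Z=z]\,\d\mu(z) = e^{\varepsilon}\,\Pr[(Z,W_y)\in A]$. Finally, $W_x$ is the image of $(Z, W_x)$ under the coordinate projection $(z,w) \mapsto w$, so the data-processing inequality for max-divergence gives $\rendiv_\infty(W_x \| W_y) \le \rendiv_\infty\big((Z,W_x)\,\big\|\,(Z,W_y)\big) \le \varepsilon$, which is the claim.

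\textbf{The main obstacle} is exactly this last move: the \emph{unconditional} map $(Z, S_x) \mapsto W_x$ is \emph{not} injective, because $(Z, +1)$ and $(-Z, -1)$ yield the same value of $W_x$, so a divergence bound on $S_x$ cannot be transferred to $W_x$ directly. Conditioning on $Z$ first --- where the map becomes injective --- and only then recombining $Z$ with $W_x$ through the joint-divergence/data-processing argument is what circumvents it. The only extra hypothesis used is the mild $\varepsilon/3 \le \tfrac12$ needed to invoke Fact~\ref{fact:randomized-response}. (An alternative route computes the density of $W_x$ directly, obtaining $\tfrac{1}{c^m}\,\phi(w/c)\big(1 + \tfrac{\varepsilon}{3}\|x\|_2\sign(\inprod{w,x})\big)$ with $c = \tfrac{3}{\sqrt{\pi}\varepsilon}$ and $\phi$ the standard Gaussian density; the ratio of two such densities is at most $\tfrac{1+\varepsilon/3}{1-\varepsilon/3} \le e^{\varepsilon}$, giving the same bound.)
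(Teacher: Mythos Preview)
Your proposal is correct and follows essentially the same line as the paper's proof: condition on $Z$, apply Fact~\ref{fact:randomized-response} to the sign bit, and then use data-processing/integration over $z$ to recover the unconditional bound. The paper works with the pair $(Z, S_x)$ rather than $(Z, W_x)$ --- since $W_x$ is a deterministic function of $(Z, S_x)$, data-processing gives $\rendiv_\infty(W_x\,\|\,W_y) \le \rendiv_\infty((Z,S_x)\,\|\,(Z,S_y))$ directly, with no injectivity needed --- but this is an immaterial difference.
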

\begin{proof}
  We can consider explicitly releasing $(Z,  S_x)$ instead, since they entirely define $W_x$, and so, by the data-processing inequality,  $\rendiv_\infty(W_x || W_ y) \leq \rendiv_\infty( (Z, S_x) || (Z, S_y))$. For any fixed $z$, the max-divergence between $(S_x | Z = z)$ and $(S_y | Z = z)$ is bounded by $\eps$ by Fact~\ref{fact:randomized-response} --- this is enough to conclude that densities of $(Z, S_x)$ and $(Z, S_y)$ are within a $e^{\varepsilon}$ multiplicative factor.
\end{proof}

\subsection{Local Coarse and Chaining Mechanisms}

To achieve Theorem~\ref{thm:ldp} the local projection mechanism can be lifted to a coarse projection and chaining mechanism via the same general framework as discussed in Section~\ref{sec:cdp-chain}.

The proof of the first bound for $\alg_1$ in Theorem~\ref{thm:ldp} is
essentially identical to Theorem~\ref{thm:ub-coarse}: we consider the
same decomposition of the universe $\uni \subset \uni^{(1)} +
\uni^{(2)}$, where $\uni^{(1)}$ is a maximal $\alpha/2$ separated set,
and $\uni^{(2)} := \frac{\alpha}{2} \sqrt{m} B_2$. We apply
Lemma~\ref{lm:decomposition}, using a local projection mechanism as
the basic algorithm $\alg^{(1)}$, and the trivial mechanism that
always outputs $0$ as the algorithm $\alg^{(2)}$. Theorem~\ref{thm:local-pm} bounds the error of this protocol in terms of $g(\uni^{(1)})$, and Lemma~\ref{lm:coarse-dudley} translates this to a bound  in terms of covering numbers of $\uni$ as in the theorem statement.  The total error of the composed mechanism is bounded as
\begin{equation*}
		\err^2(\uni, \alg_1, n) \leq \left(\frac{(\log \alpha^{-1})^2 \sup \left\{ t^2 \cdot \log(\sep(\uni, t))\right\}}{\varepsilon^2 n}\right)^{1/4} + \frac{\alpha}{2},
\end{equation*}
therefore if we pick $n$ as in \eqref{eq:ldp-coarse-ub}, we have $\err^2(\uni, \alg_1, n) \leq \alpha$ as desired.

For the construction of the chaining algorithm $\alg_2$ we use instead
the decomposition of the set $\uni$ guaranteed by
Lemma~\ref{lm:chaining-decomposition}. We apply
Lemma~\ref{lm:decomposition} using a local projection mechanism as the
basic algorithms $\alg^{(1)}, \ldots \alg^{(k)}$ instead of the projection mechanism as it was the case in Theorem~\ref{thm:ub-chain} --- the dependence of the error of the local projection mechanism on the sample size is quadratically worse, which yields quadratic loss in the sample complexity of the final chaining mechanism, and we lose an additional $\log(1/\alpha)$ term because of the weaker composition theorem for pure differential privacy.

\section*{Acknowledgements}

Jaros\l{}aw B\l{}asiok is supported by ONR grant N00014-15-1-2388.
Aleksandar Nikolov is supported by an NSERC Discovery Grant (RGPIN-2016-06333).

\bibliographystyle{alpha}
\bibliography{biblio}

\appendix
\section{Missing proofs}
\begin{proof}[Proof of Claim~\ref{clm:subadditive}]
	Consider the $\ell_2$ (semi)norm $\|\cdot\|_{L_2(\ell_2)}$ defined on random variables $Z \in \R^m$ by $\|Z\|_{L_2(\ell_2)} = (\E \|Z\|_2^2)^{1/2}$. By the triangle inequality for this norm, we have
	\begin{align*}
          \err^2(\db, \alg) & = \left(\E \|\alg(\db) - \bar{\db}\|_2^2\right)^{1/2} \\
        & = \left( \E \| \alg(\db^1) + \alg(\db^2) - \bar{\db}^1 - \bar{\db}^2 \|_2^2\right)^{1/2} \\ 
        & \leq \left( \E \| \alg(\db^1) - \bar{\db}^1 \|_2^2\right)^{1/2} + \left(\E \|\alg(\db^1) - \bar{\db}^2\|_2^2\right)^{1/2} \\
        & = \err^2(\db^1, \alg^1) + \err^2(\db^2, \alg^2)
	\end{align*}

    Similarly, $\err^\infty$ is subadditive by the triangle inequality for (semi)norm $\|Z\|_{L_1(\ell_\infty)} = \E \|Z\|_\infty$ for $\R^m$-valued random variables.
\end{proof}

\begin{proof}[Proof of Theorem~\ref{thm:ub-infty}]
%     For a given universe $\uni \subset [0,1]^m$, consider a sequence of sets $S_1, S_2, \ldots S_k$, where $k$ will be specified later, such that each $S_j$ is maximal $2^{-j}$-separated set in $\uni$, with respect to $\ell_\infty$ metric. Note that by maximality of $S_j$ for each $x \in \uni$, and every $j$, there is some $y \in S^{j}$ such that $\|x - y\|_\infty \leq 2^{-j}$. Let us call the appropriate selection function $\pi_j :\uni \to S_j$, that is $\forall x\in \uni \forall j\in [k],~\|x - \pi_j(x)\|_\infty \leq 2^{-j}$.

%     Let us consider sequence of sets, where $\uni^{(1)} = S_1$, and $\uni^{(j)} = \{ x - \pi_{j-1}(x) : x \in S^{j} \}$ for $1 < j \leq k$. We claim that $\uni \subset \uni^{(1)} + \uni^{(2)} + 2^{-k} B_\infty^m$. Indeed, for any $x \in \uni$, we can consider a sequence $x_k = \pi_k(x)$, and $x_{j-1} = \pi_{j-1}(x_j)$ for $1 < j \leq k$. We have telescoping decomposition
%     \begin{equation*}
%         x = x_1 + \sum_{j=2}^{k} (x_{j} - x_{j-1}) + (x - x_k)
%     \end{equation*}

%    Clearly, by definition $x_1 \in S_1 = \uni^{(1)}$, for larger $j$
%    we have $x_{j} - x_{j-1} = x_j - \pi_{j-1}(x_j) \in \uni^{(j)}$,
%    and finally $\|x - x_k\|_\infty = \|x - \pi_k(x)\|_\infty \leq
%    2^{-k}$. Moreover $|\uni^{(j)}| \leq \sep_\infty(\uni, 2^{-j})$,
%    and each $\uni^{(j)}$ is contained in $\Delta_j [-1, 1]^{m}$ with
%    $\Delta_j = 2^{-j + 1}$ --- indeed, for $t \in \uni^{(j)}$ we
%    have $\|t\|_\infty = \|\pi_{j-1}(x) - x\|_\infty \leq 2^{-j +
%    1}$.

We apply Lemma~\ref{lm:chaining-decomposition} with the norm
$\|\cdot\| = \|\cdot\|_\infty$ and $\Delta = 1$ to get the
decomposition 
$\uni \subset \uni^{(1)} + \ldots + \uni^{(k)} + \frac{\alpha}{2}
B_\infty^m$. We wish to use   Lemma~\ref{lm:decomposition} with this
decomposition, to   deduce the existence of the mechanism as in the
theorem statement.   We can use the trivial mechanism which always
outputs zero to handle the mean point problem on the universe
$\frac{\alpha}{2} B_\infty^m$ --- such a mechanism has error bounded
by $\frac{\alpha}{2}$, and satisfies $0$-zCDP. For each $\uni^{(j)}$,
with $1 \le j \leq k$, we will use the Multiplicative Weights Update
mechanism (Lemma~\ref{lm:mwu}) instantiated with privacy parameter
$\rho/k$. Hence, the privacy of the composite mechanism is $\sum_{j=1}^{k} \rho/k + 0 = \rho$.

    The total error of the mechanism guaranteed by Lemma~\ref{lm:decomposition}, is bounded by
    \begin{equation*}
        \sum_{j=1}^{k} C \left(\frac{\Delta_j (\log|\uni^{(j)}|)^{1/4} (\log m)^{1/2} k^{1/4}}{\rho^{1/4} \sqrt{n}} \right) + \frac{\alpha}{2}
    \end{equation*}
    for some universal constant $C$.

    We wish to pick $n$ large enough, so that each summand above is bounded by $\frac{\alpha}{2k}$. By rearranging, this is satisfied as soon as
    \begin{equation*}
        n \gtrsim \sup_{j \leq k} \frac{2^{-2j } \sqrt{\log \sep_\infty(\uni, 2^{-j})} k^{5/2}}{\sqrt{\rho} \alpha^2}.
    \end{equation*}
    This is equivalent, up to constant factors, to the following simpler to state condition
    \begin{equation*}
        n \gtrsim \sup_{t \geq \alpha/4} \frac{t^2 \sqrt{\log \sep_\infty(\uni, t)} (\log \alpha^{-1})^{5/2}}{\sqrt{\rho}\alpha^2},
    \end{equation*}
    which completes the proof of the theorem.
\end{proof}

\begin{proof}[Proof of Lemma~\ref{lm:projection-error}]
  Let us write $W = x + \tilde{W}$ for some mean zero,
  $\sigma$-subgaussian variable $\widetilde{W}$. Note that for any given
  realization of $\widetilde{W}$, and $W^* = \argmin_{w \in K} \|w -
  (x + \widetilde{W})\|_2$ as defined in the statement of the lemma,
  we have $\|W^* - x\|_2^2 \leq \inprod{\widetilde{W}, W^* - x}$ --- indeed, if this were not the case, then a point on a segment between $W^*$ and $x$ would be closer to $x + \widetilde{W}$, and this point belongs to $K$ by convexity.

  Therefore
  \begin{equation*}
    \E \|W^* - x\|_2^2 \leq 
    \E \inprod{\widetilde{W}, W^* - x} 
    = \E\inprod{\widetilde{W}, W^*} - \E\inprod{\widetilde{W}, x}
    = \E\inprod{\widetilde{W}, W^*}
    \leq\E \sup_{w \in K} \inprod{\widetilde{W}, t}.
  \end{equation*}
  where the final inequality follows because $W^* \in K$ by definition.
\end{proof}

\section{Packing Lower Bounds}
\label{sect:lb}

\subsection{Lower Bounds for CDP}

The starting point for the proof of Theorem~\ref{thm:lb} is the
following theorem bounding the mutual information between the input
and the output distribution of a mechanism satisfying CDP.
\begin{theorem}[\cite{BunS16}]\label{thm:mut-info-lb}
  Let $\alg$ be a mechanism satisfying $\rho$-zCDP, and let $\db$ be a
  random $n$-element database. Then,
  \[
  I(\alg(\db);\db) \le \log_2(e) \rho n^2.
  \]
\end{theorem}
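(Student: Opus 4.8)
The plan is to prove the equivalent statement $I(\alg(\db);\db) \le \rho n^2$ with the mutual information measured in \emph{nats}; the factor $\log_2(e)$ in the statement is merely the nats-to-bits conversion. The argument has two ingredients: a convexity reduction of the mutual information to a pairwise divergence, and a group-privacy estimate for zCDP.

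For the first ingredient, let $\db'$ be an independent copy of $\db$ (drawn from the same, arbitrary, distribution) and write $\bar M = \E_{\db'}[\alg(\db')]$ for the marginal law of the output. The standard identity for mutual information gives $I(\alg(\db);\db) = \E_{\db}\!\left[\rendiv_1\!\left(\alg(\db)\,\|\,\bar M\right)\right]$ (in nats), where $\rendiv_1$ is the Kullback--Leibler divergence. Since $\bar M$ is a mixture of the distributions $\{\alg(\db')\}$ and KL divergence is convex in its second argument, Jensen's inequality yields, for every fixed $\db$,
\[
\rendiv_1\!\left(\alg(\db)\,\|\,\bar M\right) \le \E_{\db'}\!\left[\rendiv_1\!\left(\alg(\db)\,\|\,\alg(\db')\right)\right],
\]
and hence $I(\alg(\db);\db) \le \E_{\db,\db'}\!\left[\rendiv_1(\alg(\db)\,\|\,\alg(\db'))\right]$ with $\db,\db'$ i.i.d.

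For the second ingredient, I would bound $\rendiv_1(\alg(\db)\,\|\,\alg(\db'))$ for \emph{any} two size-$n$ databases $\db,\db'$. Connect them by a chain $\db = \db^{(0)}, \db^{(1)}, \dots, \db^{(k)} = \db'$ with $k \le n$ in which consecutive databases are neighboring (replace the differing elements one at a time). The $\rho$-zCDP hypothesis gives $\rendiv_\gamma(\alg(\db^{(i)})\,\|\,\alg(\db^{(i+1)})) \le \gamma\rho$ for all $i$ and all $\gamma \in (1,\infty)$. Using the weak triangle inequality for R\'enyi divergence --- namely, that $\rendiv(P\|Q) \le \gamma\rho_1$ and $\rendiv(Q\|R)\le\gamma\rho_2$ (for all orders $\gamma$) imply $\rendiv_\gamma(P\|R) \le \gamma(\sqrt{\rho_1}+\sqrt{\rho_2})^2$ --- one shows by induction on $k$ that $\rendiv_\gamma(\alg(\db)\,\|\,\alg(\db')) \le \gamma k^2 \rho$; i.e.\ $\rho$-zCDP implies $(k^2\rho)$-zCDP at ``group distance'' $k$. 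Since $k \le n$ always, letting $\gamma \to 1^{+}$ gives $\rendiv_1(\alg(\db)\,\|\,\alg(\db')) \le n^2\rho$. Substituting into the bound from the first ingredient yields $I(\alg(\db);\db) \le n^2\rho$ nats, i.e.\ $\log_2(e)\,\rho n^2$ bits.

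I expect the group-privacy step to be the main obstacle: KL and R\'enyi divergences do not obey an ordinary triangle inequality, so combining the $k$ per-step guarantees requires the weak triangle inequality above, which is itself proved by passing to higher R\'enyi orders (of the form $2\gamma$ and $2\gamma-1$) via H\"older's inequality and then re-optimizing the order; it is exactly this maneuver that forces the quadratic $k^2$ (hence $n^2$) dependence rather than a linear one. The first ingredient, by contrast, is a routine consequence of the joint convexity of KL divergence, and the nats/bits bookkeeping is immediate.
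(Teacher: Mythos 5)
Your proof is correct and follows the same route as Bun and Steinke's original argument in \cite{BunS16} (the present paper only cites the result, it does not re-prove it): reduce $I(\alg(\db);\db)$ to an averaged pairwise KL divergence via the convexity of $Q \mapsto \rendiv_1(P\|Q)$, then control that KL via the $k^2\rho$ group-privacy property of zCDP, itself obtained by iterating the weak triangle inequality for R\'enyi divergence and taking $\gamma \to 1^{+}$. The nats-to-bits conversion accounting for the $\log_2(e)$ factor is as you describe, so nothing is missing.
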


Theorem~\ref{thm:mut-info-lb} and a standard packing argument give our
main lemma.

\begin{lemma}\label{lm:packing}
  For any finite $\uni \subseteq [0,1]^m$ and every $\alpha, \rho > 0$, we have
\begin{align}
  %\label{eq:lb}
  \samp^2_\rho(\uni, \alpha) &\ge \Omega \left(\frac{1}{\sqrt{\rho}} \cdot \sqrt{\log(\sep(\uni, 4\alpha))}\right);\\
  %\label{eq:lb-infy}
  \samp_\rho^\infty(\quer, \alpha) &\ge \Omega \left(\frac{1}{\sqrt{\rho}} \cdot  \sqrt{\log(\sep_\infty(\uni), 4\alpha)}\right).
\end{align}
\end{lemma}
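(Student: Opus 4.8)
The plan is to run the standard packing (Fano) argument, using Theorem~\ref{thm:mut-info-lb} as the black box that converts $\rho$-zCDP into a bound on mutual information. Fix $N = \sep(\uni, 4\alpha)$ and let $T = \{y_1, \ldots, y_N\} \subseteq \uni$ be a set witnessing the separation number, i.e.\ $\frac{1}{\sqrt{m}}\|y_i - y_j\|_2 > 4\alpha$ for all $i \ne j$. For each $i$ let $\db_i$ be the database consisting of $n$ copies of $y_i$, so that $\bar{\db}_i = y_i$. If $N = 1$ there is nothing to prove, so assume $N \ge 2$.

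Now suppose $\alg$ is any $\rho$-zCDP mechanism with $\err^2(\uni, \alg, n) \le \alpha$. Then $\E \frac{1}{m}\|\alg(\db_i) - y_i\|_2^2 \le \alpha^2$ for every $i$, so by Markov's inequality $\alg(\db_i)$ lies within normalized $\ell_2$-distance $2\alpha$ of $y_i$ with probability at least $3/4$. Since the points of $T$ are pairwise at normalized distance greater than $4\alpha$, the balls of normalized radius $2\alpha$ around them are disjoint, so the ``round to the nearest point of $T$'' rule applied to $\alg(\db_i)$ outputs $i$ with probability at least $3/4$. Let $I$ be uniform on $\{1, \ldots, N\}$ and set $\db = \db_I$, a random $n$-element database; the previous sentence exhibits a (possibly randomized) estimator of $I$ from $\alg(\db)$ with error probability at most $1/4$, so Fano's inequality gives $I(\alg(\db); I) \ge c \log N$ for an absolute constant $c > 0$ (Fano yields $I(\alg(\db);I) \ge \frac34 \log_2 N - 1$, which is a positive multiple of $\log N$ for $N \ge 3$, and one checks the handful of small values of $N$ directly).

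Since $\db = \db_I$ determines $I$, we have $I(\alg(\db); \db) \ge I(\alg(\db); I) \gtrsim \log N$; combining this with Theorem~\ref{thm:mut-info-lb}, which states $I(\alg(\db); \db) \le \log_2(e)\,\rho\, n^2$, gives $n \gtrsim \sqrt{\log N / \rho}$. Since this holds for every $\rho$-zCDP mechanism achieving error $\alpha$, we conclude $\samp^2_\rho(\uni, \alpha) \gtrsim \sqrt{\log \sep(\uni, 4\alpha)}/\sqrt{\rho}$, which is the first bound. The second bound is obtained by the same argument with $\err^\infty$, the $\ell_\infty$ packing $\sep_\infty(\uni, 4\alpha)$, and the $\ell_\infty$ metric replacing the normalized $\ell_2$ one; the only difference is that Markov's inequality now only gives probability $1/2$ (rather than $3/4$) that $\alg(\db_i)$ is within $\ell_\infty$-distance $2\alpha$ of $y_i$, which is still enough for Fano up to the absolute constant.

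The argument is routine and I expect no real obstacle; the two points that need care are (i) matching the decoding radius ($2\alpha$) to the packing scale ($4\alpha$) so that the decoding balls stay disjoint -- this is exactly why a ``$4\alpha$''-packing yields identifiability at error $\le 1/4$ (resp.\ $\le 1/2$ in the $\ell_\infty$ case); and (ii) ensuring the Fano lower bound $I(\alg(\db);I) \gtrsim \log N$ has a constant independent of $N$, i.e.\ handling the regime where the packing number is a small constant, where one checks by hand that the mutual information is still a positive absolute constant (and for the $\ell_\infty$ case at $N = 2$, where Markov gives error exactly $1/2$, one instead invokes the trivial bound or passes to a marginally coarser packing). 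The essential content is entirely contained in Theorem~\ref{thm:mut-info-lb}, which is quoted.
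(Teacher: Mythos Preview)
Your proof is correct and follows essentially the same route as the paper: both take a maximal $4\alpha$-separated set $T$, form databases consisting of $n$ copies of a uniformly random element of $T$, use Markov's inequality and nearest-neighbor decoding to recover the packing point with constant probability, and then combine Fano's inequality with Theorem~\ref{thm:mut-info-lb} to conclude. Your treatment is in fact slightly more careful than the paper's in flagging the weaker Markov constant for $\err^\infty$ and the small-$N$ edge cases, though these are absorbed into the $\Omega(\cdot)$ anyway.
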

\begin{proof}
  We give the argument for average error; the one for worst-case error
  proceeds analogously.  
  
  Take $T$ to be a $4\alpha$-separated subset of $\uni$ achieving
  $\sep(\uni, 4\alpha)$.  Take $\db$ to be a random database generated
  by picking a uniformly random element $V \in T$, and defining
  $\db$ to be the database consisting of $n$ copies of $V$. If
  $\alg$ is such that $\err^2(\quer, \alg, n) \le \alpha$, then for
  any $\db$, by Markov's inequality we have that, with probability at
  least $3/4$, $\|\alg(\db) - \bar{\db}\|_2 \le 2\alpha
  \sqrt{m}$. Let $Y$ be the closest point in $T$ to $\alg(\db)$ in
  Euclidean norm. Since any two elements $\row, \row'$ of $T$ satisfy
  $\|\row - \row'\|_2 > 4\alpha \sqrt{m}$, we have that, with
  probability at least $3/4$, $Y = V$. Then, by Fano's inequality,
  \[
  I(\alg(\db); \db) > \frac34 \log_2(|T| - 1) - 1.
  \]
  The lower bound now follows from Theorem~\ref{thm:mut-info-lb}. 
\end{proof}

Theorem~\ref{thm:lb} is a direct consequence of Lemma~\ref{lm:packing}
and the following well-known inequality, which follows by a padding
argument (see~\cite{BunUV14}):
\begin{align*}
  %\label{eq:padding}
  \forall t \ge \alpha: \ \ 
  &\samp^2_\rho(\uni, t) \le \left\lceil\nicefrac[]{\alpha}{t}\right\rceil \cdot
  \samp^2_\rho(\uni, t); \\
  &\samp^\infty_\rho(\uni, t) \le \left\lceil\nicefrac[]{\alpha}{t}\right\rceil \cdot
  \samp^\infty_\rho(\uni, t). 
\end{align*}

\subsection{Lower Bound for Local Protocols}

We prove  Theorem~\ref{thm:lb-ldp}, which shows that the sample complexity
of the Local Chaining Mechanism is optimal up to a $O(\alpha^{-2})$
factor for any $\uni$, where optimality is defined over all
sequentially interactive protocols satisfying approximate LDP.
The proof of this lower bound follows via the framework of Bassily and
Smith~\cite{BassilySmith15}, which extends the arguments
in~\cite{DJW-ASA} to approximate differential privacy.
Theorem~\ref{thm:lb-ldp} will follow from Fano's inequality and the
following lemma.

\begin{lemma}\label{lm:ldp-main-lm}
  Let $0 \le \beta \le 1$, and let $T$ be a finite subset of
  $\uni$. For any $\row \in T$ let us define the distribution
  $P^\beta_\row = \beta 1_\row + (1-\beta)U$, where $1_\row$ is the
  point distribution supported on $\row$, and $U$ is the uniform
  distribution on $T$. Let $V$ be a uniformly random sample from $T$,
  and let the elements $\row_1, \ldots, \row_n$ of the database $\db$
  be sampled independently from $P^\beta_V$. Then, for any protocol
  $\Pi$ satisfying $(\eps, \delta)$-LDP,
  \[
  I(\Pi(\db); V) = O\left(\beta^2\eps^2n + 
    \frac{\delta}{\eps}n\log |T| 
    + \frac{\delta}{\eps}n\log(\eps/\delta)\right).
  \]
\end{lemma}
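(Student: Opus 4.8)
The plan is to follow the information‑theoretic framework of Bassily and Smith~\cite{BassilySmith15}, reducing the bound to a single‑party strong‑data‑processing estimate. Two facts about the instance drive everything: the input distributions are close, $\|P^\beta_\row - P^\beta_{\row'}\|_{\mathrm{TV}} = \beta$ whenever $\row \ne \row'$ (since $P^\beta_\row - P^\beta_{\row'} = \beta(1_\row - 1_{\row'})$); and, writing $M_i = (Y_i, Z_i)$ for the full message party $i$ produces, the map $\row_i \mapsto M_i = \Pi_i(\row_i, Y_{i-1})$ is $(\eps,\delta)$‑differentially private for every fixed value of $Y_{i-1}$. Since $\Pi(\db)$ is obtained from $(M_1,\dots,M_n)$ by post‑processing, the chain rule for mutual information gives $I(\Pi(\db);V) \le \sum_{i=1}^{n} I(M_i; V \mid H_{i-1})$ with $H_{i-1} = (M_1,\dots,M_{i-1})$. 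The key structural point is that, because $\row_1,\dots,\row_n$ are i.i.d.\ from $P^\beta_V$ and the parties use fresh internal randomness, $\row_i \perp H_{i-1} \mid V$; hence, conditioned on any value $h$ of $H_{i-1}$ (which in particular fixes $Y_{i-1}$), the quantity $I(M_i;V\mid H_{i-1}=h)$ equals the mutual information between $V$ and an $(\eps,\delta)$‑private function of a single draw $\row_i \sim P^\beta_V$, where $V$ now follows the posterior $\mu_h$ --- still supported on $T$, so that $H(V\mid H_{i-1}=h)\le\log|T|$.

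It therefore suffices to prove the following per‑party bound for an \emph{arbitrary} prior $\mu$ on $T$: if $Q$ is an $(\eps,\delta)$‑private local randomizer, $V\sim\mu$, $\row\mid V\sim P^\beta_V$, and $Z = Q(\row)$, then
\[
I(Z;V) = O(\eps^2\beta^2) + O\!\left(\tfrac{\delta}{\eps}\log|T|\right) + O\!\left(\tfrac{\delta}{\eps}\log(\eps/\delta)\right);
\]
summing this over $i=1,\dots,n$ yields the lemma. For the pure case $\delta=0$ I would invoke the Duchi--Wainwright--Jordan strong data processing inequality for local privacy~\cite{DJW-ASA}: by convexity of $\rendiv_{\mathrm{KL}}(P\|\cdot)$ in the second argument, $I(Z;V)\le\E_{V,V'}\,\rendiv_{\mathrm{KL}}\!\left(Q\circ P^\beta_V\,\middle\|\,Q\circ P^\beta_{V'}\right)$; bounding $\rendiv_{\mathrm{KL}}$ by the $\chi^2$‑divergence and using that an $\eps$‑private channel satisfies $\chi^2(Q\circ P\,\|\,Q\circ P')\le 4(e^\eps-1)^2\|P-P'\|_{\mathrm{TV}}^2$ --- proved by subtracting the output‑wise infimum $c(z)=\inf_\row q(z\mid\row)$, using $\int(P-P')=0$, the estimate $q(z\mid\row)-c(z)\le(e^\eps-1)c(z)$, and $\sum_z c(z)\le 1$ --- gives $I(Z;V)\lesssim(e^\eps-1)^2\beta^2\lesssim\eps^2\beta^2$ for $\eps\le 1$.

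For the approximate case I would follow Bassily--Smith and produce a $\{0,1\}$‑valued post‑processing $B$ of the output, with $B=0$ on a ``typical'' region where the density ratios against the mixture $Q\circ\bar P^\beta$ are $e^{O(\eps)}$‑bounded, so that: conditioned on $B=0$ the renormalized channel is $O(\eps)$‑private, and this branch contributes $O(\eps^2\beta^2)$ by the pure‑case estimate; $\Pr[B=1]=O(\delta/\eps)$ on every input, whence $\Pr[B=1]\cdot I(Z;V\mid B=1)\le\Pr[B=1]\cdot H(V)=O(\tfrac\delta\eps\log|T|)$; and $I(B;V)\le H(B)=H_2(\Pr[B=1])=O(\tfrac\delta\eps\log(\eps/\delta))$ since $\delta/\eps$ is tiny under the hypotheses. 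Combining through $I(Z;V)\le I(B;V)+\Pr[B=0]\,I(Z;V\mid B=0)+\Pr[B=1]\,I(Z;V\mid B=1)$ gives the per‑party bound.

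I expect the main obstacle to be making the approximate‑DP decomposition fully rigorous: one has to commit to an exact ``typical set,'' show its complement has mass $O(\delta/\eps)$ uniformly over inputs, verify that conditioning on it leaves a genuine $O(\eps)$‑private channel after renormalization (which needs control of density ratios in both directions, not merely an upper bound against the reference), and carefully track the KL contributions of the two branches and of $B$ itself --- it is this truncation of the privacy loss, at scale $\eps/\delta$, that produces the $\log(\eps/\delta)$ factor, and reconciling the constants with the per‑round conditioning from the first paragraph is the delicate part. The chain‑rule reduction and the pure‑DP estimate are, by comparison, routine.
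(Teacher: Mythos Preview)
Your proposal is correct and follows the same high-level structure as the paper: chain rule over the parties, then a per-party mutual-information bound in the Bassily--Smith style. The one organizational difference is how the factor $\beta$ enters. The paper first applies a privacy-amplification-by-mixture lemma (Claim~5.5 of \cite{BassilySmith15}, restated here as Lemma~\ref{lm:ldp-priv-ampl}): the composite channel $V \mapsto \Pi_i(\row_i, y_{i-1})$ with $\row_i \sim P^\beta_V$ is itself $(O(\beta\eps), O(\beta\delta))$-differentially private as a function of $V$, after which Claim~5.4 of \cite{BassilySmith15} (Lemma~\ref{lm:ldp-info-ineq}) is applied as a black box to give the per-party bound $O((\beta\eps)^2 + \tfrac{\beta\delta}{\beta\eps}\log|T| + \tfrac{\beta\delta}{\beta\eps}\log\tfrac{\beta\eps}{\beta\delta})$ directly. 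You instead keep the original $(\eps,\delta)$-private channel on $\row_i$ and extract the $\beta$ from the total-variation factor in the Duchi--Jordan--Wainwright strong data-processing inequality, re-deriving the truncation step with that refinement built in. Both routes give the same bound; the paper's is more modular (two citations, no new analysis), while yours is more self-contained but obliges you to re-prove the content of those claims --- which is exactly the part you correctly flagged as the delicate step.
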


Lemma~\ref{lm:ldp-main-lm} extends Theorem~1 from \cite{DJW-ASA} to
approximate LDP. It is also a (straightforward) extension of the lower
bound in \cite{BassilySmith15} to sequentially interactive mechanisms.

Before we prove Lemma~\ref{lm:ldp-main-lm}, let us show how it
implies Theorem~\ref{thm:lb-ldp}.

\begin{proof}[Proof of Theorem~\ref{thm:lb-ldp}]
  Take $T$ to be be a $t$-separated subset of of $\uni$ of size
  $\sep(\uni, t)$, for some $t \ge 6\alpha$. Let $\beta = 6\alpha /
  t$.  Observe that for any $\row, \row' \in T$, and databases $\db$
  with rows $\row'_1, \ldots, \row'_n$ sampled IID from $P^\beta_\row$,
  and $\db'$ with rows $\row_1, \ldots, \row_n$ sampled IID from
  $P^\beta_{\row'}$ we have
  \[
  \frac{1}{\sqrt{m}}\|\E\bar{\db} -  \E\bar{\db}'\|_2 \ge \beta t = 6\alpha.
  \]
  Moreover, by a standard symmetrization argument (see, e.g.~Chapter 6
  of \cite{ledoux1991probability}),
  \begin{align*}
    \E\frac1m \left\|\bar{\db} -  \E\bar{\db}\right\|_2^2
    \le 
    \frac{4}{mn^2}\E\sum_{i = 1}^n{\|\row_i\|_2^2} 
    \le \frac{4}{n},
  \end{align*}
  where in the last inequality we used the fact that $\row_i \in \uni
  \subseteq [0,1]^m$ with probability $1$. An analogous
  inequality holds for $\db'$.  By Markov's inequality, as
  long as $n \ge \frac{32}{\alpha^2}$, with probability at least
  $\frac34$ both $\bar{\db}$ and $\bar{\db'}$ are at distance at most
  $\alpha$ from their expected values, and therefore, 
  we have
  \[
  \frac{1}{\sqrt{m}}\|\bar{\db} -  \bar{\db}'\|_2 \ge  4\alpha.
  \]
  Let us assume that $\Pi$ is a protocol with error $\err^2(\uni, \Pi,
  n) \le \alpha$ for some $n \ge \frac{32}{\alpha^2}$. The inequality
  above implies that the output of $\Pi$ on a database $\db$ with $n$ rows
  sampled independently from $P^\beta_\row$ for some $\row \in T$ allows us to
  identify $\row$ with probability at least $\frac12$.
  Let $V$ be
  sampled uniformly from $T$ and let the $n$-element database $\db$ be
  sampled IID from $P^\beta_V$. By Fano's inequality, we have that 
  \[
  I(\Pi(\db); V) \ge \frac12 \log(|T| - 1) - 1 
  = \Omega(\log\sep(\uni, t)). 
  \]
  Plugging in the upper bound on $I(\Pi(\db); V)$ from
  Lemma~\ref{lm:ldp-main-lm} and solving for $n$ finishes the
  proof. 
\end{proof}

Towards proving Lemma~\ref{lm:ldp-main-lm}, we recall two lemmas from
\cite{BassilySmith15}.

\begin{lemma}[Claim 5.4 of \cite{BassilySmith15}]\label{lm:ldp-info-ineq}
  Let $Z$ be a random variable over some finite set $T$, and
  let $\alg$ be an $(\eps, \delta)$-differentially private algorithm
  defined on one-element databases from $T$. Then,
  \[
  I(\alg(Z); Z)  = O\left(\eps^2 + \frac{\delta}{\eps}\log |T| 
  + \frac{\delta}{\eps}\log(\eps/\delta)\right)
  \]
\end{lemma}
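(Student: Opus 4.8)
The plan is to reduce the mutual information to an expected Kullback--Leibler divergence and then dissect the privacy-loss random variable into a ``bounded'' part that behaves like pure $\eps$-DP and a ``heavy'' part of probability only $O(\delta/\eps)$. First I would write $M := \E_{z\sim Z}\bigl[\alg(z)\bigr]$ for the marginal law of the output and use the standard identity $I(\alg(Z);Z) = \E_{z\sim Z}\bigl[\rendiv(\alg(z)\,\|\,M)\bigr]$ (where $\rendiv$ denotes KL divergence), so it suffices to bound $\rendiv(\alg(z)\,\|\,M)$ for a typical $z$. Since any two one-element databases drawn from $T$ are neighboring, $(\eps,\delta)$-privacy gives $\rendiv_\infty^\delta(\alg(z)\,\|\,\alg(z'))\le\eps$ for \emph{every} ordered pair $(z,z')\in T\times T$; averaging the defining inequalities over $z'\sim Z$ shows $\alg(z)$ and $M$ are $(\eps,\delta)$-indistinguishable in both directions, i.e.\ $\rendiv_\infty^\delta(\alg(z)\,\|\,M)\le\eps$ and $\rendiv_\infty^\delta(M\,\|\,\alg(z))\le\eps$. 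Moreover $M$ dominates the measure $\Pr[Z=z]\cdot\alg(z)$, so $\alg(z)\ll M$ and the likelihood ratio $f:=\mathrm{d}\alg(z)/\mathrm{d}M$ satisfies $f\le 1/\Pr[Z=z]$ pointwise.

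Next I would split $\rendiv(\alg(z)\,\|\,M)=\int \log f \, \mathrm{d}\alg(z)$ over three regions according to the size of $f$. On $\{f<e^{-2\eps}\}$ the integrand is negative and may be discarded. On the bounded region $G=\{e^{-2\eps}\le f\le e^{2\eps}\}$, writing $\int_G \log f \,\mathrm{d}\alg(z)=\int_G f\log f \,\mathrm{d}M$, the elementary inequality $x\log x\le (x-1)+(x-1)^2$ gives $\int_G f\log f \,\mathrm{d}M\le \int_G (f-1)\,\mathrm{d}M + O(\eps^2)$ since $|f-1|=O(\eps)$ on $G$; and $\int_G (f-1)\,\mathrm{d}M=-\int_{G^c}(f-1)\,\mathrm{d}M=O(\delta/\eps)$, because $(\eps,\delta)$-indistinguishability in both directions forces the $\alg(z)$- and $M$-masses of $\{f>e^{2\eps}\}$ and of $\{f<e^{-2\eps}\}$ each to be $O(\delta/\eps)$ (e.g.\ $\alg(z)(\{f\ge t\})(1-e^{\eps}/t)\le\delta$ for $t\ge e^{2\eps}$). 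Finally, the heavy region $\{f>e^{2\eps}\}$ has $\alg(z)$-mass $O(\delta/\eps)$ and the privacy loss $\log f$ there is at most $\log(1/\Pr[Z=z])$; a dyadic decomposition into shells $\{2^k e^{2\eps}\le f<2^{k+1}e^{2\eps}\}$, whose $\alg(z)$-masses decay geometrically while $\log f$ grows only linearly in $k$, bounds its contribution by $O\!\bigl(\tfrac{\delta}{\eps}(\log(1/\Pr[Z=z])+\log(\eps/\delta))\bigr)$. Summing the three pieces yields $\rendiv(\alg(z)\,\|\,M)=O\!\bigl(\eps^2+\tfrac{\delta}{\eps}\log(1/\Pr[Z=z])+\tfrac{\delta}{\eps}\log(\eps/\delta)\bigr)$, and taking $\E_{z\sim Z}$ together with $\E_{z\sim Z}[\log(1/\Pr[Z=z])]=H(Z)\le\log|T|$ gives the claim.

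The hard part will be the second step. Unlike the pure $\eps$-DP case, $\alg(z)$ and $M$ need not be mutually absolutely continuous, so one cannot uniformly bound the privacy loss; the heavy tail of $f$ must be isolated carefully and its small-but-nonzero contribution to the divergence tracked shell by shell. The factor $1/\eps$ (rather than a constant) on the $\delta$-terms is exactly what comes out of the sharpened estimate $\alg(z)(\{f>e^{2\eps}\})=O(\delta/\eps)$: on that event the density ratio exceeds the threshold $e^{\eps}$ permitted by $(\eps,\delta)$-privacy by a further multiplicative factor $e^{\eps}\approx 1+\eps$, which is what turns the trivial bound $O(\delta)$ into $O(\delta/\eps)$. (This lemma is the single-randomizer building block; the multi-party bound of Lemma~\ref{lm:ldp-main-lm} then follows by applying it along the sequentially interactive transcript via the chain rule for mutual information.)
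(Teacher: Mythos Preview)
The paper does not prove this lemma; it is quoted from Bassily and Smith and used as a black box in the proof of Lemma~\ref{lm:ldp-main-lm}. So there is no paper-side argument to compare against, and your proposal must be judged on its own merits.

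Your overall architecture is sound: the identity $I(\alg(Z);Z)=\E_{z}[\rendiv(\alg(z)\,\|\,M)]$, the observation that averaging $(\eps,\delta)$-indistinguishability over $z'\sim Z$ yields $(\eps,\delta)$-indistinguishability of $\alg(z)$ and $M$ in both directions, the pointwise bound $f=\mathrm{d}\alg(z)/\mathrm{d}M\le 1/p_z$, and the three-region split of the likelihood ratio all work. Your treatment of Regions~1 and~2 is correct and yields $O(\eps^2+\delta/\eps)$.

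There is one genuine slip in the heavy region $\{f>e^{2\eps}\}$. The $\alg(z)$-masses of your dyadic shells do \emph{not} decay geometrically: from your own inequality $\alg(z)(\{f\ge t\})\,(1-e^{\eps}/t)\le\delta$ one gets $\alg(z)(\{f\ge t\})\le\delta/(1-e^{\eps}/t)$, which is $O(\delta/\eps)$ at $t=e^{2\eps}$ but only tends to $\delta$ (not to $0$) as $t\to\infty$. So the dyadic sum does not converge for the reason you give. Fortunately you do not need it. The crude estimate
\[
\int_{\{f>e^{2\eps}\}}\log f\,\mathrm{d}\alg(z)\ \le\ \alg(z)\bigl(\{f>e^{2\eps}\}\bigr)\cdot\log\frac{1}{p_z}\ =\ O\!\left(\frac{\delta}{\eps}\log\frac{1}{p_z}\right),
\]
using only the total-mass bound you already established together with $f\le 1/p_z$, suffices. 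Taking $\E_{z}$ gives $O((\delta/\eps)H(Z))\le O((\delta/\eps)\log|T|)$, which actually proves the lemma \emph{without} the extra $\frac{\delta}{\eps}\log(\eps/\delta)$ term (that term is harmless slack in the cited statement; the $\delta/\eps$ from Region~2 is absorbed by $(\delta/\eps)\log|T|$ once $|T|\ge 2$). Drop the dyadic step and your proof is complete.
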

Note that above $\alg$ being differentially private simply means that
for \emph{any} two $\row, \row' \in T$, it should hold that
$\rendiv_\infty^\delta(\alg(\row)\| \alg(\row')) \le \eps$. 

We need one final lemma which allows us to get the correct dependence
on $\alpha$ in Lemma~\ref{lm:ldp-main-lm}. 

\begin{lemma}[Claim 5.5 of \cite{BassilySmith15}]\label{lm:ldp-priv-ampl}
  Let $\Pi$ be a protocol taking databases in $\uni^n$ and satisfying
  $(\eps,\delta)$-LDP. Define a protocol $\Pi'$ over databases in
  $T^n$ such that the algorithm $\Pi'_i$ run by party $i$ on private
  input $\row'_i$ and message $Y_{i-1}$ received from party $i-1$ is
  given by:
  \begin{enumerate}
  \item Sample $\row_i$ from $P^\alpha_{\row'_i}$;
  \item Run $\Pi_{i-1}(\row_i, Y_{i-1})$.
  \end{enumerate}
  Then $\Pi'$ satisfies $(O(\alpha\eps),  O(\alpha\delta))$-LDP.
\end{lemma}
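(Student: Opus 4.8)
The plan is to reduce the claim to a single-round privacy-amplification-by-mixing statement and then prove that statement by a direct computation. By the definition of local differential privacy (Definition~\ref{defn:ldp}), it suffices to show that for each party $i$ and each fixed incoming message $y$, the randomizer $\row \mapsto \Pi'_i(\row, y)$ (with $\row$ ranging over the one-element databases from $T$) is $(O(\alpha\eps), O(\alpha\delta))$-differentially private; no composition across parties is needed, since the resampling step belongs to party $i$'s own randomizer. Fix such an $i$ and $y$, and write $A(u)$ for the output distribution of the corresponding randomizer of $\Pi$ on input $u$ and message $y$. Since $\Pi$ is $(\eps,\delta)$-LDP, $\Pr[A(u)\in S] \le e^\eps\Pr[A(v)\in S] + \delta$ for all inputs $u, v$ and every event $S$. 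Unwinding the definition of $\Pi'_i$, the output $\Pi'_i(\row, y)$ is distributed exactly as the mixture $M_\row := \alpha A(\row) + (1-\alpha)\bar A$, where $\bar A := \frac{1}{|T|}\sum_{u\in T}A(u)$ is a fixed distribution independent of $\row$. So it remains to show that $M_\row$ and $M_{\row'}$ are $(O(\alpha\eps), O(\alpha\delta))$-indistinguishable for all $\row, \row' \in T$.

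For this, fix an event $S$ and abbreviate $a := \Pr[A(\row)\in S]$, $a' := \Pr[A(\row')\in S]$, and $c := \Pr[\bar A\in S]$, so that $\Pr[M_\row\in S] = \alpha a + (1-\alpha)c$ and likewise for $\row'$. Averaging the pointwise inequality $\Pr[A(\row')\in S] \le e^\eps\Pr[A(u)\in S] + \delta$ over $u\in T$ (valid because it is affine in the comparison distribution) gives $a' \le e^\eps c + \delta$, i.e.\ $c \ge e^{-\eps}(a' - \delta)$, while the $(\eps,\delta)$-DP of $A$ directly gives $a \le e^\eps a' + \delta$. The second bound yields
\[
\Pr[M_\row\in S] - \Pr[M_{\row'}\in S] = \alpha(a - a') \le \alpha(e^\eps - 1)\,a' + \alpha\delta,
\]
and the first yields $\Pr[M_{\row'}\in S] = \alpha a' + (1-\alpha)c \ge e^{-\eps}a' - \delta$, hence $a' \le e^\eps\bigl(\Pr[M_{\row'}\in S] + \delta\bigr)$. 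Substituting the latter into the former,
\[
\Pr[M_\row\in S] \le \bigl(1 + \alpha e^\eps(e^\eps - 1)\bigr)\,\Pr[M_{\row'}\in S] + O(\alpha\delta).
\]
In the regime $\eps = O(1)$ we have $e^\eps(e^\eps - 1) = O(\eps)$, so $1 + \alpha e^\eps(e^\eps-1) \le e^{O(\alpha\eps)}$; as $S$ was arbitrary, $\rendiv_\infty^{O(\alpha\delta)}(M_\row\,\|\,M_{\row'}) \le O(\alpha\eps)$. Assembling this over all parties $i$ and all messages $y$ gives the $(O(\alpha\eps), O(\alpha\delta))$-LDP guarantee for $\Pi'$.

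The crux, and the step I expect to need the most care, is the amplification estimate above: a bound that does not exploit that the $(1-\alpha)$-weighted component $\bar A$ is itself within a factor $e^{\pm\eps}$ (up to $\delta$) of both $a$ and $a'$ only reproduces $(\eps,\delta)$ with no gain, so the right grouping — pulling $\alpha(e^\eps-1)a'$ back into a multiplicative perturbation of $\Pr[M_{\row'}\in S]$ via $a' \le e^\eps\Pr[M_{\row'}\in S] + O(\delta)$ — is what makes both the $O(\alpha\eps)$ and the $O(\alpha\delta)$ come out simultaneously. One should also check the degenerate cases (e.g.\ $a' < \delta$, or $\Pr[M_{\row'}\in S]$ tiny, where intermediate lower bounds become vacuous) and note the implicit assumption $\eps = O(1)$: for large $\eps$, mixing amplifies the privacy parameter only to roughly $\eps - \Theta(\log(1/\alpha))$ rather than $O(\alpha\eps)$, but $\eps = O(1)$ is exactly the regime in which this lemma is applied in the proof of Theorem~\ref{thm:lb-ldp}. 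A minor, routine point is that $\Pi'$ inherits the sequential-interaction structure of $\Pi$, since the randomizers are modified only by prepending an independent resampling step.
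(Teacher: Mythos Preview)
The paper does not prove this lemma at all; it is quoted verbatim as Claim~5.5 of \cite{BassilySmith15} and used as a black box in the proof of Lemma~\ref{lm:ldp-main-lm}. So there is no ``paper's own proof'' to compare against.

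Your argument is correct and is the standard privacy-amplification-by-mixing computation. The reduction to a single randomizer with a fixed incoming message is exactly right (the LDP definition is per-party and per-message, so no cross-party composition is needed), and the key chain of inequalities---bounding $\alpha(a-a')$ by $\alpha(e^\eps-1)a'+\alpha\delta$, then re-expressing $a'$ in terms of $\Pr[M_{\row'}\in S]$ via the lower bound on the shared mixture component---is clean and yields the $(O(\alpha\eps),O(\alpha\delta))$ conclusion. Your diagnosis of why the shared $(1-\alpha)$-weight component is essential (a naive bound on the $\alpha$-weight parts alone gives no gain) is also the right intuition. The caveats you flag are appropriate: the $\eps=O(1)$ assumption is genuinely needed for $e^\eps(e^\eps-1)=O(\eps)$, and the degenerate cases are harmless since the intermediate lower bound $\Pr[M_{\row'}\in S]\ge e^{-\eps}a'-\delta$, even when vacuous, still rearranges to the upper bound on $a'$ you need. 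One cosmetic note: the ``$\Pi_{i-1}$'' in the lemma statement is a typo in the paper for $\Pi_i$, which you silently and correctly read as intended.
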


\begin{proof}[Proof of Lemma~\ref{lm:ldp-main-lm}]
  Observe first that running protocol $\Pi$ on $\db$ is equivalent to
  running the protocol $\Pi'$ defined in Lemma~\ref{lm:ldp-priv-ampl}
  on the random database consisting of $n$ copies of $V$. Therefore,
  \[
  I(\Pi; V) = I(\Pi'; V) =
  I(\Pi_1', \ldots, \Pi'_n; V),
  \]
  where we use $\Pi$, $\Pi'$, $\Pi'_i$ as shorthands for $\Pi(\db)$,
  $\Pi(V, \ldots, V)$, and $\Pi_i'(V, Y_{i-1})$, respectively, and
  $Y_{i-1}$ is the message received by party $i$ from party $i-1$,
  defined to be empty for $i = 1$. By
  the chain rule of mutual information,
  \[
  I(\Pi'; V) = I(\Pi'_1; V) + I(\Pi'_2; V \mid \Pi'_1) 
  + \ldots + 
  I(\Pi'_n; V \mid \Pi'_1, \ldots, \Pi'_{n-1}). 
  \]
  By Definition~\ref{defn:ldp} and Lemma~\ref{lm:ldp-priv-ampl},
  conditional on $\Pi'_1 = (y_1, z_1), \ldots \Pi'_{i-1} = (y_{i-1},
  z_{y-1})$ for any possible views $y_1, \ldots, y_{i-1}$ and $z_1,
  \ldots, z_{i-1}$, the algorithm $\Pi'(\cdot, y_{i-1})$ satisfies
  $(O(\alpha\eps), O(\alpha\delta))$-differential privacy. Then by
  Lemma~\ref{lm:ldp-info-ineq},
  \[
  I(\Pi'_i; V \mid \Pi'_1 = (y_1, z_1), \ldots \Pi'_{i-1} = (y_{i-1}, z_{y-1})) 
  = I(\Pi'_i(V, y_{i-1}) ; V)
  = O\left(\alpha^2\eps^2 + \frac{\delta}{\eps}\log |T| + \frac{\delta}{\eps}\log(\eps/\delta)\right).
  \]
  Taking expectation over $\Pi'_1, \ldots \Pi'_{i-1}$ and summing up
  over $i$, we get 
  \[
  I(\Pi; V) =   I(\Pi'; V)
  = O\left(\alpha^2\eps^2n + \frac{\delta}{\eps}n\log |T| + 
  \frac{\delta}{\eps}n\log(\eps/\delta)\right).
\]
  This finishes the proof of the lemma.
\end{proof}

\section{Basic facts about subgaussian random variables}

The following lemma characterizes subgaussian random variables in
terms of their moments.  See Chapter~2 of
\cite{BoucheronLM-book} for a proof. 
\begin{lemma}
  \label{lem:moments-subgaussian}
  There exist universal constants $C_1, C_2$ such that
  \begin{itemize}
  \item If $W$ is a mean zero, $\sigma$-subgaussian random variable, then for every even $p$, and every $z \in \R^m$ we have $\|\inprod{W, z}\|_p \leq C_1 \sigma \sqrt{p} \|z\|$.
  \item If for every even $p$ and every $z \in \R^m$, we have $\|\inprod{W,z}\|_p \leq C_2 \sigma \sqrt{p} \|z\|$, then $W$ is $\sigma$-subgaussian.
  \end{itemize}
\end{lemma}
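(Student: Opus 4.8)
The plan is to reduce both implications to one dimension by fixing $z \in \R^m$ and working with the scalar random variable $X := \inprod{W, z}$, then reading off the claims from $s := \sigma\|z\|$.

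\emph{Forward direction.} Here $X$ has mean zero, and substituting $\theta = tz$ into the definition of $\sigma$-subgaussianity gives $\E e^{tX} \le e^{s^2 t^2}$ for all $t \in \R$. Since every term of $\cosh(u) = \sum_{j \ge 0} u^{2j}/(2j)!$ is nonnegative, $(tX)^{2k}/(2k)! \le \cosh(tX)$ pointwise, so $t^{2k}\,\E X^{2k} \le (2k)!\,\E\cosh(tX) \le (2k)!\,e^{s^2 t^2}$ for every $t > 0$. Minimizing $e^{s^2 t^2} t^{-2k}$ over $t$ (the minimum is at $t^2 = k/s^2$) yields $\E X^{2k} \le (2k)!\,(e s^2/k)^k$, and using $(2k)! \le (2k)^{2k}$ this simplifies to $\E X^{2k} \le (4 e k\, s^2)^k$. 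Taking $2k$-th roots gives $\|\inprod{W,z}\|_{2k} \le \sqrt{4 e k}\; s = \sqrt{2e}\,\sigma\sqrt{2k}\,\|z\|$, which is the first claim with $C_1 = \sqrt{2e}$.

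\emph{Converse.} First I would center: applying the moment hypothesis with $p = 2$ gives $|\E\inprod{W,z}| \le \|\inprod{W,z}\|_2 \lesssim \sigma\|z\|$, so the mean-zero variable $X := \inprod{W - \E W, z}$ still satisfies $\|X\|_p \le A\sqrt p$ for all even $p \ge 2$, with $A \lesssim \sigma\|z\|$. It then suffices to show $\E e^{X} \le e^{c_0 A^2}$ for some absolute constant $c_0$ (taking $z = \theta$), since choosing $C_2$ small enough that $c_0 A^2 \le \sigma^2\|\theta\|^2$ then establishes that $W - \E W$ is $\sigma$-subgaussian. I would prove this by splitting on the magnitude of $A$. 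When $A$ is below a fixed absolute constant, expand $\E e^{X} = 1 + \E X + \sum_{j \ge 2} \E X^j / j!$; the linear term vanishes, $|\E X^j| \le \|X\|_j^j \le \|X\|_{2\lceil j/2\rceil}^j \le (A\sqrt{2j})^j$ by monotonicity of $L^p$ norms, and $j! \ge (j/e)^j$ makes the remaining series converge to a quantity that is $O(A^2)$ in this regime. When $A$ exceeds that constant, I would instead invoke the standard equivalence of the moment condition with an Orlicz-norm bound $\E e^{X^2/(CA)^2} \le 2$, together with the elementary inequality $|X| \le (CA)^2/4 + X^2/(CA)^2$, to get $\E e^{X} \le 2 e^{(CA)^2/4} \le e^{(CA)^2/2}$ once $A$ is large enough that $2 \le e^{(CA)^2/4}$; taking $c_0$ to be the larger of the two constants finishes the argument.

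The forward direction is a short self-contained computation, so I expect the main obstacle to be the converse. The conclusion there demands a moment generating function bound whose exponent is \emph{quadratic} in $\theta$, whereas a naive term-by-term estimate of the power series only controls $\E e^{X}$ by something of the form $(1 + \mathrm{const}\cdot\sigma\|\theta\|)\,e^{\mathrm{const}\cdot\sigma^2\|\theta\|^2}$, whose leading correction is linear rather than quadratic in $\|\theta\|$. This is precisely why the proof must first center $W$ to kill the first moment, then treat small and large $\|\theta\|$ separately, and why $C_2$ has to be a sufficiently small absolute constant rather than an arbitrary one.
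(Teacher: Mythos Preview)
The paper does not prove this lemma itself; it simply refers the reader to Chapter~2 of Boucheron--Lugosi--Massart. Your argument is correct and is essentially the standard proof found in that reference: the forward direction reads off even moments from the MGF via the series for $\cosh$ and optimizes over $t$, and the converse rebuilds the MGF term by term after centering, with the small/large split on $A=\Theta(C_2\sigma\|\theta\|)$ needed precisely so that the exponent comes out quadratic rather than linear in $\|\theta\|$. One small remark: the Orlicz bound $\E e^{X^2/(CA)^2}\le 2$ that you invoke as a black box in the large-$A$ regime is itself proved by the same series expansion you use in the small-$A$ case (take $C^2\ge 4e$ and sum $(2e/C^2)^k$), so the whole argument can be made self-contained at essentially no extra cost.
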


The following lemma bounds the parameter of the sum of independent
subgaussian random variables. Once again, the proof can be found in
Chapter 2 of \cite{BoucheronLM-book}. 
\begin{lemma}
  If $W_1, W_2, \ldots W_k$ are $\sigma_i$-subgaussian
  independent random variables, then $\sum_{i\leq k} W_i$ is $\sqrt{\sigma_1^2 + \ldots + \sigma_k^2}$-subgaussian.
\end{lemma}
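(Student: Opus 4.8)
The plan is to reduce to the mean-zero case and then exploit the fact that the moment generating function factorizes over independent summands. First I would set $W = \sum_{i \le k} W_i$ and observe that $W - \E W = \sum_{i \le k}(W_i - \E W_i)$; by definition each $W_i - \E W_i$ is mean zero and $\sigma_i$-subgaussian, and these centered variables remain independent. Hence it suffices to treat the case where every $W_i$ is already mean zero, so that $W$ is mean zero as well and the goal becomes the bound $\E \exp(\inprod{W,\theta}) \le \exp\big((\sum_i \sigma_i^2)\|\theta\|^2\big)$ for an arbitrary fixed $\theta \in \R^m$.

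Next I would fix $\theta$ and write $\inprod{W,\theta} = \sum_{i \le k}\inprod{W_i,\theta}$. Since the $W_i$ are independent, so are the random variables $\exp(\inprod{W_i,\theta})$, and therefore $\E \exp(\inprod{W,\theta}) = \prod_{i \le k}\E \exp(\inprod{W_i,\theta})$. Applying the $\sigma_i$-subgaussian hypothesis to each $W_i$ at the vector $\theta$ bounds the $i$-th factor by $\exp(\sigma_i^2\|\theta\|^2)$, and multiplying these bounds yields $\exp\big((\sum_{i \le k}\sigma_i^2)\|\theta\|^2\big)$, which is exactly the defining inequality for the parameter $\sigma = \sqrt{\sum_{i \le k}\sigma_i^2}$.

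I do not expect any genuine obstacle here: the only steps requiring a moment's care are checking that centering preserves both independence and the individual subgaussian parameters, and noting that independence is precisely what licenses splitting the expectation of the product into a product of expectations. Everything is elementary, which is why the statement is quoted from \cite{BoucheronLM-book} rather than proved in the body.
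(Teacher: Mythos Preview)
Your argument is correct and is exactly the standard MGF-factorization proof. The paper itself does not give a proof of this lemma at all; it simply states the result and defers to Chapter~2 of \cite{BoucheronLM-book}, where the argument you wrote is essentially what appears.
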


The following deep fact is a consequences of Talagrand's majorizing
measures theorem~\cite{Talagrand87} (see also the
book \cite{Talagrand-book}). 
\begin{lemma}
  \label{lm:subgaussian-mw}
  For a $\sigma$-subgaussian random variable $W \in \R^m$, and an
  arbitrary compact  set $K \subset \R^m$, we have $\E \sup_{k \in K} \inprod{W, k} \lesssim \sigma g(K)$.
\end{lemma}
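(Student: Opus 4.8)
The plan is to derive this from the generic chaining machinery underlying Talagrand's majorizing measures theorem. It suffices to treat the case $\E W = 0$, which is how the lemma is applied (in Corollary~\ref{corr:projection-error} it is invoked for the centered variable $\widetilde W = W - \E W$). Write $X_k = \inprod{W, k}$ for $k \in K$, so the quantity to bound is $\E \sup_{k \in K} X_k$, which equals $\E \sup_{k \in K}(X_k - X_{k_0})$ for any fixed $k_0 \in K$ since the process is centered.

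First I would record a subgaussian increment estimate for $(X_k)_{k \in K}$. For $k, k' \in K$ and $\lambda \in \R$, the defining inequality of $\sigma$-subgaussianity gives $\E \exp(\lambda(X_k - X_{k'})) = \E \exp(\inprod{W, \lambda(k - k')}) \le \exp(\sigma^2\lambda^2\|k - k'\|_2^2)$, so the real-valued increment $X_k - X_{k'}$ is subgaussian with variance proxy $O(\sigma^2\|k - k'\|_2^2)$. Equivalently, $(X_k)$ has subgaussian increments with respect to the metric $d(k,k') = c\,\sigma\,\|k - k'\|_2$ for a universal constant $c$.

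Next I would invoke the generic chaining upper bound (the ``easy'' half of the majorizing measures theorem; see~\cite{Talagrand-book}): a centered process with subgaussian increments with respect to a metric $d$ satisfies $\E \sup_k (X_k - X_{k_0}) \lesssim \gamma_2(K, d)$, where $\gamma_2$ is Talagrand's functional. Using that $\gamma_2$ is $1$-homogeneous in the metric, $\gamma_2(K, c\sigma\,\|\cdot\|_2) = c\sigma\,\gamma_2(K, \|\cdot\|_2)$, and hence $\E \sup_k X_k \lesssim \sigma\,\gamma_2(K, \|\cdot\|_2)$. Then I would apply the deep half of the theorem to the canonical Gaussian process $G_k = \inprod{Z, k}$ with $Z \sim \mathcal{N}(0, I)$: its intrinsic $L^2$ metric is $(\E(G_k - G_{k'})^2)^{1/2} = \|k - k'\|_2$, the Euclidean metric on $K$, so Talagrand's majorizing measures theorem~\cite{Talagrand87} gives $\gamma_2(K, \|\cdot\|_2) \lesssim \E \sup_k G_k = \gmw(K)$. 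Chaining the two bounds gives $\E \sup_{k \in K} \inprod{W, k} \lesssim \sigma\,\gmw(K)$.

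The main obstacle is exactly the last step: it uses the full strength of Talagrand's majorizing measures theorem (existence of a majorizing measure for Gaussian processes, equivalently the bound $\gamma_2(K,\|\cdot\|_2) \lesssim \gmw(K)$), which is the genuinely deep ingredient, whereas the increment computation, the generic chaining upper bound, and the homogeneity of $\gamma_2$ are routine. An alternative packaging that avoids mentioning $\gamma_2$ explicitly is to cite the Talagrand comparison theorem directly: a subgaussian process whose increments are dominated (in the subgaussian/$\psi_2$ sense) by those of a Gaussian process has expected supremum at most a universal constant times that of the Gaussian process. Applying it to $(X_k)$ against a suitable universal constant multiple of $(G_k)$ — which dominates the increments by the estimate above — yields the claim in one step.
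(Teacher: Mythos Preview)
Your proposal is correct and aligns with the paper's treatment: the paper does not actually give a proof of this lemma but simply states that it is a consequence of Talagrand's majorizing measures theorem, citing~\cite{Talagrand87} and~\cite{Talagrand-book}. Your write-up is a faithful unpacking of precisely that citation --- subgaussian increments, the generic chaining upper bound via $\gamma_2$, and the deep direction of majorizing measures to convert $\gamma_2(K,\|\cdot\|_2)$ back into $\gmw(K)$ --- so there is nothing to contrast.
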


Finally we prove the following simple fact. 
\begin{fact}
  \label{fact:subgaussian-by-sign}
  Consider a pair of random variables $(Z, S)$ such that $Z \in \R^m$ is mean zero, $\sigma$-subgaussian random variable, and $S \in \{\pm 1\}$, not necessarily independent from $Z$. Then $SZ$ is $\Oh(\sigma)$ subgaussian.
\end{fact}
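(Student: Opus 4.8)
The plan is to verify subgaussianity through the moment characterization in Lemma~\ref{lem:moments-subgaussian}, which reduces the claim to a one-dimensional estimate along an arbitrary direction. Fix $z \in \R^m$ and an even integer $p \ge 2$. The key point is that $|S| = 1$ almost surely, so pointwise $|\inprod{SZ, z}| = |S|\cdot|\inprod{Z, z}| = |\inprod{Z, z}|$. Consequently $\|\inprod{SZ,z}\|_p = \|\inprod{Z,z}\|_p$, and since $Z$ is mean zero and $\sigma$-subgaussian, the first bullet of Lemma~\ref{lem:moments-subgaussian} gives $\|\inprod{SZ,z}\|_p \le C_1 \sigma \sqrt{p}\,\|z\|$.

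The only point that requires care is that the definition of subgaussianity for a non-centered variable refers to $W - \E W$, and $SZ$ need not be mean zero even though $Z$ is — for instance when $S$ depends on $Z$ through $\sign\inprod{Z, \cdot}$, exactly as in the point-release construction of Definition~\ref{def:v_x}. So next I would control $\mu := \E[SZ]$: by Jensen's inequality, $|\inprod{\mu, z}| \le \E|\inprod{SZ,z}| = \E|\inprod{Z,z}| = \|\inprod{Z,z}\|_1 \le \|\inprod{Z,z}\|_2 \le \sqrt{2}\,C_1\sigma\|z\|$, again using the first bullet of Lemma~\ref{lem:moments-subgaussian} with $p = 2$.

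Combining these via the triangle inequality in $L_p$ yields, for every even $p \ge 2$ and every $z$,
\[
\|\inprod{SZ - \mu, z}\|_p \le \|\inprod{SZ, z}\|_p + |\inprod{\mu, z}| \le C_1 \sigma \sqrt{p}\,\|z\| + \sqrt{2}\,C_1 \sigma \|z\| \le 2 C_1 \sigma \sqrt{p}\,\|z\|.
\]
Applying the second bullet of Lemma~\ref{lem:moments-subgaussian} to the mean-zero variable $SZ - \mu$ with parameter $2C_1\sigma/C_2$ then shows that $SZ - \mu$, and hence $SZ$, is $O(\sigma)$-subgaussian. I do not anticipate any genuine obstacle; the argument is entirely routine once the $|S|=1$ observation and the centering step are in place.
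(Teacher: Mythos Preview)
Your proof is correct and follows essentially the same approach as the paper's: both use the moment characterization from Lemma~\ref{lem:moments-subgaussian} together with the observation that $|S|=1$ forces $\|\inprod{SZ,z}\|_p = \|\inprod{Z,z}\|_p$. You are simply more careful than the paper in handling the centering of $SZ$, which the paper's proof leaves implicit.
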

\begin{proof}
  By Lemma~\ref{lem:moments-subgaussian} it is enough to check that
  for any even $p$ and any $v \in \R^m$, we have $\|\inprod{SZ, v}\|_p \lesssim \sigma \sqrt{p}$. But we have $\|\inprod{SZ, v}\|_p = \||S|\cdot\inprod{Z, w}\|_p = \|\inprod{Z, w}\|_p\leq \sigma \sqrt{p}$.
\end{proof}

\end{document}